\pgfplotsset{compat=1.3}
\journal{International Journal of Multiphase Flows}
\newcommand{\Nmax}{N_{\text{max}}}
\newcommand{\beq}[1]{\begin{equation}\label{#1}}
\newcommand{\eeq}{\end{equation}}
\newcommand{\lbold}{\boldsymbol{l}}
\newcommand{\tens}[1]{{\boldsymbol{#1}}}
\newcommand{\Ug}{\tens{U}_g}
\newcommand{\R}{\mathbb{R}}
\def\xv{\tens{x}}
\def\yv{\tens{y}}
\def\rv{\tens{r}}
\def\XV{\tens{X}_i}
\newcommand{\U}{\tens{U}}
\newcommand{\size}{S}
\newcommand{\nME}{n^{ME}}
\newcommand{\Hcurv}{\mathrm{H}}
\newcommand{\real}{\mathbb{R}}
\newcommand{\Vol}{\mathcal{V}}
\newcommand{\dual}{\mathcal{M}}
\newcommand{\uparam}{u}
\newcommand{\vparam}{v}
\newcommand{\Uopen}{\mathcal{U}}
\newcommand{\vi}{V_I}
\newcommand{\ve}{V_e}
\newcommand{\Vi}{{\tens{V}_I}}
\newcommand{\Vp}{{\tens{V}_p}}
\newcommand{\Up}{\tens{U}_p}
\newcommand{\Vic}[1]{{V_{I,#1}}}
\newcommand{\normal}{\tens{N}}
\newcommand{\Et}{\tens{e}^*}
\newcommand{\Surf}{A}
\newcommand{\One}{\mathds{1}}
\newcommand{\Stretch}{\mathcal{S}}
\newcommand{\Aver}[1]{\overline{#1}}
\newcommand{\phase}[1]{\widehat{#1}}
\newcommand{\normalizeASDF}{\Theta_h}
\newcommand{\Uf}{\tens{U}}
\newcommand{\Surface}{\varmathbb{S}}
\newcommand{\totalSurf}{\Surface_I}
\newcommand{\portsurf}[1]{\Surface_{#1}}
\newcommand{\gradient}[2]{\tens{\nabla}_{#1}{#2}}
\newcommand{\diverg}[2]{\tens{\nabla}_{#1}\cdot{#2}}
\newcommand{\Gcurv}{G}
\newcommand{\Mcurv}{H}
\newcommand{\vol}{v}
\newcommand{\Areadrop}{S}
\newcommand{\area}{\text{area}}
\newcommand{\SDF}{F}
\newcommand{\sdf}{\SDF'}
\newcommand{\nsdf}{\mathcal{\SDF}'}
\newcommand{\GNDF}{f_{\Sigma}}
\newcommand{\ngndf}{\mathcal{N}'}
\newcommand{\NDF}{f}
\newcommand{\NNDF}{N}
\newcommand{\ndf}{n}
\newcommand{\conditionvar}[1]{\mathsf{#1}}
\newcommand{\Domain}{\Omega_{x}}
\newcommand{\subdomain}{E}
\newcommand{\phasespace}{\Omega_{\xi}}
\newcommand{\phasevar}{\tens{\xi}}
\newcommand{\phasevarv}{\Aver{\tens{\xi}}^{\vol}}
\newcommand{\particle}{\mathit{p}}
\newcommand{\ASDF}{\mathrm{\SDF}^a}
\newcommand{\nasdf}{\mathcal{\sdf}^a}
\newcommand{\aaverg}[1]{\widetilde{#1}}
\newcommand{\paverg}[1]{\overline{#1}}
\newcommand{\Volh}[1]{\mathcal{V}_h{(#1)}}
\newcommand{\Breakup}{\mathrm{\Gamma}}
\newcommand{\ds}{\displaystyle}
\newcommand{\Stokes}{\mathrm{St}}
\newcommand{\EvapRate}{\mathrm{R}_S}
\newcommand{\KEvap}{\mathrm{K}}
\newcommand{\Vertsarray}{\tens{\mathrm{V}}}
\newcommand{\Facearray}{\tens{\mathrm{F}}}
\newcommand{\Neighbour}{\tens{\mathrm{O}}}
\newcommand{\vertex}{\text{V}}
\newcommand{\face}{\text{T}}
\newcommand{\averglen}{k}
\newcommand{\mom}{\mathrm{m}}
\newcommand{\ns}{n}
\newcommand{\Alphad}{\alpha_d}
\newcommand{\Sigmad}{\Sigma_d}
\newcommand{\dint}[2]{d^{#1}#2}
\newcommand{\SEntropy}{E_{sh}}
\newtheorem{theorem}{Theorem}[section]
\newtheorem{proposition}[theorem]{Proposition}
\newenvironment{proof}{\paragraph{Proof:}}{\hfill$\square$}
\newenvironment{definition}[1][Definition]{\begin{trivlist}
\item[\hskip \labelsep {\bfseries #1}]}{\end{trivlist}}
\begin{document}

\begin{frontmatter}

\title{Statistical modeling of the gas-liquid interface using geometrical variables: 
toward a unified description of the disperse and separated phase flows}


\author[ifpadd,em2cadd]{\ \ ESSADKI Mohamed\corref{mycorrespondingauthor}}
\cortext[mycorrespondingauthor]{Corresponding author}
\author[em2cadd]{\ DRUI Florence}
\author[ifpadd]{\ de CHAISEMARTIN St\'ephane}
\author[em2cadd]{LARAT Adam}
\author[coriaadd]{M\'ENARD Thibault}
\author[em2cadd,cpamadd]{MASSOT Marc}

\address[ifpadd]{IFP Energies nouvelles, 1-4 Avenue du Bois Pr\'eau, 92852 Rueil-Malmaison, France}
\address[em2cadd]{Laboratoire EM2C UPR 288, CNRS, CentraleSup\'elec, Universit\'e Paris-Saclay, Grande 3, rue Joliot-Curie 91192 Gif-sur-Yvette cedex France}
\address[coriaadd]{Laboratoire CORIA, Avenue de l'Universit\'e, 76801 Saint-\'Etienne-du-Rouvray, France}
\address[cpamadd]{Centre de Math\'ematiques Appliqu\'ees, Ecole polytechnique, CNRS, Universit\'e Paris-Saclay, Route de Saclay,  
91128 Palaiseau Cedex, FRANCE}

\begin{abstract}
In this work, we investigate an original strategy in order to derive a statistical modeling of the interface in gas-liquid two-phase flows through geometrical variables.
The contribution is two-fold. First it participates in the theoretical design of a unified reduced-order model for the description of two regimes: a disperse phase in a carrier fluid and two separated phases. The first idea is to propose a statistical description of the interface relying on geometrical properties such as the mean and Gauss curvatures and define a Surface Density Function (SDF). The second main idea consists in using such a formalism in the disperse case, where a clear link is proposed between local statistics of the interface and the statistics on objects, such as the number density function in Williams-Boltzmann equation for droplets. This makes essential the use of topological invariants in geometry through the Gauss-Bonnet formula and allows to include the works conducted on sprays of spherical droplets. It yields a statistical treatment of populations of non-spherical objects such as ligaments, as long as they are homeomorphic to a sphere.  Second, it provides an original angle and algorithm in order to build statistics from DNS data of interfacial flows. From the theoretical approach, we identify a kernel for the spatial averaging of geometrical quantities preserving the topological invariants. Coupled to 
a new algorithm for the evaluation of curvatures and surface that preserves these invariants, we analyze two sets of DNS results conducted with the ARCHER code from CORIA
with and without topological changes and assess the approach.
\end{abstract}

\begin{keyword}
disperse/separated phases, gas-liquid interface, moments method, Gauss- Bonnet formula, computational geometry, surface and number density function.
\MSC[2010] 76T10\sep 65D99 \sep 53A17 \sep 76A99 \sep35Q35
\end{keyword}

\end{frontmatter}



\section{Introduction}

Nowadays, direct fuel injection systems are widely used in automotive 
engines to better atomize and mix the fuel with the air, in order to 
improve the combustion, reduce pollutant emissions and save fuel. 
The design of new and efficient 
injectors need to be assisted by predictive numerical simulations 
to test new designs, understand the various physical mechanisms 
involved in this complex problem and at the same time, at a reduced price 
compared to experiments. 
Near the nozzle outlet, of diameter less than $1mm$, 
the fuel is a bulk liquid phase separated from the gaseous phase. In this region, the two-phase flow is 
denoted \textit{separated phases}. After primary atomization, the fuel is disintegrated into a polydisperse spray. Indeed, the 
large drops can undergo secondary breakup leading to a cloud of  
very small disperse droplets of size around $1-10\, \mu m$, called the \textit{disperse phase}. 
The major challenges faced by the modeling of fuel injection are three-fold: the strong multi-scale feature of 
the considered problem, the coexistence of two specific two-phase regimes 
(the separated and the disperse regime) and the entanglement between the first two points within 
the transitional regime. 

%
%
In Direct Numerical Simulation (DNS), each phase dynamics is resolved through
monophasic Navier-Stokes equations while the gas-liquid interface is determined 
using interface capturing methods (VOF, level-set or a combination of both methods) \cite{VOF_Hirt81,level_set, bourlioux95,menard07} or 
interface tracking methods as initiated in \cite{Hirten74}. DNS codes, such as the ARCHER code \cite{menard07},
may be used for simulations of academic and some simple injection configurations.
The results of these simulations help to better understand the phenomena, while accurate information on
detailed physics is sometimes hard to extract from experimental measurements \cite{lebas2009,Desjardins_2013,herrmann,lechenadec2013,vaudor}. 
However, configurations at large Reynolds number, that characterizes the turbulent regime of the flow, 
or at large Weber number, related to the stability of the interface, may be extremely
costly from a numerical point of view and DNS may fail in predicting the 
smallest interfacial structures, like small droplets or thin ligaments. Therefore, while
DNS is of great interest in academic research, it is not appropriate for 
industrial use.

Reduced-order models intend to avoid the simulation of the smallest scales
of the configuration by providing a description and evolution laws only for some 
quantities of interest, such as  for example the volume fraction of the phases in 
mixture zones or the area density of the two-phase interface. 
So far, however, these models
inherently depend on the two-phase flow regime: separated or disperse 
phase regimes.
Building up a multi-scale and accurate model with the capacity of 
resolving the whole injection process is a challenging task, that can be addressed either 
by coupling models associated with the two main flow classes, namely the disperse and separated 
phase flows, or by developing a unified approach.

The reduced-order models for disperse phase flows usually adopt a statistical 
approach. They rely on a number density function (NDF), 
that satisfies a generalized population balance equation (GPBE), also known 
as Williams-Boltzmann Equation (WBE) \cite{williams1958}. The phase space variables of the 
NDF can include different physical properties of the disperse particles (droplets
in the case of atomization) such as velocity, 
size, temperature, \textit{aso.} However, the large phase-space dimension makes the 
deterministic resolution of the WBE quickly unaffordable. Multiple approaches have then been 
developed.
First, one can use Lagrangian Monte-Carlo approach to estimate the NDF 
\cite{bird94}. Even though this approach can be considered as the most accurate for
solving WBE, it still suffers from many drawbacks: the need for a high 
number of numerical particles to ensure the statistical convergence, the requirement of 
complex load balancing algorithms for parallel computations or the difficulty of
coupling with the gaseous phase, which is most of the time solved in an Eulerian framework. Second,
one can derive Eulerian 
moment models from the WBE. With these models, the NDF is not directly solved,
but one solves for a finite set of its moments. 
Then, a most likely NDF can be reconstructed from the finite number of moments:
this is called the Hausdorff moment problem \cite{tagliani99}. 
Among the properties of a disperse phase, the size-distribution
of the droplets has a major impact on the results of the 
combustion simulations \cite{Vie13jx,Hannebique13jx}. Currently, in the context of Eulerian moment methods, 
this distribution can be accounted for in three ways: 
\begin{enumerate}
\item multi-fluid models, where the size phase space is discretized into sections.
Then moments up to order two can be used in each section \cite{deChaisemartin2009,laurent2015}, 
\item quadrature-moment methods such as QMOM, DQMOM or EQMOM \cite{nguyen2016,laurent-12}, consider 
the NDF as a sum of Dirac-delta functions, potentially extended to kernels,
\item high order moments with continuous reconstruction developed in \cite{kah12,emre2015,massot2010,vieJCP2012,Essadki-SIAM}. 
At each step, a continuous NDF maximizing the Shannon entropy is reconstructed from the high order moments \cite{mead84}, with a complete coverage of the whole moment space. 
\end{enumerate}
Using high order moments with continuous reconstruction through entropy 
maximization avoids the use of several sections, that are computationally more expensive \cite{kah2010}. 
Moreover, the continuous reconstruction allows a more accurate and consistent evaluation of the evaporation 
flux compared to the discontinuous approach used in the quadrature-moment methods, as well as a comprehensive description of the moment space.
In the current work, we adopt this approach.

Among the different approaches that may be used in separated-phase configurations,
let us mention the interface capturing methods based on the resolution of two-fluid or
mixture model resulting in a system of PDEs as in \cite{chanteperdrix2002,murrone2005,champmartin2014}. 
These equations may be derived through an averaging procedure \cite{ishii75,DrewPassman99},
or by using a variational principle \cite{gavrilyuk2002,drui2016b}. In both cases, they stand
for a spatial, temporal or ensemble averaged two-phase flow, where one or both phases
can be present at any given location, according to the values of a characteristic function,
that is generally the volume fraction of one of the phases. Traditionally, these models provide
little information about the sub-scale interfacial structures. Recent works, as in \cite{drui2016b}, 
have shown that they can be enriched with further sub-scale physics; yet such models are far from the 
ability to deal with polydispersity such as kinetic models do.
From a numerical point of view, two-fluid models are used for simulations
of separated phases and interfacial configurations, for which the exact location of the interface
is not reconstructed, but lies in a mixture zone due to numerical approximation.
Although the numerical diffusion can be reduced by using more accurate numerical
schemes, the spreading of the interface is still a main bottleneck of these methods
for simulations of atomization.

Recent works have been devoted to the coupling of approaches for separated phases
and those for disperse phase flows. 
Among them, LeTouze et al \cite{letouze2015} proposed to couple a diffuse interface model for separated 
phases with a multi-fluid model for the disperse phase. Up to now, the exchange terms between both models
depend on the configuration of the atomization and cannot predict the distribution in size of the disperse droplets.
One can also mention the Eulerian-Lagrangian Spray Atomization
(ELSA) \cite{vallet99,vallet2001,lebas2009} technique where a two-fluid model is enhanced with an 
equation for the expected surface area density in the dense zones. This model is then coupled with 
a Lagrangian approach for the simulation of the disperse phase in the dilute zones.
In \cite{Bejoy} the same set 
of equations is used with a differentiation between the variables describing the disperse 
and separated phases. Two additional 
equations on the expected density area are also used: one for the separated phases and one for 
the disperse phase.
So far, these approaches do not provide a unified description of the whole
atomization (from the separated phase to the spray of droplets) 
and fail in providing an accurate description of the polydispersion 
for the disperse phase. 
Indeed, in the works presented above, the description of the  
gas-liquid interface geometry relies on one or two variables only,
that are the volume fraction and the expected surface area density. This information 
is not sufficient to reconstruct a NDF of a polydisperse spray. 

The idea of the present work is to study the possibility of describing gas-liquid 
interfaces by using new geometrical information, such as  the mean and Gauss curvatures,
for which evolution equations have been derived in \cite{Drew_Geom}. Moreover, 
the coupling with Eulerian models for sprays is considered through a statistical 
description of the interface. First, inspired by the pioneering works of \cite{pope} for the description of the dynamics of
flames, we define a Surface Density Function (SDF)
with a different phase space: in our work, it is composed of the mean and Gauss 
curvatures and the interfacial velocity. 
The key issue is to make the link between statistics on a local description of the interface through geometrical variables, such as curvatures, 
and statistical description of isolated objects through a number density function in the appropriate phase space.
A discrete formalism of the SDF is then proposed to describe 
disperse phase flows. In this case, the geometrical quantities are averaged on the surface of 
isolated droplets or bubbles. 
We show that this new formalism can be related
to the high-order moment model proposed in  \cite{Essadki-SIAM,OGST} for 
sprays of spherical droplets. Indeed, by considering fractional moments of a NDF, this model 
uses the same geometrical quantities to describe the polydispersity of the droplets.
Going forward into a full generalization of the interface description, we consider spatially averaged SDF (ASDF),
with an averaging kernel bounded to a small region around the interface. This spatial average is applied to each 
separated realization, before an ensemble average is applied. When performed in a 
consistent way, this process preserves some necessary topological properties of the gas-liquid interface description. 
This leads us to define a generalized NDF (GNDF), which can be used for both  
separated and disperse phases and which degenerates to a standard NDF in the 
disperse region. We also propose new numerical algorithms and procedures for the computation of the curvatures 
and of the different distributions 
(SDF and NDF) from the values of a Level Set function. 
In the last section, we use these new algorithms for the post-processing of some DNS simulations, obtained with the ARCHER code, developed at CORIA laboratory \cite{menard07} and to assess the theoretical part of this work and pave the way to a new way of analyzing DNS of interfacial flows.

\section{Surface element properties and probabilistic description of the gas-liquid interface}
\label{sect:gas-liquid-interf}

Immiscible two-phase flows, such as gas-liquid mixtures, are characterized by the presence of a sharp interface.
Indeed, the gas-liquid interface thickness being of the order of the molecular mean free path ($\lambda=10$~nm), it is smaller 
than the microscopic length scales of the vast majority of two-phase flow applications.
Thus, at macroscopic scale, this interface can be represented as a 2D dynamic surface embedded in a 3D domain. 
In the following, we consider a two-phase flow (gas and liquid) within a finite 3D domain $\Domain$.
Let us denote by $\totalSurf(t)$ the moving surface that separates both phases. First, we do not
make any assumption about the flow topology (separated or disperse). 
We start by defining a 2D surface and some intrinsic geometrical variables in subsections \ref{sect:surface-def}-\ref{sect:TimeEvolution}.
Then in subsection \ref{sect:general-SDF}, we introduce a statistical description of the gas-liquid interface similar to Pope's description of flames dynamics and propagation \cite{pope}.

\subsection{Surface definition}
\label{sect:surface-def}

Since we are dealing with disperse and separated phases in the same domain, the moving surface $\totalSurf(t)$ 
is not necessarily a connected set: in general it consists in a set of closed (disperse phase) and unclosed (bulk phase) connected sub-surfaces. 
Therefore, the global surface $\totalSurf(t)$ can be written as a union of  connected sub-surfaces $\portsurf{i}(t)$:
\begin{equation}
\label{eq:definition-general-surface}
	\totalSurf(t)=\bigcup\limits_{i=1}^{\Nmax}\portsurf{i}(t),
\end{equation}
that are separate in the sense that,  for $i\neq j$:
$$\min\limits_{(\tens{x},\tens{y})\in\portsurf{i}(t)\times\portsurf{j}(t)}(||\tens{x}-\tens{y}||)>0.$$

Now, let us consider two standard descriptions of a moving surface.
First, the surface can be defined as the set of zeros of a time-space function $(t;\xv) \mapsto g(t;\xv)$:
\begin{equation}
\label{eq:implicit-surface-definition}
	\totalSurf(t)=\{\xv\in\Domain;\quad g(t;\xv)=0\}.
\end{equation}
The function $g$ satisfies the following kinematic equation:
\begin{equation}
	\frac{\partial g}{\partial t}+\Vi^g\cdot\tens{\nabla}(g)=0,
\label{eq:implicit-kinematic}
\end{equation}
where $\Vi^g (t; \xv)$ is an interface velocity, whose definition is discussed thereafter.
Let us call this approach: the "implicit definition of a surface".

On the other hand, the "explicit definition of a surface", also called the "parametric surface", 
consists in using two real parameters $u,v$ and parametric functions:
\begin{equation}
  \XV(t;.):\left\{
  \begin{matrix}  
    \Uopen_i            & \longrightarrow & \Domain\\ 
    (\uparam,\vparam) & \longmapsto     & \XV(t;\uparam,\vparam) 
  \end{matrix}
  \right. \, , \quad i \in \left\{ 1, ... , \Nmax \right\},
\label{eq:explicit-surface}
\end{equation}
where $\Uopen_i$ is a subset of $\varmathbb{R}^2$, such that:
\begin{equation}
	\portsurf{i}(t)=\{ \XV(t;\uparam,\vparam);\quad (\uparam,\vparam)\in\Uopen_i\}.
\end{equation}

Each couple of parameters $(\uparam,\vparam)$ is associated with a point located on the surface 
and moving according to an interface velocity that depends on the choice of the parametrization:
\begin{equation}
	\frac{\partial \XV(t;\uparam,\vparam)}{\partial t}=\Vi^{\Uopen}(t,\XV(t;\uparam,\vparam)).
\label{eq:explicit-kinematic}
\end{equation}
We will discuss in the next section the relation between the interface velocities used in the implicit definition 
$\Vi^g$ and in the explicit definition $\Vi^{\Uopen}$. 

Although we mostly consider the implicit definition
of the surface \eqref{eq:implicit-surface-definition}
some of the definitions provided in the following will also be given for a parametric description of the surface for the sake of clarity.
From now on, we suppose that each sub-surface $\portsurf{i}$ is a $C^2$ oriented surface, 
such that the space function $g(t;.)$ is a $C^2$ differential function. 
The orientation is chosen such that the gradient $\tens{\nabla}g(t;\xv)$ at the interface points ($g(t;\xv)=0$) 
is strictly oriented toward the gaseous phase.

\subsection{Intrinsic gas-liquid interface variables}
\label{sect:IntrinsicVariables}

The aim of this paragraph is to introduce some local intrinsic properties of the interface, meaning that the quantities associated
with these properties do
not depend on the way the surface is defined (implicitly or explicitly, choice of a parametrization...). These quantities will be useful
for setting a statistical description of the interface. 

\subsubsection{Normal vector and tangent plane} 
Let us consider a point on the surface $\xv_i\in\totalSurf(t)$. 
Using \eqref{eq:implicit-surface-definition}, the normal vector at $\xv_i$ is given by:
\begin{equation}
	\normal(t,\xv_i)=\frac{\gradient{\xv}{g}(t;\xv_i)}{||\gradient{\xv}{g}(t;\xv_i)||}.
\label{eq:normal-def}
\end{equation}
The tangent plane to the surface at $\xv_i$ is the unique plan orthogonal to $\normal$ and passing by $\xv_i$. 
One can consider that the tangent plane provides a first order approximation of the surface at $\xv_i$.

\subsubsection{Curvatures} 

	Curvatures are defined as the infinitesimal variations of $(t, \xv) \mapsto \normal (t, \xv)$ when $\xv$ follows a path over
	the interface.
These variations can be expressed as a function of the Hessian matrix $ \mathcal{H}(g)$ of the 
function $g(t;.)$
according to (see also \cite{curvature_implicit}):
\begin{equation}
  \gradient{\xv}{\normal}^T=-\frac{1}{||\gradient{\xv}{g}||}(\mathbf{I}_{3}-\normal\otimes\normal^T)\mathcal{H}(g).
  \label{eq:curvature_tensor}
\end{equation}
where $\mathcal{H}(g)$ is given by:
\begin{equation}
  \mathcal{H}(g)(t;\xv)= \gradient{\xv}{\left( \gradient{\xv}{g} \right)} =\left(\frac{\partial^2 g}{\partial x_j\partial x_k}(t;\xv)\right)_{j,k=1\hdots 3},
\end{equation}
and $\mathbf{I}_3$ is the identity matrix. 
	
It can be shown that there exists an orthonormal basis $\{\Et_1,\Et_2\}$ of the tangent plane at the surface point $\xv_i$, 
such that the restriction of the matrix $\gradient{\xv}\normal^T$ to the tangent plane is a $2\times2$ diagonal matrix \cite{curvature_implicit}. 
In other words, in the orthonormal basis $\{\Et_1,\Et_2,\normal\}$, $ \gradient{\xv}{\normal}^T$ reads:
\begin{equation}
	\gradient{\xv}{\normal}^T=\left(
		\begin{matrix}
			\kappa_1 & 0 & \sigma_1\\
			0 & \kappa_2 & \sigma_2\\
			0 & 0 & 0\\
		\end{matrix}
	\right) \, ,
\end{equation}
where $\kappa_1\geq\kappa_2$ are the two principal curvatures and $(\sigma_1,\sigma_2)$ are two real variables. 
The eigenvectors $\{\Et_1,\Et_2\}$ corresponding to the eigenvalues $(\kappa_1,\kappa_2)$
are also called the \textit{principal directions} of the surface at $\xv_i$.

Instead of using the two principal curvatures, one can consider the \textit{mean curvature} $\Mcurv$ and 
the \textit{Gauss curvature} $\Gcurv$, defined by:
\begin{equation}
\begin{array}{rcl}
\Mcurv&=&\ds \frac{1}{2}(\kappa_1+\kappa_2),\\
\Gcurv&=&\ds \kappa_1\kappa_2.\\
\end{array}
\end{equation}

One could easily demonstrate that
the mapping 
\beq{eq:mapping} 
  \left\{
    \begin{array}{ccc}
      \left\{(\kappa_1,\kappa_2)\in \R^2; \; \kappa_1\geq\kappa_2\right\} 
      & \longrightarrow & 
      \left\{(\Mcurv,\Gcurv)\in \R_+^2; \; \Mcurv^2\geq\Gcurv\right\}\\
      (\kappa_1,\kappa_2) &\longmapsto&(\Mcurv,\Gcurv)
    \end{array}
  \right.,
\eeq
is one-to-one.

\subsubsection{Area density measure and stretch rate} 

	The last quantity needed in the following is an evaluation of the 
	interface area within any control volume. For this purpose, one defines the \textit{area density
	measure} as follows, see  \cite{Morel2015} and related works:
\begin{equation}
\label{eq:deltaI-definition}
	\delta_I(t;\xv)=||\gradient{\xv}{g}|| \, \delta(g(t,\xv)),
\end{equation}
where $\delta$ denotes the Dirac function. 
Consequently, for any volume $\Vol$, the area $\Surf_{\Vol}(t)$ of surface contained in $\Vol$ 
at time $t$ is given by:
\[
	\Surf_{\Vol}(t) = \int_{\real^3} \One_{\Vol}(\xv) \delta_I(t;\xv), 
\]
where $\One_{\Vol}(\xv)$ is the characteristic function of volume $\Vol$. It is important to note that the measure
$\delta_I$ does not depend on the choice of $g$. It is thus an intrinsic property of the interface. 


\subsubsection{Interface velocity} 

We have seen in paragraph \ref{sect:surface-def} that the interface velocity may have multiple definitions, according to
the definition of the interface (implicit \eqref{eq:implicit-surface-definition} or explicit \eqref{eq:explicit-surface}) 
or the choice of the parametrization. Let us
choose here a unique definition of this velocity.

As underlined in \cite{Drew_Geom}, one can see that the evolution of the interface according to \eqref{eq:implicit-kinematic}
only depends on the normal component of $\Vi^g$. Let us then introduce  $\vi  = \Vi^g \cdot \normal$, that is unambiguously
defined, and the interface velocity by:
\begin{equation}
	\Vi=\vi\normal,
\label{eq:normal-interface-velocity}
\end{equation}

By considering a point $ \XV(t;\uparam,\vparam)$ on the surface for a given 
parametrization and the application $\varphi : (t; u,v) \mapsto g(t;  \XV(t;\uparam,\vparam))$, 
one can show the following relation with the 
the velocity  $\Vi^{\Uopen}$ defined in \eqref{eq:explicit-kinematic}:
\[
	\vi =  \Vi^{\Uopen} \cdot \normal.
\]

Now, let us consider that the interface lies in a medium, whose velocity is $\mathbf{U} (t, \xv)$. Moreover, 
this interface may propagate, normal to itself, at a  velocity $\ve \normal$ relative to the medium \cite{pope}. This is for example the case
of a flame that separates burnt and unburnt gases, for which the flame speed $\ve$ depends on the chemical
reactions rates. Or in the case of two-phase flows, $\ve$ may characterize the rate of phase transitions, such as
evaporation. The interfacial velocity is then related to these quantities through:
\[
	\vi = (\mathbf{U}\cdot\normal)+\ve.
\]

\subsection{Time evolution of interfacial variables}
\label{sect:TimeEvolution}

In this paragraph, we present the evolution equations for surface element properties based on the works of Drew \cite{Drew_Geom}. 
Let us mention that the choice in the definition of the interface velocity affects the expression of the evolution laws for 
the curvatures and the area density measure. 
Here we consider the interface velocity defined by \eqref{eq:normal-interface-velocity}. 
Then, the time evolution of the curvatures can be expressed as follows:
\begin{equation}
\begin{array}{rcl}
\dot{\Mcurv}&=&\ds-\frac{1}{2}\nabla^2_T(\vi)-(2\Mcurv^2-\Gcurv)\vi,\\
\dot{\Gcurv}&=&\ds-\Mcurv\nabla^2_T(\vi)+\sqrt{\Mcurv^2-\Gcurv}(\frac{\partial^2\vi}{\partial {y_1^{*}}^2}-\frac{\partial^2\vi}{\partial {y_2^{*}}^2})-2 \, \Mcurv \, \Gcurv \, \vi,\\
\end{array}
\label{eq:time-evol-inter-var}
\end{equation}
where $\dot{\bullet}=\partial_t\bullet+\Vi\cdot\nabla_{\xv}\bullet$, denotes the Lagrangian time derivative, 
$\nabla^2_T=\partial_{y_1}^2+\partial_{y_2}^2$ is the tangential Laplacian operator 
($y_1$ and $y_2$ can be any two orthonormal directions tangential to the surface), 
and $y_k^{*}$ is the coordinate along the principal directions $\Et_k, \; k=1,2$.
 In \ref{app:eq-deltaI}, we derive the equation for $\delta_I$. This equation reads:
\begin{equation}
\label{eq:time-evol-deltaI}
	\dot{\delta}_I =   - \left( \nabla_{\xv} \cdot \Vi   \right)  \delta_I + 2H \, \vi \, \delta_I .
\end{equation}

In equation \eqref{eq:time-evol-deltaI}, one can identify the second term in the right-hand side
with the \textit{stretching rate} $\dot{\Stretch}$, which is defined in Pope \cite{pope} as an intrinsic property. 
For an interface velocity that is normal to the interface, its equation reads:
\begin{equation}
	\dot{\Stretch} = 2H \, \vi.
\label{eq:stretch-factor-equation}
\end{equation}

Another important point in the comparison with Pope's equations is to note
that in \cite{pope}, the interface velocity is defined as the sum of the fluid velocity and
the propagation velocity 
of the flame interface due to chemical reaction (combustion). Consequently, this leads 
to different equations for $\dot{\Mcurv}$, $\dot{\Gcurv}$ and $\dot{\Stretch}$.

%

The system \eqref{eq:time-evol-inter-var} is not closed, because of the second order derivatives of the interfacial velocity 
in the two principal directions. 
For future works, these terms need to be modeled, but this is not the objective of the present work. 
An interpretation of the different terms occurring in the system of equation \eqref{eq:time-evol-inter-var} can be found in \cite{Drew_Geom}.

\subsection{General statistical description of the interface}
\label{sect:general-SDF}

Now, we propose to introduce a general statistical description of a gas-liquid interface, that may be useful
when the exact location of the interface is not known, like in turbulent flows \cite{pope} or in two-phase
transition zones. 
Obviously, the statistical description of the interface can not be restricted to one geometrical variable, 
as is often done for disperse phases. Indeed, when assuming that droplets are spherical,
the information about the droplets radii is enough to describe the interface geometry.
This is however not the case with a general interface, such as one described by 
\eqref{eq:definition-general-surface}. 
In the following, we propose to use the mean and Gauss curvatures, as well as the interface velocity
to characterize the local interface properties and their evolution. These variables will be called the phase-space
variables:
	\[
		\phase{\phasevar}=(\phase{\Mcurv},\phase{\Gcurv},\phase{\Vi}),
	\]
	%
and $\phasespace$ is the set of values that can be attained by any realization $\phase{\phasevar} = \phasevar(t,\xv)$
at time $t$ and position $\xv$.
Moreover we need to define an appropriate probabilistic measure. 
When the interface splits the two phases into a certain number of 
discrete and countable particles, a relevant measure is the NDF.
In the case of a general gas-liquid interface, we adopt a measure based on the area concentration 
of the geometric properties of the interface. 
Let us consider the surface density function (SDF) $\SDF(t,\xv;\phase{\phasevar})$ \cite{pope},
defined as follows: the quantity $\SDF(t,\xv;\phase{\phasevar})\dint{5}{\phase{\phasevar}} \dint{3}{\xv}$ 
measures the probable surface area present in the spatial volume $\dint{3}{\xv}$ around $\xv$ and in the phase-space volume 
$\dint{5}{\phase{\phasevar}}$ around $\phase{\phasevar}$. 
	%
The notion "probable" is used here in the sense of an ensemble averaging over different realizations, 
as defined by Drew \cite{DrewPassman99}. 
To clarify this concept of SDF, let us first give a definition of the SDF for one realization, namely the fine-grain  SDF, $\sdf$,
using the implicit definition of the surface  \eqref{eq:implicit-surface-definition}:
\begin{equation}
	\sdf(t,\xv;\phase{\phasevar})=\delta_I(t,\xv)\delta(\phase{\phasevar}-\phasevar(t,\xv)).
\label{eq:SDF-implicit}
\end{equation}
    
Pope also defines a fine-grained SDF in \cite{pope}, using the explicit definition \eqref{eq:explicit-surface}:
\begin{equation}
	\sdf(t,\xv,\phase{\phasevar})=\int_{\Uopen} L(t,\uparam,\vparam;\phase{\phasevar})A(t,\uparam,\vparam)d\uparam d\vparam,
\label{eq:SDF-explicit}
\end{equation}
where 
$A(t,\uparam,\vparam)$ is the area of a surface element, defined by:
\begin{equation}
	A(t;\uparam,\vparam)=||\partial_u\XV\times\partial_v\XV||(t;\uparam,\vparam)
\label{eq:defintion-A}
\end{equation} 
and
\begin{equation}
	L(t,\uparam,\vparam;\phase{\phasevar})=\delta(\xv-\xv(t,\uparam,\vparam))\delta(\phase{\phasevar}-\phasevar(t,\uparam,\vparam)).
\label{eq:L-SDF}
\end{equation}
One can show that the two definitions are equivalent and independent of the choice of the parametrization and the function $g$.

Secondly, the SDF is defined as an ensemble average over all the realizations:
\begin{equation}
\SDF(t,\xv;\phase{\phasevar})=<\sdf(t,\xv;\phase{\phasevar})>,
\end{equation}
where $<\bullet>$ is the ensemble average operator.

Using the same procedure as in \cite{pope}, we can derive the following transport equation with a source term:
\begin{equation}
  \partial_t \SDF+\nabla_{\xv}\cdot\{\phase{\Vi}\SDF\}+\nabla_{\phase{\phasevar}}\cdot\{<\dot{\phasevar}>_c\SDF\}=<\dot{\Stretch}>_c\SDF,
  \label{eq:SDF-evolution}
\end{equation}
where the \textit{conditional expectation} of a scalar $\psi$ is defined by:
\[
	\ds<\psi>_c=\frac{<\psi \sdf(t,\xv;\phase{\phasevar})>}{\SDF(t,\xv;\phase{\phasevar})}.
\]
The SDF evolves due to two main contributions: the left-hand side of equation \eqref{eq:SDF-evolution} 
contributes to the evolution of $\SDF$ in the phase-space $\phasespace$ 
and in the physical space $\Domain$. This evolution is expressed as divergences of conservative fluxes. 
The source terms of the right-hand side of equation \eqref{eq:SDF-evolution} express the evolution of the 
surface area due to stretching.

The conditional expectations of   $\dot{\Mcurv}$, $\dot{\Gcurv}$ and $\dot{\Stretch}$ may be obtained from
their Lagrangian time evolution \eqref{eq:time-evol-inter-var} by applying the ensemble averaging and by using 
the linearity of the average operator, which reads ($\lambda$ is a constant):
\[
	<\lambda a + b> = \lambda <a> + <b>,
\]
as well as Gauss and Leibniz rules:
\[
	< \partial_t a > = \partial_t <a> \quad \text{and} \quad <\partial_x a> = \partial_x <a>.
\]
Then, the averaged evolution equations read:
\begin{equation}
	\begin{array}{rcl}
		<\dot{\Mcurv}>_c	&=&	-\ds\frac{1}{2}<\nabla^2_T(\vi)>_c
											\, - \, (2\phase{\Mcurv}^2-\phase{\Gcurv}) \, (\phase{\Vi}\cdot<\normal>_c),\\
		<\dot{\Gcurv}>_c	&=&	-\ds \phase{\Mcurv}<\nabla^2_T(\vi)>_c 
											\, + \, \sqrt{\phase{\Mcurv}^2-\phase{\Gcurv}} \, \left<\frac{\partial^2\vi}{\partial {y_1^{*}}^2}
											-\frac{\partial^2\vi}{\partial {y_2^{*}}^2} \right>_c
											\, - \, 2\phase{\Mcurv}\phase{\Gcurv} \, (\phase{\Vi}\cdot<\normal>_c),\\
		<\dot{\Stretch}>_c	&=&	2\phase{\Mcurv} \, (\phase{\Vi}\cdot<\normal>_c).
	\end{array}
\label{eq:cond-time-evol-inter-var}
\end{equation} 

One can note that some terms are unclosed in these equations, that are the conditional expectations of:
\begin{itemize}
	\item the normal vector $<\normal>_c$,
	\item the second order derivatives of the interfacial velocity $<\nabla^2_T(\vi)>_c$,
	\item the evolution of the interfacial velocity $<\dot{\Vi}>_c$.
\end{itemize}
These terms need to be modeled and related to the internal flow dynamics of the gas and liquid phases.

\subsection{Averaged quantities and moments of the SDF}

The numerical resolution of equation \eqref{eq:SDF-evolution} can be unreachable for most applications  
because of the large dimension of the phase space $\phasespace$. 
In fact, solving the exact SDF would provide a level of detail on the flow, which is often
not needed.  
Moreover, the large amount of data produced to attain this level of detail may hide the prevailing macroscopic features 
we are looking for. Therefore, we only aim at predicting some macroscopic variables of the flow, that ensure 
a satisfactory result for most industrial applications. 
Drew \cite{Drew_Geom} derived Eulerian equations for the evolution of the following averaged quantities: 
the expected surface density $\Sigma(t,\xv)$,
the interfacial-expected mean and Gauss curvatures, $\widetilde{\Mcurv}$, $\widetilde{\Gcurv}$ and 
the volume fraction $\alpha(t,\xv)$. 
In the present approach, we can not express the volume fraction as a function of the SDF 
without making a topological assumption on the gas-liquid interface. 
Therefore, in this part we temporarily restrict our study to the first three  averaged interfacial quantities.

In terms of moments of the SDF, $\Sigma(t,\xv)$, $\widetilde{\Mcurv}$ and $\widetilde{\Gcurv}$ 
can be expressed as follows:
\begin{equation}
  \begin{array}{rcl}
    \Sigma(t,\xv)&=&M_{0,0,\tens{0}}(t,\xv),\\
    \Sigma(t,\xv)\widetilde{\Mcurv}&=&M_{1,0,\tens{0}}(t,\xv),\\
    \Sigma(t,\xv)\widetilde{\Gcurv}&=& M_{0,1,\tens{0}}(t,\xv),\\
  \end{array}
  \label{eq:expected_interfacial_variables}
\end{equation}
where the moments read:
\begin{equation}
M_{i,j,\tens{l}}(t,\xv)=\int_{\phasespace}\Mcurv^{i}\Gcurv^{j}\Vic{x}^{l_x}\Vic{y}^{l_y}\Vic{z}^{l_z}\SDF(t,\xv;\phase{\phasevar})\dint{5}{\phase{\phasevar}},
\end{equation}
and $\tens{l}=(l_x,l_y,l_z)$.

Then, by multiplying the SDF evolution equation \eqref{eq:SDF-evolution} by 
$\Mcurv^{i}\Gcurv^{j}\Vic{x}^{l_x}\Vic{y}^{l_y}\Vic{z}^{l_z}$ and integrating
over the whole phase-space domain,
we can derive the evolution equations for the three quantities of interest:
\begin{equation}
  \begin{array}{rcl}
    \partial_t\Sigma+\diverg{\xv}{ \left\{\Sigma \, \overline{\Vi} \right\}}
    		&=&
    		\int_{\phase{\phasevar}}<\dot{\Stretch}>_cF(t,\xv;\phase{\phasevar}) \,
		\dint{5}{\phase{\phasevar}},\\[10pt]
    \partial_t\Sigma\widetilde{\Mcurv}+\diverg{\xv}{ \left\{\Sigma \, \widetilde{\Mcurv} \, \overline{\Vi} \right\}}
    		&=&
    		\diverg{\xv}{ \left\{\Sigma\widetilde{\Mcurv}(\overline{\Vi}-\overline{\Vi}^H) \right\}}
		+
    		\int_{\phase{\phasevar}}<\dot{\Stretch}+\dot{H}/\phase{H}>_c\phase{\Mcurv}\SDF(t,\xv;\phase{{\phasevar}}) \,
    		\dint{5}{\phase{\phasevar}},\\[10pt]
    \partial_t\Sigma\widetilde{\Gcurv}+\diverg{\xv}{ \left\{\Sigma \, \widetilde{\Gcurv} \, \overline{\Vi} \right\}}
    		&=&
		\diverg{\xv}{\left\{ \Sigma\widetilde{\Gcurv}(\overline{\Vi}-\overline{\Vi}^G) \right\}}
		+
    		\int_{\phase{\phasevar}}<\dot{\Stretch}+\dot{G}/\phase{G}>_c\phase{\Gcurv}\SDF(t,\xv;\phase{{\phasevar}}) \,
    		\dint{5}{\phase{\phasevar}}.\\
  \end{array}
  \label{eq:averaged-Drew-eq-general-case}
\end{equation}
We can notice three types of averaged interfacial velocity that read:
\begin{equation}\label{eq:expected_vi}
  \begin{array}{rcl}
    \Sigma \, \widetilde{\Vi}
    &=&
    \int_{\phase{\phasevar}}\phase{\Vi}\SDF(t,\xv;\phase{{\phasevar}}) \, \dint{5}{\phase{\phasevar}},\\[9pt]
    \Sigma \, \widetilde{\Mcurv} \, \overline{\Vi}^H
    &=&
    \int_{\phase{\phasevar}}\phase{\Vi}\phase{\Mcurv}\SDF(t,\xv;\phase{{\phasevar}}) \, \dint{5}{\phase{\phasevar}},\\[9pt]
    \Sigma \, \widetilde{\Gcurv} \, \overline{\Vi}^G 
    &=&
    \int_{\phase{\phasevar}}\phase{\Vi}\phase{\Gcurv}\SDF(t,\xv;\phase{{\phasevar}}) \, \dint{5}{\phase{\phasevar}}.
  \end{array}
\end{equation}

Using the time evolution of both curvatures and of the stretch factor \eqref{eq:time-evol-inter-var} 
in the system of equations \eqref{eq:averaged-Drew-eq-general-case}, 
we can show that this system is equivalent to the system derived by Drew in \cite{Drew_Geom}. 
However, two closure issues need to be tackled at two different stages for future works. 
First, we need time evolution laws to close the conditional averages 
$(<\dot{\Mcurv}>_c,<\dot{\Gcurv}>_c,<\dot{\Stretch}>_c)$, as already mentioned for the transport equation of the SDF. 
Then, we need to propose a procedure to reconstruct the SDF from the known moments. 
Let us underline that, by using only the first three moments $M_{0,0,\tens{0}}$, $M_{1,0,\tens{0}}$, $M_{0,1,\tens{0}}$, 
we can not capture the variance of the phase-space coordinates. 
One should consider higher order moments. 
However, one must be careful when reconstructing the SDF in this case. 
Indeed, the curvatures $(\phase{\Mcurv},\phase{\Gcurv})$ must satisfy 
$\phase{\Mcurv}^2\geq\phase{\Gcurv}$ (see
equation \eqref{eq:mapping}), and the SDF reconstruction has to respect this constraint.

\section{Statistical description of the gas-liquid interface for a disperse phase}



\subsection{Surface density function in the context of discrete particles}
\label{sect:discrte-SDF}

The SDF measures the pointwise probable area concentration for a probabilistic event, 
which is characterized by the phase space variables $\phase{\phasevar}=\{\phase{\Mcurv},\phase{\Gcurv},\phase{\Vi}\}\in\phasespace$
and
evaluated at a local point $\xv\in\Domain$. 
This probabilistic description is a pointwise description of the interface, 
which makes it general and valid for both separated and disperse phases. 
However, in the case of a disperse flow, a point-particle approach is usually considered. 
Each particle, droplet or bubble, is reduced to a point, characterized by some averaged and global 
quantities, such as: the velocity of the center of mass, the particle size (measured by volume, surface area or diameter), etc... and the statistics are conducted on a number of objects.
In order to adapt the statistical approach presented in the previous section to a discrete formalism 
for a disperse phase, we first define interfacial quantities that are averaged over the surface of each 
particle. Then, we define a \textit{discrete surface density function} that may characterize the dispersion
in size and in shape of the population of particles.

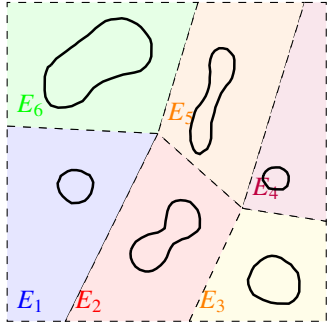
\begin{figure}
\begin{center}
	\begin{tikzpicture}[scale=0.6]

\draw[black, dashed, fill=blue!10] (0,0.601368*7.2) -- (0.463867*7.2,0.578227*7.2) -- (0.179284 *7.2,0) -- (0,0) -- cycle;
\draw[black, dashed, fill=green!10] (0,0.601368*7.2) -- (0.463867*7.2,0.578227*7.2) -- (0.587531*7.2,7.07) -- (0,7.07) -- cycle;
\draw[black, dashed, fill=orange!10] (0.587531*7.2,7.07) -- (0.463867*7.2,0.578227*7.2) -- (0.723886*7.2,0.348004*7.2) -- (0.914976*7.2, 7.07)--cycle;
\draw[black, dashed, fill=red!10] (0.463867*7.2,0.578227*7.2) -- (0.179284 *7.2,0)  -- (0.560646*7.2,0) --  (0.723886*7.2,0.348004*7.2) -- cycle;
\draw[black, dashed, fill=yellow!10] (0.560646*7.2,0) --  (0.723886*7.2,0.348004*7.2)  -- (7.07,0.308141*7.2)-- (7.07,0) -- cycle;
\draw[black, dashed, fill=purple!10] (0.723886*7.2,0.348004*7.2)  -- (7.07,0.308141*7.2) -- (7.07,7.07) -- (0.914976*7.2, 7.07) --  cycle;

\node[above right, blue] at (0,0) {$E_1$};
\node[above right, red] at (0.179284*7.2,0) {$E_2$};
\node[above right, orange] at (0.560646*7.2,0) {$E_3$};
\node[above right, purple] at (0.723886*7.2, 0.348004*7.2) {$E_4$};
\node[above right, orange] at (0.463867*7.2,0.578227*7.2) {$E_5$};
\node[above right, green] at (0,0.601368*7.2) {$E_6$};


\begin{axis}[
unit vector ratio*=1 1 1,
width=10cm,
hide axis,
xmin=0, xmax=1,
ymin=0, ymax=1,
xtick={0, 1},
xticklabels={,},
ytick={0, 1},
yticklabels={,},
]
\addplot [ultra thick, black]
table {%
0.137788 0.806551
0.154203 0.826625
0.170520 0.838590
0.186837 0.850555
0.200443 0.861202
0.214033 0.870498
0.224927 0.879826
0.239881 0.890457
0.253470 0.899752
0.267092 0.911751
0.278003 0.922430
0.295684 0.935731
0.309274 0.945026
0.326857 0.950219
0.341696 0.951390
0.360593 0.953863
0.376731 0.950963
0.392819 0.944010
0.408891 0.935705
0.420905 0.926098
0.430206 0.915173
0.440823 0.901528
0.450092 0.887900
0.453955 0.872986
0.453759 0.856770
0.453481 0.833797
0.447895 0.817646
0.435571 0.801577
0.419237 0.788260
0.400307 0.783084
0.386799 0.780546
0.365190 0.776754
0.344896 0.770243
0.325949 0.763716
0.301579 0.754551
0.286642 0.745272
0.271689 0.734642
0.264853 0.726615
0.252529 0.710547
0.237494 0.693159
0.219879 0.685264
0.203610 0.677353
0.183349 0.673545
0.160425 0.672472
0.137599 0.679508
0.126982 0.693152
0.120408 0.706748
0.117958 0.727051
0.118236 0.750024
0.118481 0.770294
0.122703 0.785110
0.128273 0.799909
0.137788 0.806551
};
\addplot [ultra thick, black]
table {%
0.577091 0.579553 
0.575988 0.599840 
0.580309 0.622764 
0.585927 0.641617 
0.590166 0.657784 
0.595703 0.669881 
0.606630 0.681912 
0.616226 0.695310 
0.624475 0.708726 
0.630028 0.722173 
0.634234 0.735638 
0.637109 0.750470 
0.637322 0.768037 
0.638882 0.785589 
0.636432 0.805892 
0.639356 0.824778 
0.644942 0.840928 
0.653158 0.851641 
0.666731 0.859585 
0.678858 0.859438 
0.692301 0.856571 
0.704281 0.844261 
0.712252 0.834704 
0.714800 0.822509 
0.714636 0.808995 
0.714457 0.794130 
0.711598 0.780649 
0.707376 0.765834 
0.700491 0.753754 
0.689564 0.741723 
0.681332 0.729659 
0.673100 0.717595 
0.667546 0.704147 
0.660596 0.686661 
0.655043 0.673213 
0.652168 0.658381 
0.647946 0.643566 
0.642327 0.624712 
0.642147 0.609847 
0.641984 0.596334 
0.641738 0.576064 
0.637500 0.559897 
0.629202 0.542427 
0.618340 0.535802 
0.606098 0.526490 
0.594020 0.530691 
0.586065 0.541600 
0.580855 0.556531 
0.576960 0.568742
0.577091 0.579553 
};
\addplot [ultra thick, black]
table {%
0.164416 0.445363 
0.168606 0.457476 
0.178120 0.464118 
0.190346 0.472079 
0.202523 0.475985 
0.218693 0.475789 
0.233483 0.472906 
0.250886 0.463233 
0.258938 0.460432 
0.266860 0.446820 
0.270788 0.437312 
0.269261 0.422463 
0.263625 0.402258 
0.254029 0.388860 
0.244482 0.379515 
0.225536 0.372988 
0.202611 0.371915 
0.187903 0.381555 
0.174575 0.393881 
0.165290 0.406158 
0.160096 0.422439 
0.158895 0.434618
0.164416 0.445363 
};
\addplot [ultra thick, black]
table {%
0.389514 0.225025 
0.396415 0.238457 
0.404647 0.250521 
0.422312 0.262470 
0.433157 0.267744 
0.444003 0.273019 
0.460222 0.276876 
0.473730 0.279415 
0.485939 0.286024 
0.495454 0.292666 
0.496915 0.302109 
0.498426 0.315606 
0.501301 0.330438 
0.504192 0.346622 
0.509762 0.361421 
0.519342 0.373468 
0.530187 0.378743 
0.545043 0.381265 
0.561196 0.379717 
0.575970 0.375483 
0.587950 0.363173 
0.598583 0.350880 
0.603810 0.337301 
0.606325 0.322403 
0.606162 0.308889 
0.605982 0.294025 
0.597766 0.283312 
0.582813 0.272682 
0.569223 0.263386 
0.550293 0.258210 
0.538083 0.251601 
0.523163 0.243673 
0.514931 0.231609 
0.509410 0.220864 
0.502443 0.202027 
0.495559 0.189947 
0.483268 0.176581 
0.469662 0.165934 
0.457453 0.159325 
0.445292 0.156769 
0.434512 0.156901 
0.417060 0.162519 
0.403699 0.172142 
0.394430 0.185771 
0.389204 0.199350 
0.388020 0.212880
0.389514 0.225025 
};
\addplot [ultra thick, black]
table {%
0.806109 0.460534 
0.812977 0.471263 
0.822508 0.479256 
0.831990 0.483196 
0.845481 0.484383 
0.857609 0.484235 
0.871002 0.477315 
0.880337 0.469092 
0.885580 0.456864 
0.885400 0.441999 
0.881161 0.425832 
0.867555 0.415185 
0.859470 0.415284 
0.841953 0.415497 
0.823120 0.418429 
0.813835 0.430706 
0.804566 0.444334 
0.804647 0.451091
0.806109 0.460534 
};
\addplot [ultra thick, black]
table {%
0.763190 0.143442 
0.770140 0.160927 
0.777025 0.173008 
0.787920 0.182336 
0.796103 0.190346 
0.808328 0.198306 
0.819174 0.203581 
0.832682 0.206119 
0.846140 0.204604 
0.862245 0.199002 
0.876937 0.188011 
0.892993 0.178355 
0.900947 0.167446 
0.914242 0.152417 
0.916790 0.140222 
0.919322 0.126676 
0.919126 0.110459 
0.918913 0.0928917 
0.911980 0.0767575 
0.901052 0.0647264 
0.886115 0.0554473 
0.871293 0.0556276 
0.849733 0.0558899 
0.829553 0.0588384 
0.805395 0.0672416 
0.789323 0.0755464 
0.774599 0.0838348 
0.765346 0.0988144 
0.758821 0.116464 
0.757653 0.131345
0.763190 0.143442 
};

\end{axis}

\end{tikzpicture}
	\caption{Illustration of the spatial decomposition, with subspaces containing
	only one particle.}
	\label{fig:space-illustr}
\end{center}
\end{figure}

Let us consider an isolated particle $\particle_k$ of
surface $\portsurf{k}$, supposed to be smooth. Let us also consider 
a partition of the domain $\Domain=\bigcup\limits_{k=1}^{N_{max}}\subdomain_k$, 
such that each sub-domain $\subdomain_k$ contains exactly one particle $\particle_k$,
as illustrated in Figure \ref{fig:space-illustr}. 
For a quantity $\phi(t,\xv)$, that can be a scalar (like curvatures) or a vector (like a velocity), 
we define its interface average $\Aver{\phi}_k(t,\xv)$ over the particle surface $\portsurf{k}$ as follows:
\begin{equation}
  \Aver{\phi}_k(t,\xv_k)=\dfrac{1}{S_k}\int_{\xv\in\subdomain_k}\phi(t,\xv)\delta_I(t,\xv)\dint{3}{\xv},
  \label{eq:average-over-p-surf-implicit}
\end{equation}
where $S_k$ is the total surface area of the particle:
\[
	S_k = \int_{\xv\in\subdomain_k}\delta_I(t,\xv)\dint{3}{\xv},
\]
and $\xv_k$ its  center of mass. 


From now on, we consider each particle $\particle_k$  as punctual, located at its  center of mass $\xv_k(t)$, 
having a surface area $S_k$ and the averaged interface properties 
$\Aver{\phasevar}_k=\{\Aver{\Mcurv}_k,\Aver{\Gcurv}_k,\Aver{\Vi}_k\}$. 
We define the discrete SDF by:
\begin{equation}
  \SDF^{d}(t,\xv,\phase{{\phasevar}})
  =
  <\sum_{k=1}^{N_{max}}S_k(t)\delta(\xv-\xv_k(t))\delta(\phase{{\phasevar}}-\Aver{\phasevar}_k(t))>
  \label{eq:DSDF}
\end{equation}
where $N_{max}$ is the maximum number of particles in the domain $\Domain$ over all realizations. 
One can note that we have chosen to locate the particles at their centers of mass.
However, the surface averaging procedure involves a surface barycenter 
defined by $\widetilde{\xv}_k=1/S_k\int_{\portsurf{k}}\xv dS(\xv)$. In order to cope with our choice, which is more
practical as far as the particle dynamics is considered, we propose to express the mean interface velocity
as:
\[
	\Aver{\Vi}_k(t)=\dfrac{d\xv_k(t)}{dt} + \tilde{V}_k (t),
\]
where $\tilde{V}_k (t)$ is a fluctuation velocity, that stands for the particle shape deformations.
%
In the following, this fluctuation velocity is neglected and we simply consider that
the averaged interface velocity of a particle $\particle_k$ reads:
\begin{equation}
  \Aver{\Vi}_k(t)=\dfrac{d\xv_k(t)}{dt}.
\end{equation}

The discrete SDF \eqref{eq:DSDF} verifies a similar transport equation as \eqref{eq:SDF-evolution}:
\begin{equation}
\partial_t \SDF^d+\nabla_{\xv}\cdot\{\phase{{\Vi}}\SDF^d\}+\gradient{\phase{\phasevar}}{\cdot\{<\dot{\Aver{\phasevar}}}>_c\SDF^d\}=<\dot{\Aver{\Stretch}}>_c\SDF^d.
\label{eq:GBPE-DSDF}
\end{equation}
Once more, the time evolution of the averaged curvatures needs to be modeled in order to close the system of equations.
This is work in progress and is out of the scope of the present paper.

The "localized" SDF defined in section \ref{sect:general-SDF} and the discrete SDF are two different functions.
Nonetheless, they contain similar pieces of information about the gas-liquid interface properties. 
Indeed, both functions provide the probable surface area of the interface having some geometrical properties 
given by the phase-space variables. For the "localized" SDF, the phase-space variables are given by 
the localized interface properties at a surface point, while for the discrete SDF, we have considered 
interface properties averaged by object (particle). 
The link between the two functions can be seen through the first order moments of the two functions:
\begin{proposition}
\label{prop:TopologicalMoments}
Given a subdomain $\subdomain\subset\Domain$, such that its border 
does not cut any particle, the integral over this subdomain of the  zeroth and first order moments of the two functions are equal:
\begin{equation}
  \begin{array}{rlc}
    \int_{\xv\in\subdomain}\int_{\phase{\phasevar}\in\phasespace}
      \SDF(t,\xv;\phase{\phasevar}) \,
    \dint{5}{\phase{\phasevar}}\dint{3}{\xv}
    &=&
    \int_{\xv\in\subdomain}\int_{\phase{\phasevar}\in\phasespace}
      \SDF^{d}(t,\xv;\phase{\phasevar}) \,
    \dint{5}{\phase{\phasevar}}\dint{3}{\xv}
    \\[10pt]
    \int_{\xv\in\subdomain}\int_{\phase{\phasevar}\in\phasespace}
      \phase{\phasevar}_l\SDF(t,\xv;\phase{\phasevar}) \,
    \dint{5}{\phase{\phasevar}}\dint{3}{\xv}
    &=&
    \int_{\xv\in\subdomain}\int_{\phase{\phasevar}\in\phasespace}
      \phase{{\phasevar}}_l\SDF^{d}(t,\xv;\phase{\phasevar}) \,
    \dint{5}{\phase{\phasevar}}\dint{3}{\xv}
  \end{array}
\end{equation}
where $ \phase{\phasevar}_l \in \left\{ \phase{\Mcurv}, \phase{\Gcurv}, \phase{\vi}_x, \phase{\vi}_y, \phase{\vi}_z \right\}$.

\end{proposition}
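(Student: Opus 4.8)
The plan is to prove both identities realization by realization and then invoke the linearity of the ensemble average $<\cdot>$, so that it suffices to establish the two equalities for the fine-grain fields before averaging. For a single realization the fine-grain localized SDF is $\sdf(t,\xv;\phase{\phasevar})=\delta_I(t,\xv)\,\delta(\phase{\phasevar}-\phasevar(t,\xv))$ from \eqref{eq:SDF-implicit}, while the discrete object is the summand $\sum_{k=1}^{\Nmax}S_k\,\delta(\xv-\xv_k)\,\delta(\phase{\phasevar}-\Aver{\phasevar}_k)$ appearing in \eqref{eq:DSDF}. First I would carry out the integration over $\phasespace$ using the sifting property of the Dirac measures: integrating the localized fine-grain SDF against $1$ and against $\phase{\phasevar}_l$ yields $\delta_I(t,\xv)$ and $\phasevar_l(t,\xv)\,\delta_I(t,\xv)$ respectively, whereas the discrete one yields $\sum_k S_k\,\delta(\xv-\xv_k)$ and $\sum_k S_k\,(\Aver{\phasevar}_k)_l\,\delta(\xv-\xv_k)$.

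For the zeroth moment it then remains to integrate over $\subdomain$. On the localized side $\int_{\subdomain}\delta_I\,\dint{3}{\xv}$ is exactly the interfacial area contained in $\subdomain$, by the meaning of the area density measure \eqref{eq:deltaI-definition}; on the discrete side $\int_\subdomain \sum_k S_k\,\delta(\xv-\xv_k)\,\dint{3}{\xv}=\sum_{k:\,\xv_k\in\subdomain}S_k$. The hypothesis that $\partial\subdomain$ does not cut any particle is what makes these two numbers agree: each particle sits either wholly inside or wholly outside $\subdomain$, so the surfaces $\portsurf{k}$ meeting $\subdomain$ are precisely those contained in it, and the index set $\{k:\xv_k\in\subdomain\}$ coincides with $\{k:\portsurf{k}\subset\subdomain\}$. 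Summing the individual areas $S_k=\int_{\subdomain_k}\delta_I$ over this common set reproduces the total interfacial area.

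For the first moment I would decompose the localized integral over the disjoint particle surfaces, $\int_{\subdomain}\phasevar_l\,\delta_I\,\dint{3}{\xv}=\sum_{k:\,\portsurf{k}\subset\subdomain}\int_{\subdomain_k}\phasevar_l\,\delta_I\,\dint{3}{\xv}$, which is legitimate because $\delta_I$ restricted to $\subdomain_k$ is supported on $\portsurf{k}$ alone. Each term is then recognized through the definition of the surface average \eqref{eq:average-over-p-surf-implicit} as $S_k\,(\Aver{\phasevar}_k)_l$, using that averaging commutes with taking a component, so the localized side equals $\sum_{k:\,\portsurf{k}\subset\subdomain}S_k\,(\Aver{\phasevar}_k)_l$. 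This is exactly the discrete contribution $\sum_{k:\,\xv_k\in\subdomain}S_k\,(\Aver{\phasevar}_k)_l$ once the two index sets are identified as above. Taking the ensemble average of both equalities then closes the proof, with the scalar cases ($\phase{\Mcurv},\phase{\Gcurv}$) and the vector-component cases ($\phase{\vi}_x,\phase{\vi}_y,\phase{\vi}_z$) treated uniformly.

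The main obstacle, and the only place the geometric hypothesis truly enters, is the identification $\{k:\xv_k\in\subdomain\}=\{k:\portsurf{k}\subset\subdomain\}$: one must argue that "the border does not cut any particle" forces each particle, interior together with its surface, to lie entirely on one side of $\partial\subdomain$, so that the center of mass $\xv_k$ falls in $\subdomain$ exactly when $\portsurf{k}$ does. The remaining steps, namely the Dirac sifting and the rewriting via \eqref{eq:average-over-p-surf-implicit}, are routine; the only additional care needed is to use that the partition $\Domain=\bigcup_k\subdomain_k$ places exactly one particle per cell, which guarantees that no surface $\portsurf{k}$ is split between neighboring cells and that the decomposition of $\int_{\subdomain}\phasevar_l\,\delta_I$ is exact.
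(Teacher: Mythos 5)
Your proposal is correct and follows essentially the same route as the paper's proof: Dirac sifting over the phase space, decomposition of the integral over the partition cells $\subdomain_k$, use of the no-cut hypothesis to identify the particles contributing to $\subdomain$, and recognition of each term as $S_k(\Aver{\phasevar}_k)_l$ via the surface average \eqref{eq:average-over-p-surf-implicit}. The only difference is cosmetic: the paper illustrates the computation for $\phase{\phasevar}_l=\phase{\Gcurv}$ inside the ensemble-average brackets, whereas you treat all components uniformly and state explicitly the index-set identification that the paper leaves implicit.
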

\begin{proof}
We illustrate this result for a moment of order one on $\Gcurv$ and zero on the other phase variables 
($\phase{\phasevar}_l = \phase{\Gcurv}$):
\begin{equation}
  \begin{array}{rcl}
    \int_{\xv\in \subdomain}\int_{\phase{\phasevar}\in\phasespace}\phase{\Gcurv}
      \SDF(t,\xv;\phase{\phasevar}) \,
      \dint{5}{\phase{\phasevar}}\dint{3}{\xv}
    &=&
    <\int_{\xv\in \subdomain}\Gcurv(t;\xv) \, \delta_I(\xv) \, \dint{3}{\xv}>\\[9pt]
    &=&
    <\sum\limits_{k}\int\limits_{\xv\in\subdomain_k\cap\subdomain}\Gcurv(t;\xv) \, \delta_I(\xv) \, \dint{3}{\xv}>\\[10pt]
    &=&
    <\sum\limits_{\{k;\; \xv_k(t)\in\subdomain\}}S_k\Aver{\Gcurv}_k>\\[10pt]
    &=&
    <\sum\limits_{\{k;\; \xv_k(t)\in\subdomain\}}\int_{\xv\in\subdomain_k}\int_{\phase{\phasevar}\in\phasespace}
      \phase{{\Gcurv}}\SDF^{d}(t,\xv,\phase{{\phasevar}}) \,
    \dint{5}{\phase{\phasevar}} \dint{3}{\xv}>\\[10pt]
    &=&
    \int_{\xv\in \subdomain}\int_{\phase{{\phasevar}}\in\phasespace}
      \phase{{\Gcurv}}\SDF^{d}(t,\xv,\phase{{\phasevar}}) \,
    \dint{5}{\phase{{\phasevar}}} \dint{3}{\xv}.
  \end{array}
\end{equation}
\end{proof}

We recall that the zeroth and first-order moments are related to the expected surface density and the two curvatures 
$(\Sigma,\Sigma\widetilde{\Mcurv},\Sigma\widetilde{\Gcurv}$), defined in \eqref{eq:expected_interfacial_variables}, 
and to the expected interface velocity, defined in \eqref{eq:expected_vi}. 
In this work, we give a great importance to these quantities because they contain some topological information 
about the gas-liquid interface, such as the number density of particles, as we will show in next section. 
From now on, we call these quantities the "\emph{topological moments}" 
and we underline that any future definition of a new SDF should preserve these moments.

\subsection{Link between the discrete SDF and a NDF: derivation of a Williams-Boltzmann-like equation}
\label{sec:link-discrete-ndf}

Let us now present the relation between the discrete SDF  \eqref{eq:DSDF}  and a number density function (NDF) for particles, 
using a geometrical property of the averaged Gauss curvature, known as the \emph{Gauss-Bonnet formula}. 

Let $M$ be a 3D bounded object and $\Surface(M)$ its surrounding surface, supposed to be smooth. 
The Gauss-Bonnet formula applied to $\Surface(M)$ states that \cite{gray}:
\begin{equation}
  \int_{\xv\in\Surface(M)}\Gcurv(\xv)dS(\xv)=2\pi\,\chi(M),
\end{equation}
where $\chi(M)$ is the Euler characteristic of the surface $\Surface(M)$
and is a topological invariant, meaning that two homeomorphic surfaces have the same Euler characteristic. 
In the following, we consider smooth particle surfaces, that are all homeomorphic to a sphere, the Euler characteristic of
which is 2. 
Therefore, we can relate the averaged Gauss curvature over the surface of a particle $\particle_k$ to its surface area by: 
\begin{equation}
  S_k=\dfrac{4\pi}{\Aver{\Gcurv}_k}.
\label{eq:particle-gauss-bonnet}
\end{equation}
Since $\ds\frac{S_k\Aver{\Gcurv}_k}{4\pi}=1$ for each particle, we are able to count the total number of particles, just by looking at 
the interface average of $\Gcurv$. Then, we can derive a relation between the discrete SDF $\SDF^d$ 
and a Number Density Function (NDF) $\NDF$. 
This relation is given by the following proposition.
\begin{proposition}
When the two-phase flow is purely disperse, 
the distribution function $\dfrac{\phase{{\Gcurv}}}{4\pi}\SDF^{d}(t,\xv;\phase{{\phasevar}})$ is equal to the NDF 
$\NDF(t,\xv;\phase{{\phasevar}})$ 
for a phase space made of the averaged curvatures and interface velocity 
$\phase{{\phasevar}}=\{\phase{{\Mcurv}},\phase{{\Gcurv}},\phase{{\Vi}}\}$:
\begin{equation}
  \dfrac{\phase{{\Gcurv}}}{4\pi}\SDF^{d}(t,\xv;\phase{{\phasevar}})
  =
  \NDF(t,\xv;\phase{{\phasevar}})
\end{equation}
\end{proposition}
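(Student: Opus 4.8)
The plan is to start from the definition \eqref{eq:DSDF} of the discrete SDF and show that the multiplicative factor $\phase{\Gcurv}/(4\pi)$ exactly cancels the surface-area weight $S_k$ carried by each particle, leaving the canonical fine-grained form of a number density function. First I would recall that the NDF of a disperse population is, by construction, the ensemble average of a sum of Dirac masses, one per object, placed at the center of mass $\xv_k$ in physical space and at the surface-averaged state $\Aver{\phasevar}_k$ in phase space:
\[
  \NDF(t,\xv;\phase{\phasevar}) = \Big<\sum_{k=1}^{\Nmax}\delta(\xv-\xv_k(t))\,\delta(\phase{\phasevar}-\Aver{\phasevar}_k(t))\Big>.
\]
This is precisely the object whose zeroth moment counts particles, so establishing the proposition amounts to recovering this expression from $\tfrac{\phase{\Gcurv}}{4\pi}\SDF^{d}$.

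Next I would push the scalar factor $\tfrac{\phase{\Gcurv}}{4\pi}$ inside the ensemble average and the sum, using the linearity of $<\bullet>$ already exploited in section \ref{sect:general-SDF}, and then apply the sifting property of the Dirac mass. Since $\phase{\Gcurv}$ is merely the Gauss-curvature component of $\phase{\phasevar}$, and the factor $\delta(\phase{\phasevar}-\Aver{\phasevar}_k)$ forces $\phase{\phasevar}=\Aver{\phasevar}_k$, the distributional identity $g(\xi)\delta(\xi-a)=g(a)\delta(\xi-a)$ gives, component-wise,
\[
  \phase{\Gcurv}\,\delta(\phase{\phasevar}-\Aver{\phasevar}_k(t)) = \Aver{\Gcurv}_k(t)\,\delta(\phase{\phasevar}-\Aver{\phasevar}_k(t)).
\]
After this substitution each term of the sum carries the scalar prefactor $\tfrac{1}{4\pi}\,S_k\,\Aver{\Gcurv}_k$.

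The decisive step is then to invoke the Gauss-Bonnet relation \eqref{eq:particle-gauss-bonnet}, which holds because the purely disperse assumption guarantees that every particle surface $\portsurf{k}$ is closed and homeomorphic to a sphere, hence has Euler characteristic $2$. This yields $S_k\,\Aver{\Gcurv}_k = 4\pi$ for every $k$, so each prefactor collapses to $1$ and the surface-weighted sum reduces exactly to the particle-counting sum recalled above, i.e.\ to $\NDF$. Undoing the ensemble average back through linearity completes the identification $\tfrac{\phase{\Gcurv}}{4\pi}\SDF^{d}=\NDF$.

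I expect the main difficulty to be conceptual rather than computational. One must check that the sifting replacement is legitimate — it is, because the coordinate map $\phase{\Gcurv}$ is continuous — and, more importantly, that the Gauss-Bonnet normalization is genuinely applicable particle by particle. This is exactly where the hypothesis of a \emph{purely disperse} flow is indispensable: it ensures there are no connected bulk sheets and no particles with handles or topological changes, so that $\chi=2$ uniformly and the cancellation $S_k\Aver{\Gcurv}_k=4\pi$ is exact for each object. Outside this regime the Euler characteristic is no longer constant and the clean reduction fails, which is consistent with the remark preceding the statement that the volume fraction cannot be expressed without a topological assumption.
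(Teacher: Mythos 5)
Your proposal is correct and follows exactly the paper's own argument: push $\phase{\Gcurv}/(4\pi)$ inside the ensemble average and the sum, use the sifting identity to replace $\phase{\Gcurv}$ by $\Aver{\Gcurv}_k$, and invoke the Gauss--Bonnet relation $S_k\Aver{\Gcurv}_k=4\pi$ (valid since each particle surface is homeomorphic to a sphere, $\chi=2$) to cancel the area weight and recover the particle-counting sum defining $\NDF$. Your additional remarks on why the purely disperse hypothesis is indispensable are consistent with the paper's discussion and add nothing that contradicts it.
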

\begin{proof}
\begin{equation}
  \begin{array}{rcl}
    \dfrac{\phase{{\Gcurv}}}{4\pi}\SDF^{d}(t,\xv;\phase{{\phasevar}})
    &=& \ds
    <\sum_{k=1}^{N_{max}}S_k(t)\delta(\xv-\xv_k(t))\left\{
      \dfrac{\phase{{\Gcurv}}}{4\pi}
    \right\}\delta(\phase{{\phasevar}}-\Aver{\phasevar}_k(t)))>,
    \\[10pt]
    &=& \ds
    <\sum_{k=1}^{N_{max}}S_k(t)\delta(\xv-\xv_k(t))\left\{
      \dfrac{\Aver{\Gcurv}_k(t)}{4\pi}
    \right\}\delta(\phase{{\phasevar}}-\Aver{\phasevar}_k(t))>,
    \\[10pt]
    &=& \ds
    <\sum_{k=1}^{N_{max}}\delta(\xv-\xv_k(t))\delta(\phase{{\phasevar}}-\Aver{\phasevar}_k(t))>,\\[9pt]
    &=&
    \NDF(t,\xv;\phase{{\phasevar}}).
  \end{array}
\end{equation}
\end{proof}

From the transport equation of the discrete SDF \eqref{eq:GBPE-DSDF}, we can derive a General Popolation Balance Equation (GPBE) for 
$\NDF(t,\xv;\phase{{\phasevar}})$, that is similar to the Williams-Boltzmann equation:
\begin{equation}
  \partial_t \NDF
  +
  \nabla_{\xv}\cdot\{\phase{{\Vi}}\NDF\}
  +
  \nabla_{\phase{\phasevar}}\cdot\{<\dot{\Aver{\phasevar}}>_c\NDF\}
  =
  \left(<\dot{\Aver{\Stretch}}>_c+<\dfrac{\dot{\Aver{\Gcurv}} }{\Aver{\Gcurv}}>_c\right)\NDF.
\label{eq:GBPE-NDF-ST}
\end{equation}
The right-hand term behaves as a source term. We can relate this term to topological evolutions of the Euler characteristic of droplets due to fragmentation, coalescence, ects. 

Let us now consider a set of droplets that do not break up, nor coalesce and which are all the time 
homeomorph to spheres. 
Using the Gauss-Bonnet result \eqref{eq:particle-gauss-bonnet}, we can show that:
\begin{equation}
  \dot{\Aver{\Stretch}}_k
  = 
  -\dfrac{\dot{\Aver{\Gcurv}}_k(t)}{\Aver{\Gcurv}_k(t)}.
\end{equation}
Therefore, without break-up nor coalescence, equation \eqref{eq:GBPE-NDF-ST} reduces to:
\begin{equation}
  \partial_t \NDF
  +
  \nabla_{\xv}\cdot\{\phase{{\Vi}}\NDF\}
  +
  \nabla_{\phase{\phasevar}}\cdot\{<\dot{\Aver{\phasevar}}>_c\NDF\}
  =0.
\label{eq:GBPE-NDF}
\end{equation}

Equation \eqref{eq:GBPE-NDF} represents the time evolution of the particles NDF due to transport in physical space 
at the velocity of their center of mass, 
and the evolution of the internal variables, which represents the particle surface deformations and their acceleration by 
interaction with the carrier gas.

Now, we consider the volume of particles as an additional variable of the phase-space. 
Let us underline that adding the volume to the phase-space variables is specific to particles 
and can not be easily conducted within the general statistical approach for gas-liquid interface, as introduced 
in section \ref{sect:general-SDF}. 
Furthermore, we also replace the averaged Gauss curvature by the surface area of the particles, 
since the two variables are related by \eqref{eq:particle-gauss-bonnet}. 
In the following, we denote the new phase-space variables by $\phasevarv=\{\Aver{\Mcurv},\Areadrop,\vol,\Aver{\Vi}\}$,
where $\Areadrop=4\pi/\Aver{\Gcurv}$ is the surface area of the particles and $\vol$ is their volume. 
A new NDF may be defined as follows:
\begin{equation}
  \NDF(t,\xv,\phase{\phasevarv})=<\sum_{k=1}^{N_{max}}\delta(\xv-\xv_k(t))\delta(\phase{\phasevarv}-\phasevarv_k(t))>.
\end{equation}

Next, we focus on the evolution of the expected-average terms of the gas-liquid interface: the interface density area $\Sigma(t,\xv)$, 
the mean and Gauss curvatures, $\widetilde{\Mcurv}(t,\xv)$ and $\widetilde{\Gcurv}(t,\xv)$, introduced in section \ref{sect:general-SDF}
and the volume fraction $\alpha(t,\xv)$. In the current context, these terms can be expressed as first order moments of the NDF as follows:
\begin{equation}\label{eq:Geom-NDF}
  \begin{array}{rcl}
    \alpha(t,\xv)&=&M_{0,0,1,\tens{0}}(t,\xv),\\
    \Sigma(t,\xv)&=&M_{0,1,0,\tens{0}}(t,\xv),\\
    \Sigma(t,\xv)\tilde{\Mcurv}&=&M_{1,1,0,\tens{0}}(t,\xv),\\
    \Sigma(t,\xv)\tilde{\Gcurv}&=&4\pi M_{0,0,0,\tens{0}}(t,\xv),\\
  \end{array}
\end{equation}
where the moments of the NDF now read:
\begin{equation}\label{eq:Moments-NDF}
  M_{i,j,k,\tens{l}}(t,\xv)
  =
  \int_{\phase{\phasevarv}\in\phasespace}
    \Mcurv^{i}\Areadrop^{j}\vol^{k}\Vic{x}^{l_x}\Vic{y}^{l_y}\Vic{z}^{l_z}\NDF(t,\xv;\phase{\phasevarv}) \,
  \dint{5}{\phase{\phasevarv}},
\end{equation}
and $\tens{l}=(l_x,l_y,l_z)$. By considering the corresponding moments of the dynamics of the NDF \eqref{eq:GBPE-NDF}, we can obtain
the equations of the three expected interfacial variables and the volume fraction:
\begin{equation}
  \begin{array}{rcl}
    \partial_t\alpha + \partial_{\xv}\cdot\{\alpha(t,\xv)\widetilde{\Vi}\}
    &=& \ds
    \diverg{\xv}{\{\alpha(t,\xv)(\widetilde{\Vi}-\widetilde{\Vi}^{\alpha})\}}
    +
    \int_{\phase{\phasevarv}}<\dot{\vol}>_c\NDF(t,\xv;\phase{\phasevarv})\dint{5}{\phase{\phasevarv}},
    \\[10pt]
    \partial_t\Sigma + \partial_{\xv}\cdot\{\Sigma(t,\xv)\widetilde{\Vi}\}
    &=& \ds 
    \int_{\phase{\phasevarv}}<-\dot{\Areadrop}>_c\NDF(t,\xv;\phase{\phasevarv})\dint{5}{\phase{\phasevarv}},
    \\[10pt]
    \partial_t\Sigma\widetilde{\Mcurv} + \partial_{\xv}\cdot\{\Sigma\widetilde{\Mcurv}\widetilde{\Vi}\}
    &=& \ds 
    \diverg{\xv}{\{\Sigma\widetilde{\Mcurv}(\widetilde{\Vi}-\widetilde{\Vi}^{\Mcurv})\}}
    +
    \int_{\phase{\phasevarv}}<-\dot{\Areadrop}\phase{H}+\dot{H}\phase{\Areadrop}>_c
      \NDF(t,\xv;\phase{\phasevarv})
    \dint{5}{\phase{\phasevarv}},
    \\[10pt]
    \partial_t\Sigma\widetilde{\Gcurv} + \partial_{\xv}\cdot\{\Sigma\widetilde{\Gcurv}\widetilde{\Vi}\}
    &=& \ds
    \diverg{\xv}{\{\Sigma\widetilde{\Gcurv}(\widetilde{\Vi}-\widetilde{\Vi}^{\Gcurv})\}}.
  \end{array}
  \label{eq:geometrical_moments_equations}
\end{equation}
The divergence terms appearing in the right-hand side of the system express the correlation between the velocity 
and the rest of the phase space variables. 
In the case where the velocity is uncorrelated with $\{\vol,\Mcurv,\Gcurv\}$, these terms vanish. 
Under this hypothesis, the only remaining source terms are related to the shape deformation and to the compressibility of the particles. 
Also, the last equation would have no source term: the deformation of particles does not affect the quantity $\Sigma\widetilde{\Gcurv}$. 
Indeed, according to Gauss-Bonnet this quantity is the expected number density of particles, the evolution of which is only
due to convection or coalescence and break-up of the particles. Since we have supposed that the particles do not break-up 
nor coalesce, the particles number density is simply convected at velocity $\widetilde{\Vi}$.

In the following section, we propose a simplified closed model of this system in the case of a polydisperse evaporating spray, 
where the droplets are supposed to remain spherical at any time.

\section{High order geometrical size moments for polydisperse evaporating sprays of spherical droplets}

From now on, we omit the subscript $\phase{\bullet}$ for the phase-space variables.

\subsection{Spray modeling framework}

In this part, we focus on the simplified case of an evaporating spray, which consists 
in a cloud of polydiserse and spherical droplets. We also suppose that the spray is dilute enough and the Weber number low enough, 
so that coalescence and fragmentation of the droplets can be neglected. 
Such configuration occurs downstream the fuel injection in direct injection engines. 
Finally, for the sake of simplicity and clarity, we assume that thermal transfer can also be neglected, 
so that the temperature of the droplets can be ignored. Under the hypothesis that the droplets stay spherical at all times, the NDF writes:
\begin{equation}
  \NDF(t,\xv;\Mcurv,\Areadrop,\vol,\Vi)
  =
  \NNDF(t,\xv;\Areadrop,\Vi) \, \delta(\Mcurv-\conditionvar{\Mcurv}(\Areadrop)) \, \delta(\vol-\conditionvar{\vol}(\Areadrop)),
\end{equation}
where the mean curvature, $\conditionvar{\Mcurv}(\Areadrop)=\sqrt{4\pi/\Areadrop}$, 
and the volume of the drops, $\conditionvar{\vol}(\Areadrop)=\Areadrop^{3/2}/(6\sqrt{\pi})$, are two functions of the surface $\Areadrop$. 
Therefore, we can reduce the phase-space variables to the surface area $\Areadrop$ and the averaged  interface velocity $\Vi$. 
Let us also note that under the assumption of the incompressibility of the droplet and a uniform evaporation at the droplet surface,
$\Vi$ is now equal to $\Vp$, the velocity of the mass center of the droplet. 
In the following, we consider dimensionless variables, so that the surface area $\Areadrop\in[0,1]$. 
The GPBE equation satisfied by the new NDF, $\NNDF(t,\xv;\Areadrop,\Vp)$ can be derived from \eqref{eq:GBPE-NDF}:
\begin{equation}
\partial_t\NNDF+\nabla\cdot\{\Vp\NNDF\}+\nabla\cdot\{\dot{\Vp}\NNDF\}+\partial_{\Areadrop}\{\dot{\Areadrop}\NNDF\}=0.
\label{eq:WB-sphere}
\end{equation}

The droplet acceleration $\dot{\Vp}$ is equal to the sum of forces per unit of mass acting on the droplet. 
In the following, we consider that the only force acting on the droplets is the drag due to the carrier gas.
This can be modeled in a first time by a linear Stokes law:
\begin{equation}
\dot{\Vp}=\dfrac{\Uf_g-\Vp}{\Stokes(\Areadrop)},
\end{equation}
where $\Stokes(\Areadrop)=\theta\Areadrop$ is the Stokes number, which depends linearly on the surface area $\Areadrop$. 
The coefficient $\theta$ depends on the physical properties of the gas and the liquid. 

Since we consider spherical and non-deforming droplets, the evolution of the surface area is only due to evaporation. 
The Lagrangian-time derivative of the surface area is equal to the evaporation:
\begin{equation}
  \dot{\Areadrop}=\EvapRate(\Areadrop).
\label{eq:evap-rate}
\end{equation}
Assuming a $d^2$ law for the phenomenon, the evaporation rate is constant: $\EvapRate(S)=-\KEvap$. 

Finally, the NDF satisfies the following simplified and dimensionless form of the Williams-Boltzmann Equation (WBE) \cite{williams1958}:
\begin{equation}
  \partial_t\NNDF+\nabla_{\xv}\cdot\{\Vp\NNDF\}
  =
  -\nabla_{\Vp}\cdot\left\{\frac{\Uf_g-\Vp}{\theta\Areadrop}\NNDF\right\}+\KEvap\partial_{\Areadrop}\{\NNDF\}.
\end{equation}
Here, we would like to emphasize the fact that this equation is well known in statistical modeling of the spray and
has been derived here from a statistical gas-liquid interface approach.

Here again, the high-dimensional nature of the phase-space of the WBE makes its discretization 
unreachable for complex industrial applications within a reasonable CPU time. 
Since the exact NDF is not required and only macroscopic quantities of the flows are needed, 
an Eulerian moment method can be used to reduce its complexity. 
The derivation of a system of equations for the evolution of a set of some NDF moments often involves other moments 
of lower or higher order or even other terms which depend on the NDF. 
Therefore, we need to propose a closure for the size and velocity distributions. 
This closure step depends on the considered moments used to describe the spray. 
The set of moments and their order are chosen depending on the application 
and the desired accuracy to capture some specific spray feature: for example, 
the droplet segregation or the polydispersion effect on the evaporation \cite{sabat2016,vie2012ftc}.

First, we close the NDF in the velocity direction. The main issue in modeling the velocity distribution in an Eulerian framework 
is the particle trajectory crossing (PTC), occurring especially for inertial droplets \cite{vie2015,sabat2016}. 
For accurate modeling of the PTC, one could use high order velocity-moment closure, such as the anisotropic Gaussian velocity distribution 
\cite{vie2015,sabat2016}. In the present work, we consider a simplified model based on a monokinetic assumption \cite{deChaisemartin2009}:
\begin{equation}
  \NNDF(t,\xv;\Areadrop,\Vp)=\ndf(t,\xv,\Areadrop) \, \delta(\Vp-\Up(t,\xv)),
\end{equation}
so that at a given point in space and time, all the particles have the same velocity. 
This assumption is valid for low inertia droplets ($\Stokes(\Areadrop)<1$), 
when the droplet velocities are rapidly relaxed to the local gas velocity and droplets do not experience too much crossing. 
Then, from equation \eqref{eq:WB-sphere} we derive the following semi-kinetic equation:
\begin{equation}
  \begin{array}{rcl}
    \partial_t\ndf+\partial_{\xv}\cdot\left(\ndf\Up\right)&=&\KEvap\,\partial_{\size}\ndf,\\
    \partial_t\ndf\U+\partial_{\xv}\cdot\left(\ndf\Up\otimes\Up\right)&=&
    \KEvap\,\partial_{\size}\left(\ndf\Up\right)+\ndf\dfrac{\Ug-\Up}{\Stokes(\size)}.
  \end{array}
  \label{eq:semi-kinetic}
\end{equation}
This system describes the time evolution of the size distribution $\ndf(t,\xv,\Areadrop)$ due 
to the transport in the physical space and evaporation and drag source terms. 

\subsection{Fractional size moments and relation to the interface average geometry}
\label{sect:HOGSM}

In the case of spherical droplets, the averaged interfacial quantities 
(volume fraction, surface area density and Gauss and mean curvatures), 
that have been expressed as moments of the NDF $\NDF(t,\xv;\Mcurv,\Areadrop,\vol,\Vi)$ 
in equations \eqref{eq:Geom-NDF} and \eqref{eq:Moments-NDF}, 
can now be expressed as moments of the simplified NDF $\ndf(t,\xv,\Areadrop)$ as follows:
\begin{equation}
  \begin{array}{rcl}
    \Sigmad\widetilde{\Gcurv}_d &=&4\pi \mom_0,\\
    \Sigmad\widetilde{\Hcurv}_d &=&2\sqrt{\pi} \mom_{1/2},\\
    \Sigmad &=& \mom_1,\\
    \Alphad &=& \ds{\frac{1}{6\sqrt{\pi}}\mom_{3/2}},
  \end{array}
\end{equation}
where a fractional size-moment is given by:
\begin{equation}
\label{eq:frac-mom-def}
\mom_{k/2}(t,\xv)=\int_0^1\Areadrop^{k/2}\ndf(t,\xv,\Areadrop)d\Areadrop.
\end{equation}
Now, from the semi-kinetic equation \eqref{eq:semi-kinetic} we derive a high order fractional moment model. 
This model gives the evolution of the mean geometrical interfacial variables for spherical and polydisperse droplets, 
involving transport, evaporation and drag force:
\begin{equation}
\left\{
 \begin{array}{lclcl}
  \partial_t\mom_0    &+& \partial_{\xv}\cdot\left(\mom_0\Up\right)&=&-\KEvap\ns(t,\xv,\size=0) ,\\
  \partial_t\mom_{1/2}&+& \partial_{\xv}\cdot\left(\mom_{1/2}\Up\right)&=&-\dfrac{\KEvap}{2}\mom_{-1/2},\\[4pt]
  \partial_t\mom_{1}  &+& \partial_{\xv}\cdot\left(\mom_{1}\Up\right)&=&-\KEvap \mom_{0},\\[4pt]
  \partial_t\mom_{3/2}&+& \partial_{\xv}\cdot\left(\mom_{3/2}\Up\right)&=&-\dfrac{3\KEvap}{2}\mom_{1/2},\\[4pt]
  \partial_t\left(\mom_{1}\U\right)&+& \partial_{\xv}\cdot\left(\mom_{1}\Up\otimes\Up\right)
    &=&-\KEvap \mom_{0}\Up+\mom_0\dfrac{\U_g-\Up}{\theta},
 \end{array}\right.
 \label{eq:EMSMG}
\end{equation}
where the pointwise disappearance flux $-\KEvap\ndf(t,\xv,\size=0)$ at size zero and the negative order moment 
$\mom_{-1/2}=\int_{0}^{1}\size^{-1/2}\ndf(t,\size)d\size$ depend on the size distribution, which is an unknown function in this model.

\subsection{Closure relations for the fractional moments model}

In the following, we use a smooth reconstruction of the size distribution through the maximisation of Shannon entropy 
\cite{mead84} to close the system of equation. This reconstruction has been already used in the EMSM model \cite{kah12,massot2010} 
and CSVM model \cite{vieJCP2012} and shows a high capacity to model the polydispersion effect on the evaporation 
and the drag force response of droplets compared to the Multi-fluid approach \cite{laurent2015}. 
In the case of fractional moments \eqref{eq:frac-mom-def}, the reconstruction by entropy maximization reads:
\begin{equation}
  \max\left\{ \SEntropy[n]\!=\!-\!\int_0^1\!\ns(\size)\ln(\ns(\size))d\size\right\},\quad \text{s. t.} \quad
  \mom_{k/2}\!=\!\int_0^1\!\size^{k/2}\ns(\size)d\size,
  \, k\!=\!0\hdots N,
  \label{eq:ME_optimization_problem}
\end{equation}
The existence and uniqueness of such a reconstruction have been proved in \cite{Essadki-SIAM} and
the reconstructed NDF has the following form:
\begin{equation}
  \nME(\size)=\exp\left(-(\lambda_0+\lambda_1\size^{1/2}+\lambda_2\size+\lambda_3\size^{3/2})\right).
\end{equation}
To determine the $\lambda_k$ coefficients, we use a similar algorithm as the one used for integer moment \cite{mead84}.\\

Fractional size moments can be seen as a first and simple model that we can derive successfully from the 
statical gas-liquid interface approach introduced in \ref{sect:discrte-SDF}. In this model, we suppose a spherical 
form of droplets which allows to simplify the system of equations \eqref{eq:geometrical_moments_equations}, and thus obtain
the closed system \eqref{eq:EMSMG} using the reconstruction of the size distribution through entropy maximization. 
In this case, droplet deformation is neglected and the evaporation is the only source term acting on 
the size variation. So on, the actual model can be used for polydisperse evaporating sprays, but with 
more capacity to be coupled, in future work, with a model of the form \eqref{eq:geometrical_moments_equations} to describe the gas-liquid interface.

The numerical resolution of system \eqref{eq:EMSMG} may be achieved through an operator splitting algorithm \cite{doisneau14,descombes14}, 
that integrates independently the transport part (left-hand side of system \eqref{eq:EMSMG}) and the source terms (right-hand side of 
system \eqref{eq:EMSMG}). The transport part is solved using a kinetic scheme \cite{deChaisemartin2009} 
adapted for fractional moments. A new solver for the source terms (evaporation and drag force) is proposed in \cite{Essadki-SIAM}. 
The proposed numerical schemes ensure the realizability of the moments and show a high accuracy for evaporation effects at 0D 
and 1D transport \cite{Essadki-SIAM}. 
The new model was also compared to the EMSM model for a 2D evaporating case \cite{Essadki-SIAM,OGST}. 
Finally in \cite{OGST}, a 3D adaptive mesh refinement simulation compares droplet segregation 
with a Lagrangian simulation for non evaporating droplets in the presence of a 
frozen  homogeneous isotropic turbulent gas.

\section{From separated to disperse phases: a generalized statistical approach through spatial averaging.}

In section \ref{sect:discrte-SDF}, we have drawn a link between the NDF and the discrete SDF in the context of a discrete formalism. 
However, the discrete SDF is valid only for the disperse phase and its 
definition supposes that we are able to isolate the droplets/ligaments \eqref{eq:DSDF} in a certain manner. 
Therefore, it is not 
yet obvious that we can link a statistical description of the gas-liquid interface with a  statistical description of the spray evolving under
Williams-Boltzmann's equation. In this part, we introduce a  
spatially averaged SDF, which is defined independently on the flow regime (disperse or separated phases). 
We show that the defined SDF degenerates to the discrete SDF \eqref{eq:DSDF}, when the liquid phase
is disperse and dilute. The objective of this task is twofold: first, to draw a clear link between a 
distribution on the surface, which describes general gas-liquid interfaces, and a distribution of a droplet number 
density. The second purpose is to use this definition to design a new tool and new
algorithms to analyze the gas-liquid interface of DNS computations. 

\subsection{Averaged interfacial quantities and appropriate phase space variables}
\label{sect:averg-sdf-definition}

The spatial averaging process is applied on the phase variables $\phasevar(t,\xv)$ and the instantaneous 
area concentration $\delta_I(t,\xv)$ separately for each realization. In this context, we define 
generally the Averaged SDF (ASDF) as follows:
\begin{equation}
  \ASDF(t,\xv;\phasevar)=<\aaverg{\delta_I}(t,\xv) \, \delta(\phasevar-\paverg{\phasevar}(t,\xv))>,
\label{eq:ASDF}
\end{equation}
where the averaged space variables $\paverg{\phasevar}(t,\xv)$ and the average area measure $\aaverg{\delta_I}(t,\xv)$ will be defined such that we satisfy the following properties:

\begin{enumerate}
  \item \label{item:requirement1} 
    The spatial average of the surface area does not lead to the spreading of the interface, such 
    that the 
    interface thickness remains zero for one realization, \textit{i.e.} $\aaverg{\delta_I}(t,\xv)=0$ for $\xv\notin\totalSurf(t)$.
  \item \label{item:requirement2} 
    The new distribution preserves the space integral of the  first  order moments of the classical SDF
    as in Proposition \ref{prop:TopologicalMoments}. This property reads:
    \begin{equation}
      \int_{\xv\in\Domain}\int_{\phasevar\in\phasespace}
        \phasevar^{\lbold}\ASDF(t,\xv;\phasevar)\dint{5}{\phasevar} 
      \dint{3}{\xv}
      =
      \int_{\xv\in\Domain}\int_{\phasevar\in\phasespace}
        \phasevar^{\lbold}\SDF(t,\xv;\phasevar)
      \dint{5}{\phasevar} \dint{3}{\xv}
    \label{eq:requirement2}
    \end{equation}
    where $\lbold\in \{0,1\}^5, \;l_1+\dots+l_5\leq 1$ and $\phasevar=\{\phasevar_1,\hdots,\phasevar_5 \}$.
    We recall that this property ensures the possibility to express the expected-mean interfacial quantities 
    $(\Sigma(t,\xv),\widetilde{\Mcurv}(t,\xv),\widetilde{\Gcurv}(t,\xv))$ as moments of the new averaged SDF.
  \item \label{item:requirement3}
    The new SDF can be related to the discrete SDF, when the domain contains a dilute disperse phase, 
    such that the larger particle diameter is smaller than the inter-particle distance. 
    We express this relation for a volume space $\subdomain\subset\Domain$, 
    where the border of the domain does not cross any particle, as follows:
    \begin{equation}
      \int_{\xv\in\subdomain}\ASDF(t,\xv;\phasevar)\dint{3}{\xv}=\int_{\xv\in\subdomain}\SDF^{d}(t,\xv;\phasevar)\dint{3}{\xv}.
    \end{equation}
    In these conditions, we can relate the spatial-averaged SDF to the NDF of the particles 
    in the same way as in section \ref{sec:link-discrete-ndf}.
\end{enumerate}


To fulfill the first requirement \ref{item:requirement1}, we propose to define the averaged quantities as follows:

\begin{equation}
\aaverg{\delta_I}(\xv)=\normalizeASDF(t,\xv)\delta_I(t,\xv),
\label{eq:aaverg}
\end{equation} 
\begin{equation}
  \paverg{\phasevar}(t,\xv) =
  \begin{cases}
    \dfrac{1}{\normalizeASDF(t,\xv)} \int_{\yv \in\Domain} w(\xv;\yv-\xv) \, \delta_I(t,\yv)\phasevar(t,\yv)\dint{3}{\yv}, & \text{if $\normalizeASDF(t,\xv)>0$} \\
    \\
    \text{Not defined} & \text{otherwise}
  \end{cases}
  \label{eq:paverg}
\end{equation}
and 
\begin{equation}
\normalizeASDF(t,\xv)=\int_{\yv \in\Domain}w(\xv;\yv-\xv) \, \delta_I(t,\yv)\dint{3}{\yv},
\label{eq:normalizeASDF}
\end{equation}
where $w(\xv;\rv=\yv-\xv)$ is a convolution kernel.
The function $w(\xv;\rv)$ is chosen such that it vanishes for large $||\rv||>h$, 
where $h>0$ is a characteristic spatial length scale, independent from the position $\xv$. The case where $\normalizeASDF(t,\xv)=0$, the averaged interfacial variables are not defined and can take any real value, since no interface exists in the vicinity of the position $\xv$.

The kernel function is designed such that the SDF \eqref{eq:ASDF}, satisfies the two remaining requirements (\ref{item:requirement2} and \ref{item:requirement3}).


For the second requirement \ref{item:requirement2}, we can show that equation \eqref{eq:requirement2} is satisfied if we have, for any $\yv\in\Domain$:
\begin{equation}
\int_{\xv\in\Domain}w(\xv,\yv-\xv)\delta_I(t,\xv)\dint{3}{\xv}=1
\label{eq:statis-requirement2}
\end{equation}
Finally, for the third requirement property \ref{item:requirement3}, we look for a spatial-average 
of the interfacial variables \eqref{eq:paverg} that can degenerate to an interfacial average over 
one particle surface \eqref{eq:average-over-p-surf-implicit}, 
when the inter-particle distance is much larger than the particles diameters. 
We can reach this goal by first choosing a length scale $h$ larger than the maximum of particle diameters and 
smaller than the inter-particle distance. Then we propose to use the following kernel function, 
which satisfies the condition \eqref{eq:statis-requirement2}:
\begin{equation}
w(\xv;\rv)
=\dfrac{\One_{||\rv'||<h}(\rv)}{\int_{\yv'\in\Volh{\xv+\rv}} \delta_I(\yv')\dint{3}{\yv}}
 \label{eq:ASDF-requirement2_3}
\end{equation}
where $\Volh{\xv}=\{\yv \in \Domain;\,\,||\xv-\yv||<h\}$. Therefore, $w(\xv,\rv)$ represents the inverse of the surface area included in $\Volh{\xv+\rv}$. In the case where no interface exists in the volume $\Volh{\xv+\rv}$, $w(\xv,\rv)$ is not defined and can take any real value without affecting the definition of the averaged SDF.

\subsection{Generalized Number density function}
Using the averaged SDF defined in the previous section, we can define the following distribution function:
\begin{definition}
	We define the Generalized Number Density Function (GNDF) as follows:
	\begin{equation}
		\GNDF(t,\xv;\phasevar)=\dfrac{|\Gcurv|}{4\pi}\ASDF(t,\xv;\phasevar).
	\label{eq:def-GNDF-abs}
	\end{equation}
\end{definition}
Let us now consider a disperse phase made of non-spherical particles (droplets or bubbles). By considering a sufficiently 
large length scale averaging $h$ (for example larger than the particle diameter), 
we can ensure the positivity of the averaged Gauss curvature. In this case, we can simply write the GNDF 
as follows:
\begin{equation}
	\GNDF(t,\xv;\phasevar)=\dfrac{\Gcurv}{4\pi}\ASDF(t,\xv;\phasevar).
\label{eq:def-GNDF}
\end{equation}
In the following, we consider a dilute disperse phase, such that the inter-distance between the droplets are larger than the droplet diameter. 
Using a length scale $h$ larger than the droplet diameter, we show the link between the GNDF and the classic NDF, 
as it was stated in the third requirement \ref{item:requirement3} of the last section. 
To conduct this derivation, we use the notations of section \ref{sect:discrte-SDF}, 
used to introduce the discrete formalism of the particles. 
\begin{proposition}\label{proposition:general_link}
	The two integrals over a space volume $\subdomain$, 
	whose borders do not cross any particle, of the GNDF $\GNDF(t,\xv;\phasevar)=	\dfrac{\Gcurv}{4\pi}\ASDF(t,\xv;\phasevar)$ 
	and the NDF $\NDF(t,\xv;\phasevar)$ are equal:
	\begin{equation}
		\int\limits_{\xv\in\subdomain}\GNDF(t,\xv;\phasevar)\dint{3}{\xv}= \int_{\xv\in\subdomain}\NDF(t,\xv;\phasevar)\dint{3}{\xv}.
	\end{equation}
\end{proposition}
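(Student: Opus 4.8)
The plan is to reduce the assertion to the third design requirement~\ref{item:requirement3} of the averaging kernel and then to invoke the Gauss--Bonnet link. Since $\Gcurv$ is a phase-space coordinate, hence constant with respect to the spatial variable $\xv$, I would first factor it out of the spatial integral,
\begin{equation}
  \int_{\xv\in\subdomain}\GNDF(t,\xv;\phasevar)\dint{3}{\xv}
  =\frac{\Gcurv}{4\pi}\int_{\xv\in\subdomain}\ASDF(t,\xv;\phasevar)\dint{3}{\xv},
\end{equation}
so that the entire difficulty is concentrated in the spatial integral of the ASDF over $\subdomain$.

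The core step is to show that, with the explicit kernel~\eqref{eq:ASDF-requirement2_3} and a length scale $h$ chosen larger than the maximal droplet diameter but smaller than the inter-particle distance, this integral coincides with the spatial integral of the discrete SDF~\eqref{eq:DSDF}, which is precisely requirement~\ref{item:requirement3}. I would work on a single realization first. As $\aaverg{\delta_I}=\normalizeASDF\,\delta_I$ concentrates the integrand on the interface $\totalSurf$, every contribution comes from a surface point $\xv$ lying on some particle $\portsurf{k}$ with $\xv_k\in\subdomain$; this is where the hypothesis that the boundary of $\subdomain$ cuts no particle enters. For such a point the scale separation guarantees that the ball $\Volh{\xv}$ both misses every other particle and engulfs $\particle_k$ entirely, so the denominator $\int_{\yv'\in\Volh{\xv+\rv}}\delta_I\,\dint{3}{\yv'}$ in~\eqref{eq:ASDF-requirement2_3} equals the total particle area $S_k$ whenever the numerator does not vanish. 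Substituting this value into~\eqref{eq:normalizeASDF} and~\eqref{eq:paverg} makes the two self-referential integrals telescope to $\normalizeASDF(t,\xv)=1$ and $\paverg{\phasevar}(t,\xv)=\Aver{\phasevar}_k$, thereby recovering the per-object interface average~\eqref{eq:average-over-p-surf-implicit}.

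With these simplifications the single-realization spatial integral collapses to $\sum_{\{k;\,\xv_k\in\subdomain\}}\int_{\portsurf{k}}\delta(\phasevar-\Aver{\phasevar}_k)\,dS=\sum_{\{k;\,\xv_k\in\subdomain\}}S_k\,\delta(\phasevar-\Aver{\phasevar}_k)$. Taking the ensemble average and comparing with the discrete SDF~\eqref{eq:DSDF} integrated over $\subdomain$ establishes requirement~\ref{item:requirement3}. Reinserting this equality and using the relation $\frac{\Gcurv}{4\pi}\SDF^{d}(t,\xv;\phasevar)=\NDF(t,\xv;\phasevar)$ proved in Section~\ref{sec:link-discrete-ndf}, I obtain
\begin{equation}
  \frac{\Gcurv}{4\pi}\int_{\xv\in\subdomain}\ASDF(t,\xv;\phasevar)\dint{3}{\xv}
  =\int_{\xv\in\subdomain}\frac{\Gcurv}{4\pi}\SDF^{d}(t,\xv;\phasevar)\dint{3}{\xv}
  =\int_{\xv\in\subdomain}\NDF(t,\xv;\phasevar)\dint{3}{\xv},
\end{equation}
which is the claim.

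I expect the main obstacle to lie in the self-consistency of the kernel normalization rather than in the final algebra. The delicate point is that $w$ in~\eqref{eq:ASDF-requirement2_3} is itself defined through an integral of $\delta_I$ over a sliding ball, so one must justify rigorously that, under the dilute scale-separation hypothesis, this denominator is genuinely constant and equal to $S_k$ on the whole relevant support; only then do the nested integrals defining $\normalizeASDF$ and $\paverg{\phasevar}$ simplify cleanly. A secondary care point is that the integrand is a distribution in $\phasevar$, namely a Dirac mass composed with the field $\paverg{\phasevar}$, so the reduction to the interface average must be understood in the weak sense, exactly as in the proof of Proposition~\ref{prop:TopologicalMoments}.
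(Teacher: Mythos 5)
Your argument is correct and is essentially the paper's own proof, reorganized: the paper inlines everything into a single chain of equalities (splitting $\subdomain$ into the $h$-neighbourhoods $\omega_k$ of the particles, replacing $\Gcurv$ by $\Aver{\Gcurv}_k$ under the Dirac mass, and applying Gauss--Bonnet mid-computation), whereas you first establish requirement~\ref{item:requirement3}, namely $\int_{\subdomain}\ASDF=\int_{\subdomain}\SDF^{d}$, via the same kernel-telescoping under scale separation and then compose with the identity $\frac{\Gcurv}{4\pi}\SDF^{d}=\NDF$ of Section~\ref{sec:link-discrete-ndf}. The technical core --- that $\normalizeASDF\equiv1$ and $\paverg{\phasevar}=\Aver{\phasevar}_k$ on each $\portsurf{k}$ when $h$ exceeds the droplet diameters but stays below the inter-particle distance, combined with the border condition and Gauss--Bonnet --- is identical, and you in fact spell out the telescoping of the self-referential kernel normalization that the paper leaves implicit.
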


\begin{proof}
	\begin{equation}
		\begin{array}{rcl}
			\int\limits_{\xv\in\subdomain} \Gcurv \ASDF(t,\xv;\phasevar) \dint{3}{\xv}
				&=& <\sum\limits_{k=1}^{\Nmax} \left\{\int\limits_{\xv\in\omega_k}\Gcurv \, \aaverg{\delta_I}(t,\xv) \, \delta(\phasevar-\Aver{\phasevar}(t,\xv)\dint{3}{\xv}\right\}
				+ \int\limits_{\xv\in\subdomain-\bigcup\limits_{k=1}^{N_{max}}\omega_k} \Gcurv \, \aaverg{\delta_I}(t,\xv) \, \delta(\phasevar-\Aver{\phasevar}(t,\xv)\dint{3}{\xv}>,\\[11pt]
 				&=&<\sum\limits_{k=1}^{\Nmax} \left\{ \int\limits_{\xv\in\omega_k}\Aver{\Gcurv}_k \, \aaverg{\delta_I}(t,\xv) \, \delta(\phasevar-\Aver{\phasevar}_k)\dint{3}{\xv} \right\}>
				+ 0,\\[11pt]
 				&=&<\sum\limits_{k\in \{ j,\xv_j\in\subdomain\}} \left\{\Aver{\Gcurv}_k \, S_k \, \delta(\phasevar-\Aver{\phasevar}_k) \right\}>,\\[11pt]
 				&=&4\pi<\sum\limits_{k\in\{j,\xv_j\in\subdomain\}} \left\{\delta(\phasevar-\Aver{\phasevar}_k) \right\}>,\\[11pt]
 				&=&4\pi \int_{\xv\in\subdomain}\NDF(t,\xv;\phasevar)\dint{3}{\xv},
		\end{array}
	\end{equation} 
	where $\omega_k=\{\xv\in\subdomain,\,\,\min\limits_{\yv\in\Sigma_k}{||\xv-\yv||\leq h}\}$ is a space volume that 
	includes the droplet $\particle_k$. To pass from the third to the fourth equality, we used Gauss-Bonnet formula and we suppose that all droplets are homeomorph to a sphere.
\end{proof}

 This result shows a general interpretation for the NDF, which is not restricted to a discrete formalism.
We can derive in the same way as in the section \ref{sect:discrte-SDF} the GPBE satisfied by the GNDF:
\begin{equation}
	\partial_t \GNDF+\nabla_{\xv}\cdot\{\Vi\GNDF\}+\nabla_{\phasevar}\cdot\{<\dot{\phasevar}>_c\GNDF\}=<\Breakup>_c\GNDF.
\label{eq:GBPE-GNDF-ST}
\end{equation}
where $\Breakup=\dot{\Stretch}+\frac{\dot{\Gcurv} }{\Gcurv}$ is a source term of topology variation  and which can be related mainly to the breakup and coalescence.\\

Now, it is more obvious that the SDF and NDF are strongly related. The SDF can be seen as  the origin of the NDF as it was discussed in \cite{romain2017}. The averaged SDF allows to consider a generic statistical description of the gas-liquid interface and which can also be related directly to the NDF in the case of dilute disperse phase as it is stated in the proposition \ref{proposition:general_link}. Indeed, by using the spatial averaging procedure, we are able to isolate droplets in dilute disperse phase region, and thus we can compute the NDF from the averaged SDF.

\section{Algorithms and techniques for the numerical computation of the curvatures and the SDF}
\label{sect:result-algorithm}

In this part, we use the results of the previous section to design a new numerical tool dedicated to post-processing 
DNS two-phase flows simulations. Such tools allow to enhance the analysis of the gas-liquid interface evolution.
The DNS computations are performed using the ARCHER code 
\cite{menard07}, where a combined VOF and level-set approach is used to capture the interface and a ghost method is 
applied to represent accurately the jump of variables across the liquid-gas interface. 

At first, some numerical tools to compute the curvatures and the surface area of the gas-liquid interface present in each cell
are already available in ARCHER code \cite{romain2017}. 
Unfortunately, the surface area and the curvatures being computed separately, the Gauss-Bonnet formula can not be numerically satisfied. 
For this reason, we choose to compute the curvatures and the 
surface element areas by using the algorithm presented thereafter.
Such algorithm will allow to obtain the number density function 
of droplets in DNS simulations by using simple computations on the surface and without the need of an algorithm 
that isolates the droplets \cite{kang}.

The new algorithm is implemented independently from the ARCHER code, 
from which only the distance function (level-set) data is used. 
The different algorithm steps, to compute the local curvatures and surface elements at the interface, are summarized as follows:
\begin{itemize}
	\item The gas-liquid interface is discretized with a 2D triangulated mesh using the Marching Cubes algorithm 
    \cite{marching_cubes_jgt}. 
	  This 	algorithm takes the $3D$ level-set scalar field as an input 
    and returns a 2D meshed surface. This mesh is described with two arrays: 
    the array of vertices, $\Vertsarray$ of dimension $n_v\times3$, and the array of faces (defined by three vertices),
  	$\Facearray$ of dimension $n_f\times 3$ which defines the connectivity between vertices. 
  	In this work, we use the Python package sckimage \cite{sckimage} to triangulate the surface.
	\item For each vertex $\vertex\in[1,n_v]$, $\Neighbour(\vertex)$ denotes the set of neighbors of $\vertex$: 
	  $\vertex'$ is a neighbor of $\vertex$ if both vertices share a same face $\face\in[1,n_f]$. 
    By abuse of notation, $\Neighbour(\vertex)$ also denotes the set of faces $\vertex$ belongs to and for 
    each $\face\in\Neighbour(\vertex)$, $\normal(\face)$ denotes the normal vector to the face, oriented toward 
    the gas phase.  
  \item In the neighbourhood of $\vertex$, we define its dual cell $\dual(\vertex)$ 
    as a mix between the Voronoi cell (figure \ref{fig:voronoi_cell}) and the barycentric cell 
    (figure \ref{fig:barycentric_cell}) around $\vertex$, as proposed in \cite{discret_curvature}. 
    Simply, for each triangular face $\face$ containing $\vertex$, 
    the Voronoi contour is considered if all the angle are acute. The barycentric contour is considered otherwise. 
    Also, $\theta_{\face}$ denotes the angle at vertex $\vertex$ in $\face$, see \ref{fig:angle}, and 
    $A[\face,\vertex]=\area(\face\cap\dual(\vertex))$ is the area of the dual cell within the face $\face$. 
    Then, the discrete local area $A[\vertex]$, normal $\normal[\vertex]$ and Gauss curvature $\Gcurv[\vertex]$ 
    can be computed:
    \begin{equation}\label{eq:DiscreteAreaNormalAndGauss}
      \begin{array}{c}
			\ds
      A[\vertex]=\sum_{\face\in\Neighbour(\vertex)}A[\face,\vertex], 
      \quad\quad
      \normal[\vertex] = \frac{\sum\limits_{\face\in\Neighbour(\vertex)} A[\face,\vertex]\normal(\face)}
                             {\left|\left|\sum\limits_{\face\in\Neighbour(\vertex)} A[\face,\vertex]\normal(\face)\right|\right|},
      \\ \\
      \ds \Gcurv[\vertex]=\left(2\pi-\sum_{\face\in\Neighbour(\vertex)} \theta_{\face}\right)/A[\vertex].
      \end{array}
    \end{equation}
  \item Finally, for each neighbour $\vertex'$ of $\vertex$, $\alpha_{\vertex,\vertex'}$ and $\beta_{\vertex,\vertex'}$
    denote the two angles facing the edge $[\vertex,\vertex']$, without distinction, 
    as illustrated in figure \ref{fig:angle}. Now, the local discrete mean curvature reads:
    \begin{equation}\label{eq:DiscreteMean}
      \Mcurv[\vertex] = \frac{1}{2A[\vertex]} \sum_{\vertex'\in\Neighbour(\vertex)}
        \left(\cot(\alpha_{\vertex',\vertex})+\cot(\beta_{\vertex',\vertex})\right)
        \left(\Vertsarray[\vertex]-\Vertsarray[\vertex'])\right)\cdot\normal[\vertex].
    \end{equation}
\end{itemize}


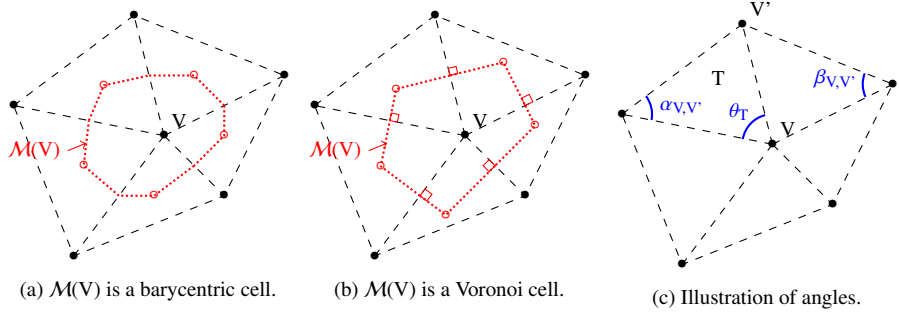
\begin{figure}
\begin{minipage}{.33\linewidth}
  \centering
  \subfloat[$\mathcal{M} (\vertex)$ is a barycentric cell. \label{fig:barycentric_cell}]{\footnotesize
\begin{tikzpicture}[scale=0.4]

\coordinate (o) at (5,4);

\coordinate (a) at (0,5);
\coordinate (b) at (4,8);
\coordinate (c) at (9,6);
\coordinate (d) at (7,2);
\coordinate (e) at (2,0);

\coordinate (oa) at (2.5,4.5);
\coordinate (ob) at (4.5,6);
\coordinate (oc) at (7,5);
\coordinate (od) at (6,3);
\coordinate (oe) at (3.5,2);

\coordinate (oab) at (3,5.67);
\coordinate (obc) at (6,6);
\coordinate (ocd) at (7,4);
\coordinate (ode) at (4.67,2);
\coordinate (oea) at (2.33,3);

\coordinate (oo0) at (2.41, 3.75);
\coordinate (oo1) at (1.8, 3.5);

\node at (o) {$\bullet$};
\node[above right] at (o) {V};

\node at (a) {$\bullet$};
\node at (b) {$\bullet$};
\node at (c) {$\bullet$};
\node at (d) {$\bullet$};
\node at (e) {$\bullet$};

\draw[dashed] (a) -- (b) -- (c) -- (d) -- (e) -- cycle;
\draw[dashed] (o) -- (a);
\draw[dashed] (o) -- (b);
\draw[dashed] (o) -- (c);
\draw[dashed] (o) -- (d);
\draw[dashed] (o) -- (e);

\node[red] at (oab) {$\circ$};
\node[red] at (obc) {$\circ$};
\node[red] at (ocd) {$\circ$};
\node[red] at (ode) {$\circ$};
\node[red] at (oea) {$\circ$};

\draw[red, densely dotted, thick] (oa) -- (oab) -- (ob);
\draw[red, densely dotted, thick] (ob) -- (obc) -- (oc);
\draw[red, densely dotted, thick] (oc) -- (ocd) -- (od);
\draw[red, densely dotted, thick] (od) -- (ode) -- (oe);
\draw[red, densely dotted, thick] (oe) -- (oea) -- (oa);

\draw[red, ->] (oo1) -- (oo0);
\node[red, left] at (oo1) {$\mathcal{M}$(V)};

\end{tikzpicture}}
\end{minipage}%
\begin{minipage}{.33\linewidth}
  \centering
  \subfloat[$\mathcal{M} (\vertex)$ is a Voronoi cell.  \label{fig:voronoi_cell}]{\footnotesize
\begin{tikzpicture}[scale=0.4]
\coordinate (o) at (5,4);

\coordinate (a) at (0,5);
\coordinate (b) at (4,8);
\coordinate (c) at (9,6);
\coordinate (d) at (7,2);
\coordinate (e) at (2,0);

\coordinate (oa) at (2.5,4.5);
\coordinate (ob) at (4.5,6);
\coordinate (oc) at (7,5);
\coordinate (od) at (6,3);
\coordinate (oe) at (3.5,2);

\coordinate (oa1) at (2.8,4.7);
\coordinate (ob1) at (4.7,6.3);
\coordinate (oc1) at (7.13,5.37);
\coordinate (od1) at (5.6,3);
\coordinate (oe1) at (3.9,2.05);

\coordinate (oab) at (2.71,5.55);
\coordinate (obc) at (6.28,6.44);
\coordinate (ocd) at (7.33,4.33);
\coordinate (ode) at (4.36,1.36);
\coordinate (oea) at (2.2,2.98);

\coordinate (oo0) at (2.41, 3.75);
\coordinate (oo1) at (1.8, 3.5);

\node at (o) {$\bullet$};
\node[above right] at (o) {V};

\node at (a) {$\bullet$};
\node at (b) {$\bullet$};
\node at (c) {$\bullet$};
\node at (d) {$\bullet$};
\node at (e) {$\bullet$};

\draw[dashed] (a) -- (b) -- (c) -- (d) -- (e) -- cycle;
\draw[dashed] (o) -- (a);
\draw[dashed] (o) -- (b);
\draw[dashed] (o) -- (c);
\draw[dashed] (o) -- (d);
\draw[dashed] (o) -- (e);

\node[red] at (oab) {$\circ$};
\node[red] at (obc) {$\circ$};
\node[red] at (ocd) {$\circ$};
\node[red] at (ode) {$\circ$};
\node[red] at (oea) {$\circ$};

\draw[red, densely dotted, thick] (oa) -- (oab) -- (ob);
\draw[red, densely dotted, thick] (ob) -- (obc) -- (oc);
\draw[red, densely dotted, thick] (oc) -- (ocd) -- (od);
\draw[red, densely dotted, thick] (od) -- (ode) -- (oe);
\draw[red, densely dotted, thick] (oe) -- (oea) -- (oa);

\draw[red, ->] (oo1) -- (oo0);
\node[red, left] at (oo1) {$\mathcal{M}$(V)};

\draw[red, rotate=-11.4] (oa) rectangle (oa1);
\draw[red, rotate=14] (ob) rectangle (ob1);
\draw[red, rotate=26.6] (oc) rectangle (oc1);
\draw[red, rotate=44.9] (od) rectangle (od1);
\draw[red, rotate=53.1] (oe) rectangle (oe1);

\end{tikzpicture}} 
\end{minipage}
\begin{minipage}{.33\linewidth}
  \centering
  \subfloat[Illustration  of angles.  \label{fig:angle}]{\footnotesize
\begin{tikzpicture}[scale=0.4]
\coordinate (o) at (5,4);

\coordinate (a) at (0,5);
\coordinate (b) at (4,8);
\coordinate (c) at (9,6);
\coordinate (d) at (7,2);
\coordinate (e) at (2,0);

\coordinate (oa) at (2.5,4.5);
\coordinate (ob) at (4.5,6);
\coordinate (oc) at (7,5);
\coordinate (od) at (6,3);
\coordinate (oe) at (3.5,2);

\coordinate (oa1) at (2.8,4.7);
\coordinate (ob1) at (4.7,6.3);
\coordinate (oc1) at (7.13,5.37);
\coordinate (od1) at (5.6,3);
\coordinate (oe1) at (3.9,2.05);

\coordinate (oab) at (2.71,5.55);
\coordinate (obc) at (6.28,6.44);
\coordinate (ocd) at (7.33,4.33);
\coordinate (ode) at (4.36,1.36);
\coordinate (oea) at (2.2,2.98);

\coordinate (oo0) at (2.41, 3.75);
\coordinate (oo1) at (1.8, 3.5);

\node at (o) {$\bullet$};
\node[above right] at (o) {V};

\node at (a) {$\bullet$};
\node at (b) {$\bullet$};
\node[above right] at (b) {V'};
\node at (c) {$\bullet$};
\node at (d) {$\bullet$};
\node at (e) {$\bullet$};
\node at (3.2,6.2) {T};

\draw[dashed] (a) -- (b) -- (c) -- (d) -- (e) -- cycle;
\draw[dashed] (o) -- (a);
\draw[dashed] (o) -- (b);
\draw[dashed] (o) -- (c);
\draw[dashed] (o) -- (d);
\draw[dashed] (o) -- (e);

\draw[blue, thick] (a) ++(-11:1) arc (-11:37:1)  (2.0,5.2) node {$\alpha_{\text{V,V'}}$};
\draw[blue, thick] (c) ++(159:1) arc (159:205:1) node[above left] {$\beta_{\text{V,V'}}$};
\draw[blue, thick] (o) ++(104:1) arc (104:170:1) (3.9,5.1) node {$\theta_{\text{T}}$};

\end{tikzpicture}}
\end{minipage}
 \caption{Neighboring vertices and surface elements $\mathcal{M} (\vertex)$ around the vertex $\vertex$.}
\label{fig:neighour_vertex}
\end{figure}
After computing the local geometrical quantities, we can evaluate the fine-grain localized SDF for one simulation, \textit{i. e.} one realization. 
In the following, the phase space variables 
are restricted to the two curvatures $\phasevar=(\Mcurv,\Gcurv) \in \phasespace$: we do not consider the 
interface velocity. This 2D space is discretized into $n_h\times n_g$ discrete sub-spaces, such that the 
discrete phase space variables are expressed as follows: 
\[
	\phasevar_{i,j}=(\Mcurv_i,\Gcurv_j)=\phasevar_{min}+(i\Delta\Mcurv,j\Delta\Gcurv), \quad (i,j)\in[1,n_h]\times[1,n_g],
\] 
where $\Delta\Mcurv=(\Mcurv_{max}-\Mcurv_{min})/n_h$, $\Delta\Gcurv=(\Gcurv_{max}-\Gcurv_{min})/n_g$ and the
subscripts $min$ and $max$ refer respectively to the minimum and maximum values obtained by 
\eqref{eq:DiscreteAreaNormalAndGauss} and \eqref{eq:DiscreteMean}. 
The integral of the fine-grain localized SDF over a subdomain $\subdomain\subset\Domain$ can be approximated by: 
\begin{equation}
	\int_{\xv\in\subdomain}\sdf(t,\xv;\phasevar_{i,j})d^3\xv\simeq\dfrac{1}{\Delta\Mcurv \Delta\Gcurv}\sum\limits_{\vertex=1}^{n_v}A[\vertex]
		\, I_{C_{i,j}}(\phasevar[\vertex]) \, I_{\subdomain}(\Vertsarray[\vertex])
\end{equation}
where $\sdf(t,\xv;\phasevar_{i,j})$ is the fine-grain SDF defined in \eqref{eq:SDF-implicit}, $I_{C_{i,j}}(.)$ is the characteristic function of the set $C_{i,j}$:
\[
	C_{i,j}=\{\phasevar', |\Mcurv'-\Mcurv_{i,j}|\leq \Delta\Mcurv/2\,\,\text{and}\,\,|\Gcurv'-\Gcurv_{i,j}|\leq \Delta\Gcurv/2\},
\] 
and $I_{\subdomain}(.)$ is 
the characteristic function of the sub-domain $\subdomain$. In the following, we take $\subdomain=\Domain$. Thus 
the last equation becomes:
\begin{equation}
	\int_{\xv\in\Domain}\sdf(t,\xv;\phasevar_{i,j})d^3\xv
		\simeq
		\dfrac{1}{\Delta\Mcurv \Delta\Gcurv}\sum\limits_{\vertex=1}^{n_v}A[\vertex] \, I_{C_{i,j}}(\phasevar[\vertex]).
\end{equation}
Using this numerical approximation, we define the numerical fine-grain SDF integrated in the whole 
domain as follows:
\begin{equation}
	\nsdf(\phasevar_{i,j})=\dfrac{1}{\Delta\Mcurv \Delta\Gcurv}\sum\limits_{\vertex=1}^{n_v}A[\vertex] \, I_{C_{i,j}}(\phasevar[\vertex]).
\end{equation}
To evaluate the averaged fine-grain SDF \eqref{eq:ASDF}, we first calculate the averaged area and curvatures 
at each vertex $\vertex\in[1,n_v]$. 
These averaged quantities depend on the length scale $h$. In the following, 
we take $h=(\averglen/2)\Delta x$, where $\Delta x$ is the size of the level-set computation cells 
and $\averglen$ is an integer value to be set by the user. We 
denote by  $\widetilde{A}^{\averglen}[\vertex]$ and  $\paverg{\phasevar}^{\averglen}[\vertex]$ the numerical
approximation of the averaged area and curvatures, defined in \eqref{eq:aaverg}, \eqref{eq:paverg} 
and \eqref{eq:ASDF-requirement2_3} and which are numerically
evaluated as follows:
\begin{equation}
	\begin{array}{rcl}
		\widetilde{A}^{\averglen}[\vertex]&=&\sum\limits_{\vertex'\in \Neighbour_{\averglen}(\vertex)}\dfrac{A[\vertex']}{S_k[\vertex']}A[\vertex],\\
		\paverg{\phasevar}^{\averglen}[\vertex]&=&\dfrac{1}{\widetilde{A}^{\averglen}[\vertex]}\sum\limits_{\vertex'\in \Neighbour_{\averglen}(\vertex)}\phasevar[\vertex']\dfrac{A[\vertex']}{S_{\averglen}[\vertex']}A[\vertex],\\
	\end{array}
\end{equation}
where $\Neighbour_k(\vertex)$ is the set of vertices $\vertex'$, such that 
$||\Vertsarray[\vertex]-\Vertsarray[\vertex']||_2\leq k/2\Delta x$ for $k\geq0$ 
and 
$$S_k[\vertex']=\sum\limits_{\vertex^{''}\in O_k(\vertex')}A(\vertex^{''}).$$

The numerical fine-grain averaged SDF is given by:
\begin{equation}
	\nasdf(\phasevar_{i,j})=\dfrac{1}{\Delta\Mcurv \Delta\Gcurv}\sum\limits_{\vertex=1}^{n_v}\widetilde{A}^{\averglen}[\vertex]I_{C_{i,j}}(\paverg{\phasevar}^{\averglen}[\vertex]).
\end{equation}
The GNDF can be then evaluated numerically as follows:
\begin{equation}
	\ngndf(\phasevar_{i,j})
		=
		\dfrac{1}{\Delta\Mcurv \Delta\Gcurv}\sum\limits_{\vertex=1}^{n_v}\dfrac{|\paverg{\Gcurv}^{\averglen}[\vertex]|}{4\pi}\widetilde{A}^{\averglen}[\vertex]I_{C_{i,j}}(\paverg{\phasevar}^{\averglen}[\vertex]).
		\label{eq:numerical_GNDF}
\end{equation}

We recall that the spatial averaged of curvatures and surface area measure, introduced in the section \ref{sect:averg-sdf-definition}, has been designed such that the first order moments of the SDF \eqref{eq:requirement2} are preserved. We can verify numerically this property as follows:
\begin{equation}
\begin{array}{lcccl}
\int_{\Domain}\Sigma(t,\xv)d\xv&\simeq&\sum\limits_{\vertex}A[\vertex]&=&
\sum\limits_{\vertex}\widetilde{A}^{\averglen}[\vertex],\\[8pt]
\int_{\Domain}\Sigma\widetilde{\Mcurv}(t,\xv)d\xv&\simeq&\sum\limits_{\vertex}\Mcurv[\vertex]A[\vertex]&=&\sum\limits_{\vertex}\paverg{\Mcurv}^{\averglen}[\vertex]\widetilde{A}^{\averglen}[\vertex],\\[8pt]
\int_{\Domain}\Sigma\widetilde{\Gcurv}(t,\xv)d\xv&\simeq&\sum\limits_{\vertex}\Gcurv[\vertex]A[\vertex]&=&\sum\limits_{\vertex}\paverg{\Gcurv}^{\averglen}[\vertex]\widetilde{A}^{\averglen}[\vertex].\\[8pt]
\end{array}
\end{equation}


\section{Assessment of the theoretical approach and DNS post-processing}

In this part, we use the ARCHER code to perform some two-phase flows direct numerical simulations. 
The algorithms described in the previous section will be used to post-process the 
level-set data in order to compute the  
fine-grain SDF and NDF for one simulation (one realization). Two numerical tests, with and without topological changes,  allow to test 
the algorithm and the capacity to find some geometrical and topological feature of the gas-liquid interface, depending on the average 
scale $h$ defined in the previous section.
\subsection{Droplets homeomorphic to spheres}
\label{sect:result-result}

In the first configuration, we consider five initial spherical water
droplets of radius $r\in[0.33 \text{ mm},1 \text{ mm}]$ injected at a velocity 5 m/s in a periodic domain 
of a size $1 \text{ cm } \times1\text{ cm }\times1\text{ cm }$ of initially still gas. Classical water/air properties are used here. The initial positions of
droplets centers are chosen randomly, with the constraint however that the inter-droplet distance $d_{drop}$
(computed from the droplet surface) is larger than $2r_{max}$, where initially $r_{max}=1 \text{ mm}$. 
Due to the difference between the gas and the droplet velocities, the shape of the droplets is deformed in time without break-up (the maximum Weber number is $We_{max}<1$).
In Figure \ref{fig:five_droplet_1}, 
we display an illustration of the droplets at two successive times. The color indication at the droplet surfaces shows 
an estimate of the mean curvature. This first case is envisioned in the framework of no topological changes.

\begin{figure}
\begin{minipage}{.4\linewidth}
  \centering
  \setlength{\fboxsep}{0pt}
  \subfloat[$t=8.54 \times 10^{-5}$s]{\includegraphics[width=0.95\linewidth]{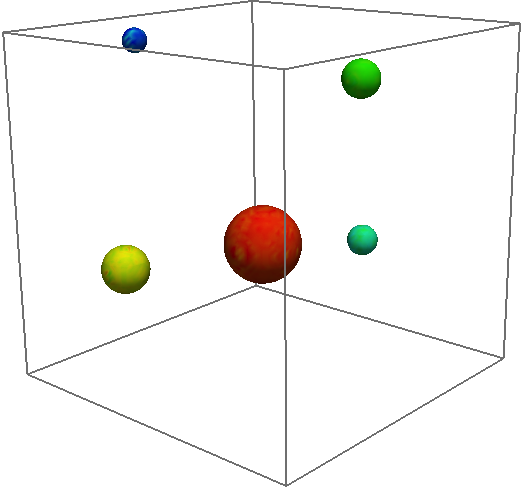}}
\end{minipage}%
\begin{minipage}{.4\linewidth}
  \centering
  \setlength{\fboxsep}{0pt}
  \subfloat[ $t= 7.624\times 10^{-3}$s]{\includegraphics[width=0.95\linewidth]{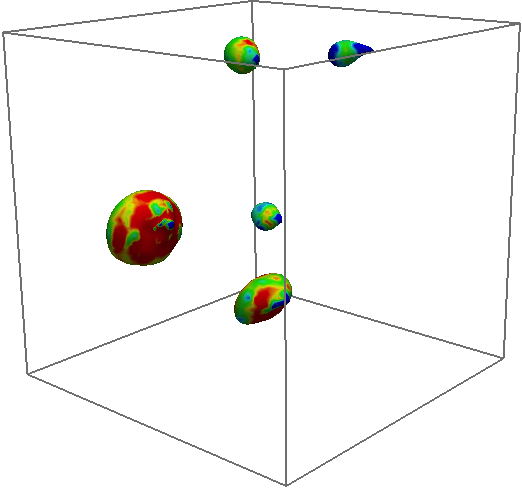}}
\end{minipage}
\begin{minipage}{.16\linewidth}
  \centering
  \subfloat{\includegraphics[width=0.7\linewidth]{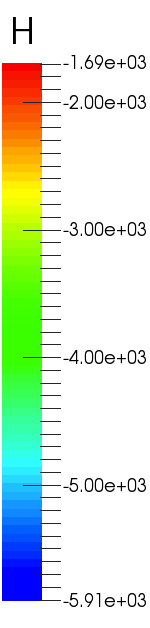}}
\end{minipage}
 \caption{Droplet surface colored according to the mean curvature values at the surfaces for two different times.}
\label{fig:five_droplet_1}
\end{figure}
In Figure \ref{fig:sdfG} (respectively \ref{fig:sdfH}), we plot the SDF for different averaged scales as function 
of Gauss (respectively  mean) curvature. The localized SDF, plotted with the blue line, corresponds to the 
length scale average $h=\Delta x/2$. In this case, the SDF is continuous and we can not identify the surface area 
contribution of each droplet.
But, by using larger scales of average ($h=25\Delta x/2$ and $h=55\Delta x/2$), we see that we can 
identify five peaks which correspond to the five droplets. 
Indeed, when $d_{drop}>h=55\Delta x/2 > r_{max}$, for any point $\xv$ situated at the interface of a droplet, 
the space volume $\Volh{\xv}=\{\yv \in \Domain;\,\,||\xv-\yv||<h\}$ contains that, and only that, droplet.
Thus, for each droplet we compute one average mean and Gauss curvatures. 
Consequently, in this case the averaged SDF becomes a sum of Dirac delta function.

In figure \ref{fig:ndfG} (respectively \ref{fig:ndfH}), we display the GNDF for different average scales. These distributions are obtained from the averaged SDFs using \eqref{eq:numerical_GNDF}. When the average scale  $h$ is larger than the droplet diameters, we obtain five peaks of value $1$. We can then count the right number of droplets for each averaged Gauss or mean curvature 
thanks to Gauss-Bonnet formula with a proper evaluation of the curvature and using our averaging approach. This assesses the proposed theoretical and algorithmic approach  within the framework of no topological change.


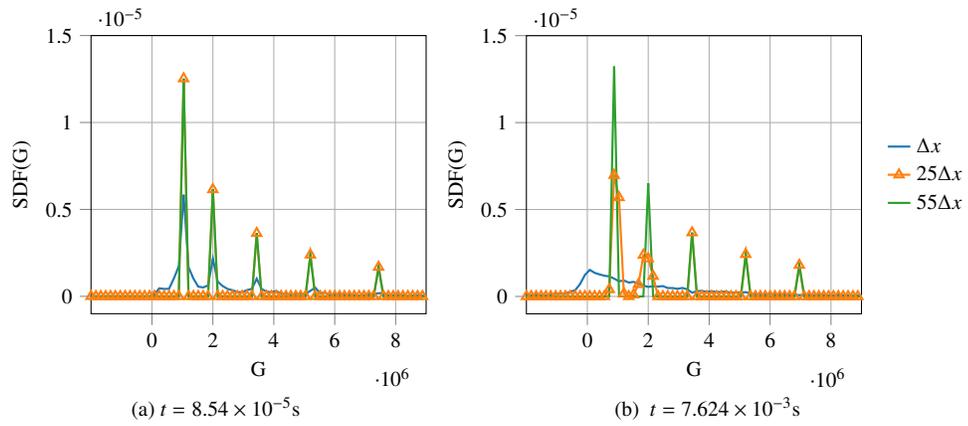
\begin{figure}
\begin{minipage}{.48\linewidth}
  \centering
  \subfloat[$t=8.54 \times 10^{-5}$s]{
\pgfplotsset{
compat=1.3,
legend image code/.code={
\draw[mark repeat=2,mark phase=2]
plot coordinates {
(0cm,0cm)
(0.15cm,0cm)        
(0.3cm,0cm)         
};%
}
}
\footnotesize

\begin{tikzpicture}

\definecolor{color1}{rgb}{1,0.498039215686275,0.0549019607843137}
\definecolor{color0}{rgb}{0.12156862745098,0.466666666666667,0.705882352941177}
\definecolor{color2}{rgb}{0.172549019607843,0.627450980392157,0.172549019607843}

\begin{axis}[
scale=0.65,
xlabel={G},
ylabel={SDF(G)},
xmin=-2000000, xmax=9000000,
ymin=-0.1e-5, ymax=1.5e-05,
xtick={0,2000000,4000000,6000000,8000000},
xticklabels={0,2,4,6,8},
tick align=outside,
tick pos=left,
xmajorgrids,
x grid style={lightgray!92.026143790849673!black},
ymajorgrids,
y grid style={lightgray!92.026143790849673!black},
]
\addplot [thick, color0]
table {%
-2320000 0
-2160000 0
-2000000 0
-1840000 0
-1680000 0
-1520000 0
-1360000 0
-1200000 0
-1040000 0
-880000 0
-720000 0
-560000 0
-400000 0
-240000 0
-80000 0
80000 1.08493068755e-07
240000 4.62309056023e-07
400000 4.24693004024e-07
560000 4.28771433419e-07
720000 1.03213551535e-06
880000 1.72981046566e-06
1040000 5.84464278742e-06
1200000 1.67866945153e-06
1360000 1.02541822963e-06
1520000 5.56204936921e-07
1680000 5.12298059645e-07
1840000 6.23723547989e-07
2000000 2.1581881585e-06
2160000 8.37854849753e-07
2320000 5.93895449792e-07
2480000 4.27979656526e-07
2640000 3.51228714796e-07
2800000 2.45396491565e-07
2960000 2.37127809163e-07
3120000 3.58425109074e-07
3280000 4.29556948218e-07
3440000 1.01510517905e-06
3600000 4.20934598658e-07
3760000 2.71578164578e-07
3920000 2.35134655841e-07
4080000 3.11053529326e-07
4240000 1.43890450166e-07
4400000 1.49337898668e-07
4560000 7.44244550359e-08
4720000 1.33795625087e-07
4880000 1.3020488847e-07
5040000 1.43356145282e-07
5200000 3.14877578455e-07
5360000 5.02768017043e-07
5520000 1.72783006417e-07
5680000 2.103664608e-07
5840000 7.53218929157e-08
6000000 9.91992019498e-08
6160000 1.31987272707e-07
6320000 8.61309457954e-08
6480000 9.32067470104e-08
6640000 7.07792493564e-08
6800000 3.33374197445e-08
6960000 6.11717556851e-08
7120000 1.05739166091e-07
7280000 1.53765195779e-07
7440000 1.75756840388e-07
7600000 1.97271409824e-07
7760000 5.67032464327e-08
7920000 6.72192742114e-08
8080000 8.46580585054e-08
8240000 6.03076098673e-08
8400000 1.56041955885e-08
8560000 3.91550096546e-08
8720000 6.51751504319e-08
8880000 3.7990519505e-08
9040000 3.25245189605e-08
9200000 4.68954509116e-08
9360000 2.79652602162e-08
9520000 5.64344924873e-09
9680000 8.34979908137e-09
9840000 2.60600331804e-08
10000000 2.14387312312e-08
};
\addplot [thick, mark=triangle, color1]
table {%
-2320000 0
-2160000 0
-2000000 0
-1840000 0
-1680000 0
-1520000 0
-1360000 0
-1200000 0
-1040000 0
-880000 0
-720000 0
-560000 0
-400000 0
-240000 0
-80000 0
80000 0
240000 0
400000 0
560000 0
720000 0
880000 0
1040000 1.25237368138e-05
1200000 0
1360000 0
1520000 0
1680000 0
1840000 0
2000000 6.14581562593e-06
2160000 0
2320000 0
2480000 0
2640000 0
2800000 0
2960000 0
3120000 0
3280000 0
3440000 3.62889605373e-06
3600000 0
3760000 0
3920000 0
4080000 0
4240000 0
4400000 0
4560000 0
4720000 0
4880000 0
5040000 0
5200000 2.3853978597e-06
5360000 0
5520000 0
5680000 0
5840000 0
6000000 0
6160000 0
6320000 0
6480000 0
6640000 0
6800000 0
6960000 0
7120000 0
7280000 0
7440000 1.68089444254e-06
7600000 0
7760000 0
7920000 0
8080000 0
8240000 0
8400000 0
8560000 0
8720000 0
8880000 0
9040000 0
9200000 0
9360000 0
9520000 0
9680000 0
9840000 0
10000000 0
};
\addplot [thick, color2]
table {%
-2320000 0
-2160000 0
-2000000 0
-1840000 0
-1680000 0
-1520000 0
-1360000 0
-1200000 0
-1040000 0
-880000 0
-720000 0
-560000 0
-400000 0
-240000 0
-80000 0
80000 0
240000 0
400000 0
560000 0
720000 0
880000 0
1040000 1.25237368138e-05
1200000 0
1360000 0
1520000 0
1680000 0
1840000 0
2000000 6.14581562593e-06
2160000 0
2320000 0
2480000 0
2640000 0
2800000 0
2960000 0
3120000 0
3280000 0
3440000 3.62889605373e-06
3600000 0
3760000 0
3920000 0
4080000 0
4240000 0
4400000 0
4560000 0
4720000 0
4880000 0
5040000 0
5200000 2.3853978597e-06
5360000 0
5520000 0
5680000 0
5840000 0
6000000 0
6160000 0
6320000 0
6480000 0
6640000 0
6800000 0
6960000 0
7120000 0
7280000 0
7440000 1.68089444254e-06
7600000 0
7760000 0
7920000 0
8080000 0
8240000 0
8400000 0
8560000 0
8720000 0
8880000 0
9040000 0
9200000 0
9360000 0
9520000 0
9680000 0
9840000 0
10000000 0
};
%
%
%

\end{axis}

\end{tikzpicture}}

\end{minipage}%
\begin{minipage}{.48\linewidth}
  \subfloat[ $t= 7.624\times 10^{-3}$s]{
\pgfplotsset{
compat=1.3,
legend image code/.code={
\draw[mark repeat=2,mark phase=2]
plot coordinates {
(0cm,0cm)
(0.15cm,0cm)        
(0.3cm,0cm)         
};%
}
}
\footnotesize

\begin{tikzpicture}

\definecolor{color1}{rgb}{1,0.498039215686275,0.0549019607843137}
\definecolor{color0}{rgb}{0.12156862745098,0.466666666666667,0.705882352941177}
\definecolor{color2}{rgb}{0.172549019607843,0.627450980392157,0.172549019607843}

\begin{axis}[
scale=0.65,
xlabel={G},
ylabel={SDF(G)},
xmin=-2000000, xmax=9000000,
ymin=-0.1e-5, ymax=1.5e-05,
xtick={0,2000000,4000000,6000000,8000000},
xticklabels={0,2,4,6,8},
tick align=outside,
tick pos=left,
xmajorgrids,
x grid style={lightgray!92.026143790849673!black},
ymajorgrids,
y grid style={lightgray!92.026143790849673!black},
legend style={at={(1.05,0.5)}, anchor=west, draw=none},
legend entries={{$\Delta x$},{$25 \Delta x$},{$55 \Delta x$}},
legend cell align={left}
]
\addplot [thick, color0]
table {%
-2320000 3.03139761682e-08
-2160000 1.43929614772e-08
-2000000 5.64035876463e-08
-1840000 6.11250301623e-08
-1680000 1.00164566243e-07
-1520000 6.44528438282e-08
-1360000 4.99213864152e-08
-1200000 9.77747963291e-08
-1040000 1.53268723597e-07
-880000 1.58185293291e-07
-720000 1.53973592678e-07
-560000 2.73326217995e-07
-400000 3.56462837971e-07
-240000 7.08653824622e-07
-80000 1.23481968172e-06
80000 1.52891278271e-06
240000 1.35788360578e-06
400000 1.26689523523e-06
560000 1.19116224612e-06
720000 1.1531267007e-06
880000 1.02558445309e-06
1040000 8.70529018551e-07
1200000 9.18853126735e-07
1360000 7.97144435443e-07
1520000 8.37898158297e-07
1680000 7.29273608632e-07
1840000 6.26728849981e-07
2000000 5.47578198357e-07
2160000 5.8744532101e-07
2320000 5.64134465648e-07
2480000 5.96412451207e-07
2640000 4.8389229721e-07
2800000 4.83121994243e-07
2960000 4.3691945259e-07
3120000 4.89254144939e-07
3280000 4.10832481798e-07
3440000 2.15101026399e-07
3600000 3.06141278043e-07
3760000 3.18607232456e-07
3920000 2.65238503526e-07
4080000 3.0041934951e-07
4240000 2.53192054782e-07
4400000 2.63779610751e-07
4560000 2.78219670058e-07
4720000 2.03583458037e-07
4880000 2.26760027302e-07
5040000 2.17739691451e-07
5200000 2.45317656327e-07
5360000 1.71762392853e-07
5520000 1.24838310073e-07
5680000 1.28370124114e-07
5840000 9.63528589458e-08
6000000 1.61052253866e-07
6160000 9.43939378938e-08
6320000 1.18612059078e-07
6480000 1.09789411281e-07
6640000 8.78216567237e-08
6800000 6.88270958714e-08
6960000 7.69181508225e-08
7120000 7.9576210491e-08
7280000 9.44208638128e-08
7440000 6.20600523107e-08
7600000 9.22048893037e-08
7760000 7.66475903091e-08
7920000 5.20946527844e-08
8080000 5.4136082615e-08
8240000 1.07322189434e-07
8400000 5.22979859477e-08
8560000 5.87741679593e-08
8720000 3.55172277593e-08
8880000 4.91609059101e-08
9040000 5.76866642986e-08
9200000 4.60384809384e-08
9360000 3.62969161451e-08
9520000 5.49892760188e-08
9680000 5.44477883219e-08
9840000 2.18744635844e-08
10000000 1.8874331723e-08
};
\addplot [thick, mark=triangle, color1]
table {%
-2320000 0
-2160000 0
-2000000 0
-1840000 0
-1680000 0
-1520000 0
-1360000 0
-1200000 0
-1040000 0
-880000 0
-720000 0
-560000 0
-400000 0
-240000 0
-80000 0
80000 0
240000 0
400000 0
560000 0
720000 4.00042009429e-07
880000 6.96983806137e-06
1040000 5.69340007043e-06
1200000 1.66674579785e-07
1360000 0
1520000 1.01600506161e-07
1680000 6.94209990328e-07
1840000 2.37854163801e-06
2000000 2.17365659173e-06
2160000 1.16724786679e-06
2320000 0
2480000 0
2640000 0
2800000 0
2960000 0
3120000 0
3280000 0
3440000 3.66971227957e-06
3600000 0
3760000 0
3920000 0
4080000 0
4240000 0
4400000 0
4560000 0
4720000 0
4880000 0
5040000 0
5200000 2.42287725152e-06
5360000 0
5520000 0
5680000 0
5840000 0
6000000 0
6160000 0
6320000 0
6480000 0
6640000 0
6800000 0
6960000 1.80087659155e-06
7120000 0
7280000 0
7440000 0
7600000 0
7760000 0
7920000 0
8080000 0
8240000 0
8400000 0
8560000 0
8720000 0
8880000 0
9040000 0
9200000 0
9360000 0
9520000 0
9680000 0
9840000 0
10000000 0
};
\addplot [thick, color2]
table {%
-2320000 0
-2160000 0
-2000000 0
-1840000 0
-1680000 0
-1520000 0
-1360000 0
-1200000 0
-1040000 0
-880000 0
-720000 0
-560000 0
-400000 0
-240000 0
-80000 0
80000 0
240000 0
400000 0
560000 0
720000 0
880000 1.3229954721e-05
1040000 0
1200000 0
1360000 0
1520000 0
1680000 0
1840000 0
2000000 6.51525659302e-06
2160000 0
2320000 0
2480000 0
2640000 0
2800000 0
2960000 0
3120000 0
3280000 0
3440000 3.66971227957e-06
3600000 0
3760000 0
3920000 0
4080000 0
4240000 0
4400000 0
4560000 0
4720000 0
4880000 0
5040000 0
5200000 2.42287725152e-06
5360000 0
5520000 0
5680000 0
5840000 0
6000000 0
6160000 0
6320000 0
6480000 0
6640000 0
6800000 0
6960000 1.80087659155e-06
7120000 0
7280000 0
7440000 0
7600000 0
7760000 0
7920000 0
8080000 0
8240000 0
8400000 0
8560000 0
8720000 0
8880000 0
9040000 0
9200000 0
9360000 0
9520000 0
9680000 0
9840000 0
10000000 0
};
%
%
%

\end{axis}

\end{tikzpicture}}

\end{minipage}
 \caption{Numerical SDF over the domain space as a function of the Gauss curvature: localized SDF (dashed-line), averaged SDF with $\averglen=25$ (triangle) and $\averglen=55$ (solid line). }
\label{fig:sdfG}
\end{figure}

\begin{figure}
\begin{minipage}{.48\linewidth}
  \centering
  \subfloat[$t=8.54 \times 10^{-5}$s]{
\pgfplotsset{
compat=1.3,
legend image code/.code={
\draw[mark repeat=2,mark phase=2]
plot coordinates {
(0cm,0cm)
(0.15cm,0cm)        
(0.3cm,0cm)         
};%
}
}
\footnotesize

\begin{tikzpicture}

\definecolor{color1}{rgb}{1,0.498039215686275,0.0549019607843137}
\definecolor{color0}{rgb}{0.12156862745098,0.466666666666667,0.705882352941177}
\definecolor{color2}{rgb}{0.172549019607843,0.627450980392157,0.172549019607843}

\begin{axis}[
scale=0.65,
xlabel={H},
ylabel={SDF(H)},
xmin=-3500, xmax=0,
ymin=-0.1e-5, ymax=1.5e-05,
xtick={-3000, -2500, -2000, -1500, -1000, -500},
xticklabels={-3.0, -2.5, -2.0, -1.5, -1.0, -0.5},
scaled x ticks=base 10:-3,
tick align=outside,
tick pos=left,
xmajorgrids,
x grid style={lightgray!92.026143790849673!black},
ymajorgrids,
y grid style={lightgray!92.026143790849673!black},
]
\addplot [thick, color0]
table {%
-3687 3.51877446332e-09
-3585 1.80527984538e-08
-3483 1.83918318225e-08
-3381 1.41448132848e-08
-3279 2.01092552887e-08
-3177 6.27196852826e-08
-3075 7.15816624372e-08
-2973 1.35403845486e-07
-2871 1.76565456093e-07
-2769 3.31741176289e-07
-2667 4.29999406306e-07
-2565 2.995058712e-07
-2463 3.09987940152e-07
-2361 3.78689203754e-07
-2259 9.91488095744e-07
-2157 4.6837653967e-07
-2055 5.39355750827e-07
-1953 6.63781304355e-07
-1851 1.50745832406e-06
-1749 7.64863810451e-07
-1647 5.95977980838e-07
-1545 1.04079539745e-06
-1443 2.93696364041e-06
-1341 1.22171100405e-06
-1239 8.16113742744e-07
-1137 1.48044249204e-06
-1035 5.94062029132e-06
-933 2.92459806636e-06
-831 1.22520852134e-06
-729 4.87165645065e-07
-627 2.52864735116e-07
-525 5.67649234546e-08
-423 0
-321 0
-219 0
-117 0
-15 0
87 0
};
\addplot [thick, mark=triangle, color1]
table {%
-3687 0
-3585 0
-3483 0
-3381 0
-3279 0
-3177 0
-3075 0
-2973 0
-2871 0
-2769 0
-2667 1.68089444254e-06
-2565 0
-2463 0
-2361 0
-2259 2.3853978597e-06
-2157 0
-2055 0
-1953 0
-1851 3.62889605373e-06
-1749 0
-1647 0
-1545 0
-1443 6.14581562593e-06
-1341 0
-1239 0
-1137 0
-1035 1.25237368138e-05
-933 0
-831 0
-729 0
-627 0
-525 0
-423 0
-321 0
-219 0
-117 0
-15 0
87 0
};
\addplot [thick, color2]
table {%
-3687 0
-3585 0
-3483 0
-3381 0
-3279 0
-3177 0
-3075 0
-2973 0
-2871 0
-2769 0
-2667 1.68089444254e-06
-2565 0
-2463 0
-2361 0
-2259 2.3853978597e-06
-2157 0
-2055 0
-1953 0
-1851 3.62889605373e-06
-1749 0
-1647 0
-1545 0
-1443 6.14581562593e-06
-1341 0
-1239 0
-1137 0
-1035 1.25237368138e-05
-933 0
-831 0
-729 0
-627 0
-525 0
-423 0
-321 0
-219 0
-117 0
-15 0
87 0
};
%
%
%

\end{axis}

\end{tikzpicture}}

\end{minipage}%
\begin{minipage}{.48\linewidth}
  \subfloat[ $t= 7.624\times 10^{-3}$s]{

\pgfplotsset{
compat=1.3,
legend image code/.code={
\draw[mark repeat=2,mark phase=2]
plot coordinates {
(0cm,0cm)
(0.15cm,0cm)        
(0.3cm,0cm)         
};%
}
}
\footnotesize

\begin{tikzpicture}

\definecolor{color1}{rgb}{1,0.498039215686275,0.0549019607843137}
\definecolor{color0}{rgb}{0.12156862745098,0.466666666666667,0.705882352941177}
\definecolor{color2}{rgb}{0.172549019607843,0.627450980392157,0.172549019607843}

\begin{axis}[
scale=0.65,
xlabel={H},
ylabel={SDF(H)},
xmin=-3500, xmax=0,
ymin=-0.1e-5, ymax=1.5e-05,
xtick={-3000, -2500, -2000, -1500, -1000, -500},
xticklabels={-3.0, -2.5, -2.0, -1.5, -1.0, -0.5},
scaled x ticks=base 10:-3,
tick align=outside,
tick pos=left,
xmajorgrids,
x grid style={lightgray!92.026143790849673!black},
ymajorgrids,
y grid style={lightgray!92.026143790849673!black},
legend style={at={(1.05,0.5)}, anchor=west, draw=none},
legend entries={{$\Delta x$},{$25 \Delta x$},{$55 \Delta x$}},
legend cell align={left}
]
\addplot [thick, color0]
table {%
-3687 3.70505697729e-08
-3585 1.01450507251e-08
-3483 1.36304404749e-08
-3381 3.69014094473e-08
-3279 6.67363323101e-08
-3177 9.50687187887e-08
-3075 1.82856678176e-07
-2973 2.48543012739e-07
-2871 3.13230715763e-07
-2769 3.06377763002e-07
-2667 3.65649718181e-07
-2565 4.42124338761e-07
-2463 4.99416257031e-07
-2361 6.31988311345e-07
-2259 7.29860797671e-07
-2157 6.6338306804e-07
-2055 9.03140447352e-07
-1953 1.00943764953e-06
-1851 1.16984826492e-06
-1749 1.21393413055e-06
-1647 1.24225060475e-06
-1545 1.17074026167e-06
-1443 1.14576885867e-06
-1341 1.25582297207e-06
-1239 1.26218201897e-06
-1137 1.3793564438e-06
-1035 1.34618379629e-06
-933 1.50398360819e-06
-831 1.25394911801e-06
-729 1.06716429466e-06
-627 8.44552078307e-07
-525 7.66262240387e-07
-423 7.64492627212e-07
-321 4.23516048556e-07
-219 5.16632192433e-07
-117 2.79873836089e-07
-15 3.47370116733e-07
87 2.1024920068e-07
};
\addplot [thick, mark=triangle, color1]
table {%
-3687 0
-3585 0
-3483 0
-3381 0
-3279 0
-3177 0
-3075 0
-2973 0
-2871 1.80087659155e-06
-2769 0
-2667 0
-2565 0
-2463 0
-2361 0
-2259 2.42287725152e-06
-2157 0
-2055 0
-1953 0
-1851 3.66971227957e-06
-1749 0
-1647 0
-1545 0
-1443 5.0237523002e-06
-1341 1.49150429282e-06
-1239 0
-1137 0
-1035 8.85010575939e-06
-933 4.37984896162e-06
-831 0
-729 0
-627 0
-525 0
-423 0
-321 0
-219 0
-117 0
-15 0
87 0
};
\addplot [thick, color2]
table {%
-3687 0
-3585 0
-3483 0
-3381 0
-3279 0
-3177 0
-3075 0
-2973 0
-2871 1.80087659155e-06
-2769 0
-2667 0
-2565 0
-2463 0
-2361 0
-2259 2.42287725152e-06
-2157 0
-2055 0
-1953 0
-1851 3.66971227957e-06
-1749 0
-1647 0
-1545 0
-1443 6.51525659302e-06
-1341 0
-1239 0
-1137 0
-1035 1.3229954721e-05
-933 0
-831 0
-729 0
-627 0
-525 0
-423 0
-321 0
-219 0
-117 0
-15 0
87 0
};
%
%
%

\end{axis}

\end{tikzpicture}}

\end{minipage}
 \caption{Numerical SDF over the domain space as a function of the mean curvature: localized SDF (dashed-line), averaged SDF with $\averglen=25$ (triangle) and $\averglen=55$ (solid line). }
\label{fig:sdfH}
\end{figure}

\begin{figure}
\begin{minipage}{.48\linewidth}
  \centering
  \subfloat[$t=8.54 \times 10^{-5}$s]{
\pgfplotsset{
compat=1.3,
legend image code/.code={
\draw[mark repeat=2,mark phase=2]
plot coordinates {
(0cm,0cm)
(0.15cm,0cm)        
(0.3cm,0cm)         
};%
}
}
\footnotesize

\begin{tikzpicture}

\definecolor{color1}{rgb}{1,0.498039215686275,0.0549019607843137}
\definecolor{color0}{rgb}{0.12156862745098,0.466666666666667,0.705882352941177}
\definecolor{color2}{rgb}{0.172549019607843,0.627450980392157,0.172549019607843}

\begin{axis}[
scale=0.65,
xlabel={G},
ylabel={N(G)},
xmin=-2000000, xmax=9000000,
ymin=-0.05, ymax=1.05,
xtick={0,2000000,4000000,6000000,8000000},
xticklabels={0,2,4,6,8},
tick align=outside,
tick pos=left,
xmajorgrids,
x grid style={lightgray!92.026143790849673!black},
ymajorgrids,
y grid style={lightgray!92.026143790849673!black},
]
\addplot [thick, color0]
table {%
-2320000 0
-2160000 0
-2000000 0
-1840000 0
-1680000 0
-1520000 0
-1360000 0
-1200000 0
-1040000 0
-880000 0
-720000 0
-560000 0
-400000 0
-240000 0
-80000 0
80000 0.000896996787405
240000 0.00876315613506
400000 0.0127283280344
560000 0.0197169426627
720000 0.0601576800373
880000 0.123912799089
1040000 0.476024220088
1200000 0.15921744726
1360000 0.110927756107
1520000 0.0673215442103
1680000 0.068210512061
1840000 0.0918751974789
2000000 0.348258159232
2160000 0.1435923266
2320000 0.109013692727
2480000 0.0843793691715
2640000 0.0736197052423
2800000 0.0545780466089
2960000 0.0556903719708
3120000 0.0891644786516
3280000 0.112505488725
3440000 0.279732159598
3600000 0.120305167536
3760000 0.0810986997059
3920000 0.0732876230822
4080000 0.100721841945
4240000 0.0485428796297
4400000 0.0522950732651
4560000 0.026954218921
4720000 0.0501669497495
4880000 0.0506598191737
5040000 0.0574379584967
5200000 0.130671881291
5360000 0.213620849202
5520000 0.0758892913376
5680000 0.0948993364247
5840000 0.0348943237337
6000000 0.047248663922
6160000 0.0649181115526
6320000 0.0432926426011
6480000 0.0483288141199
6640000 0.0372673840857
6800000 0.0180728731793
6960000 0.0340048890529
7120000 0.0598023925229
7280000 0.0893080834504
7440000 0.104002397591
7600000 0.119016999636
7760000 0.0350289762853
7920000 0.0422561335848
8080000 0.0544680988102
8240000 0.0394686261759
8400000 0.0104545028335
8560000 0.0266388774312
8720000 0.0451428071627
8880000 0.026902455143
9040000 0.0234328644199
9200000 0.03436333113
9360000 0.0208459223567
9520000 0.00428784719751
9680000 0.00641419763564
9840000 0.0205122792503
10000000 0.0170872764923 
};
\addplot [thick, mark=triangle, color1]
table {%
-2320000 0
-2160000 0
-2000000 0
-1840000 0
-1680000 0
-1520000 0
-1360000 0
-1200000 0
-1040000 0
-880000 0
-720000 0
-560000 0
-400000 0
-240000 0
-80000 0
80000 0
240000 0
400000 0
560000 0
720000 0
880000 0
1040000 1.00000000006
1200000 0
1360000 0
1520000 0
1680000 0
1840000 0
2000000 1.00000000003
2160000 0
2320000 0
2480000 0
2640000 0
2800000 0
2960000 0
3120000 0
3280000 0
3440000 1.00000000002
3600000 0
3760000 0
3920000 0
4080000 0
4240000 0
4400000 0
4560000 0
4720000 0
4880000 0
5040000 0
5200000 1.00000000001
5360000 0
5520000 0
5680000 0
5840000 0
6000000 0
6160000 0
6320000 0
6480000 0
6640000 0
6800000 0
6960000 0
7120000 0
7280000 0
7440000 1.00000000001
7600000 0
7760000 0
7920000 0
8080000 0
8240000 0
8400000 0
8560000 0
8720000 0
8880000 0
9040000 0
9200000 0
9360000 0
9520000 0
9680000 0
9840000 0
10000000 0 
};
\addplot [thick, color2]
table {%
-2320000 0
-2160000 0
-2000000 0
-1840000 0
-1680000 0
-1520000 0
-1360000 0
-1200000 0
-1040000 0
-880000 0
-720000 0
-560000 0
-400000 0
-240000 0
-80000 0
80000 0
240000 0
400000 0
560000 0
720000 0
880000 0
1040000 1.00000000006
1200000 0
1360000 0
1520000 0
1680000 0
1840000 0
2000000 1.00000000003
2160000 0
2320000 0
2480000 0
2640000 0
2800000 0
2960000 0
3120000 0
3280000 0
3440000 1.00000000002
3600000 0
3760000 0
3920000 0
4080000 0
4240000 0
4400000 0
4560000 0
4720000 0
4880000 0
5040000 0
5200000 1.00000000001
5360000 0
5520000 0
5680000 0
5840000 0
6000000 0
6160000 0
6320000 0
6480000 0
6640000 0
6800000 0
6960000 0
7120000 0
7280000 0
7440000 1.00000000001
7600000 0
7760000 0
7920000 0
8080000 0
8240000 0
8400000 0
8560000 0
8720000 0
8880000 0
9040000 0
9200000 0
9360000 0
9520000 0
9680000 0
9840000 0
10000000 0 
};
%
%
%

\end{axis}

\end{tikzpicture}}

\end{minipage}%
\begin{minipage}{.48\linewidth}
  \subfloat[ $t= 7.624times 10^{-3}$s]{

\pgfplotsset{
compat=1.3,
legend image code/.code={
\draw[mark repeat=2,mark phase=2]
plot coordinates {
(0cm,0cm)
(0.15cm,0cm)        
(0.3cm,0cm)         
};%
}
}
\footnotesize

\begin{tikzpicture}

\definecolor{color1}{rgb}{1,0.498039215686275,0.0549019607843137}
\definecolor{color0}{rgb}{0.12156862745098,0.466666666666667,0.705882352941177}
\definecolor{color2}{rgb}{0.172549019607843,0.627450980392157,0.172549019607843}

\begin{axis}[
scale=0.65,
xlabel={G},
ylabel={N(G)},
xmin=-2000000, xmax=9000000,
ymin=-0.05, ymax=1.05,
xtick={0,2000000,4000000,6000000,8000000},
xticklabels={0,2,4,6,8},
tick align=outside,
tick pos=left,
xmajorgrids,
x grid style={lightgray!92.026143790849673!black},
ymajorgrids,
y grid style={lightgray!92.026143790849673!black},
legend style={at={(1.05,0.5)}, anchor=west, draw=none},
legend entries={{$\Delta x$},{$25 \Delta x$},{$55 \Delta x$}},
legend cell align={left}
]
\addplot [thick, color0]
table {%
-2320000 -0.00550217187379
-2160000 -0.00251368018731
-2000000 -0.00901721328757
-1840000 -0.00889887719827
-1680000 -0.0132523310386
-1520000 -0.00778378676772
-1360000 -0.00545864707104
-1200000 -0.00941957965577
-1040000 -0.012715753133
-880000 -0.0111061503606
-720000 -0.00872574911047
-560000 -0.0121399741585
-400000 -0.011176967463
-240000 -0.0132134887579
-80000 -0.0070330369181
80000 0.00946411497978
240000 0.0259430090994
400000 0.0398972602224
560000 0.0535455864871
720000 0.0661588399856
880000 0.071364399
1040000 0.0717510451061
1200000 0.0869763031249
1360000 0.0863258138866
1520000 0.101925750625
1680000 0.0971104335193
1840000 0.0916972073445
2000000 0.0871964949371
2160000 0.101020937745
2320000 0.103901654389
2480000 0.117627418227
2640000 0.101547637943
2800000 0.1075496932
2960000 0.103126808447
3120000 0.121167822862
3280000 0.106799917564
3440000 0.0590198847868
3600000 0.0874871562242
3760000 0.0950745609666
3920000 0.0823677549864
4080000 0.0975936100893
4240000 0.0850543340457
4400000 0.0922863640987
4560000 0.100992717269
4720000 0.0763958851114
4880000 0.0876801605277
5040000 0.087108423109
5200000 0.10140040328
5360000 0.0734322188774
5520000 0.0548184800668
5680000 0.0578307661843
5840000 0.0446382290582
6000000 0.0767764978415
6160000 0.046162382743
6320000 0.0598006282665
6480000 0.0565046162109
6640000 0.0464932280896
6800000 0.0373021950061
6960000 0.0425362803062
7120000 0.0450088785747
7280000 0.0545589152497
7440000 0.0367913921108
7600000 0.0556573653269
7760000 0.0472067450496
7920000 0.0328198902664
8080000 0.0347939496408
8240000 0.0703572343772
8400000 0.0348589036073
8560000 0.0401744026758
8720000 0.0246243861443
8880000 0.0349308972672
9040000 0.0415762349952
9200000 0.0336588883555
9360000 0.0270250089692
9520000 0.041479500214
9680000 0.0418767379869
9840000 0.0172187059581
10000000 0.0149467249627
};
\addplot [thick, mark=triangle,  color1]
table {%
-2320000 0
-2160000 0
-2000000 0
-1840000 0
-1680000 0
-1520000 0
-1360000 0
-1200000 0
-1040000 0
-880000 0
-720000 0
-560000 0
-400000 0
-240000 0
-80000 0
80000 0
240000 0
400000 0
560000 0
720000 0.0239633051623
880000 0.502003905049
1040000 0.458811644328
1200000 0.0152211455202
1360000 0
1520000 0.0126855993176
1680000 0.0934872760469
1840000 0.351401127175
2000000 0.345429278202
2160000 0.196996719289
2320000 0
2480000 0
2640000 0
2800000 0
2960000 0
3120000 0
3280000 0
3440000 1.00000000002
3600000 0
3760000 0
3920000 0
4080000 0
4240000 0
4400000 0
4560000 0
4720000 0
4880000 0
5040000 0
5200000 1.00000000001
5360000 0
5520000 0
5680000 0
5840000 0
6000000 0
6160000 0
6320000 0
6480000 0
6640000 0
6800000 0
6960000 1.00000000001
7120000 0
7280000 0
7440000 0
7600000 0
7760000 0
7920000 0
8080000 0
8240000 0
8400000 0
8560000 0
8720000 0
8880000 0
9040000 0
9200000 0
9360000 0
9520000 0
9680000 0
9840000 0
10000000 0
};
\addplot [thick, color2]
table {%
-2320000 0
-2160000 0
-2000000 0
-1840000 0
-1680000 0
-1520000 0
-1360000 0
-1200000 0
-1040000 0
-880000 0
-720000 0
-560000 0
-400000 0
-240000 0
-80000 0
80000 0
240000 0
400000 0
560000 0
720000 0
880000 1.00000000006
1040000 0
1200000 0
1360000 0
1520000 0
1680000 0
1840000 0
2000000 1.00000000003
2160000 0
2320000 0
2480000 0
2640000 0
2800000 0
2960000 0
3120000 0
3280000 0
3440000 1.00000000002
3600000 0
3760000 0
3920000 0
4080000 0
4240000 0
4400000 0
4560000 0
4720000 0
4880000 0
5040000 0
5200000 1.00000000001
5360000 0
5520000 0
5680000 0
5840000 0
6000000 0
6160000 0
6320000 0
6480000 0
6640000 0
6800000 0
6960000 1.00000000001
7120000 0
7280000 0
7440000 0
7600000 0
7760000 0
7920000 0
8080000 0
8240000 0
8400000 0
8560000 0
8720000 0
8880000 0
9040000 0
9200000 0
9360000 0
9520000 0
9680000 0
9840000 0
10000000 0
};
%
%
%

\end{axis}

\end{tikzpicture}}
  
\end{minipage}
 \caption{Numerical GNDF over the domain space as a function of the Gauss curvature: localized GNDF (dashed-line), averaged GNDF with $\averglen=25$ (triangle) and $\averglen=55$ (solid line). }
\label{fig:ndfG}
\end{figure}

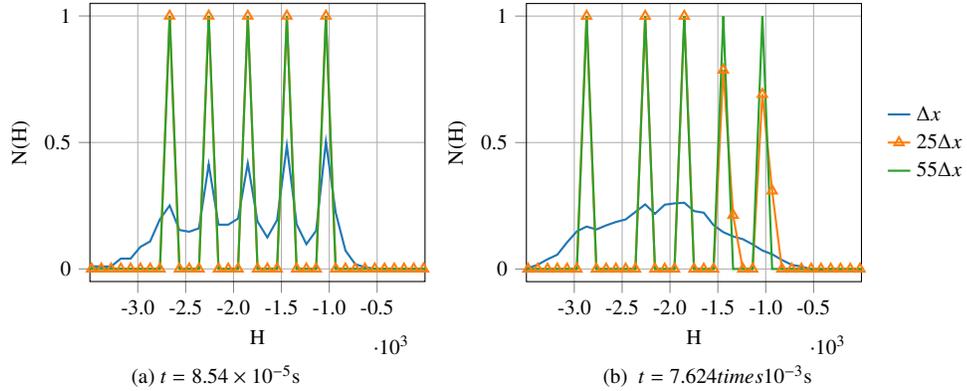
\begin{figure}
\begin{minipage}{.48\linewidth}
  \centering
  \subfloat[$t=8.54 \times 10^{-5}$s]{

\pgfplotsset{
compat=1.3,
legend image code/.code={
\draw[mark repeat=2,mark phase=2]
plot coordinates {
(0cm,0cm)
(0.15cm,0cm)        
(0.3cm,0cm)         
};%
}
}
\footnotesize

\begin{tikzpicture}

\definecolor{color1}{rgb}{1,0.498039215686275,0.0549019607843137}
\definecolor{color0}{rgb}{0.12156862745098,0.466666666666667,0.705882352941177}
\definecolor{color2}{rgb}{0.172549019607843,0.627450980392157,0.172549019607843}

\begin{axis}[
scale=0.65,
xlabel={H},
ylabel={N(H)},
xmin=-3500, xmax=0,
ymin=-0.05, ymax=1.05,
xtick={-3000, -2500, -2000, -1500, -1000, -500},
xticklabels={-3.0, -2.5, -2.0, -1.5, -1.0, -0.5},
scaled x ticks=base 10:-3,
tick align=outside,
tick pos=left,
xmajorgrids,
x grid style={lightgray!92.026143790849673!black},
ymajorgrids,
y grid style={lightgray!92.026143790849673!black},
]
\addplot [thick, color0]
table {%
-3687 0.00230310718567
-3585 0.00886754005949
-3483 0.0103206026483
-3381 0.00944621965817
-3279 0.00949984718071
-3177 0.0406645743746
-3075 0.040824298076
-2973 0.0869422781123
-2871 0.108676772701
-2769 0.196282973421
-2667 0.250617493335
-2565 0.153782106315
-2463 0.146877608258
-2361 0.160535399107
-2259 0.412916377321
-2157 0.174738190023
-2055 0.174903648167
-1953 0.198740099628
-1851 0.417297874088
-1749 0.187280877853
-1647 0.124281322972
-1545 0.193039260125
-1443 0.486992917599
-1341 0.176878023158
-1239 0.0977459467802
-1137 0.151204316321
-1035 0.502043225652
-933 0.222811882874
-831 0.0724849959193
-729 0.0187600431369
-627 0.00625206350928
-525 0.000886218482615
-423 0
-321 0
-219 0
-117 0
-15 0
87 0
};
\addplot [thick, mark=triangle, color1]
table {%
-3687 0
-3585 0
-3483 0
-3381 0
-3279 0
-3177 0
-3075 0
-2973 0
-2871 0
-2769 0
-2667 1.00000000001
-2565 0
-2463 0
-2361 0
-2259 1.00000000001
-2157 0
-2055 0
-1953 0
-1851 1.00000000002
-1749 0
-1647 0
-1545 0
-1443 1.00000000003
-1341 0
-1239 0
-1137 0
-1035 1.00000000006
-933 0
-831 0
-729 0
-627 0
-525 0
-423 0
-321 0
-219 0
-117 0
-15 0
87 0
};
\addplot [thick, color2]
table {%
-3687 0
-3585 0
-3483 0
-3381 0
-3279 0
-3177 0
-3075 0
-2973 0
-2871 0
-2769 0
-2667 1.00000000001
-2565 0
-2463 0
-2361 0
-2259 1.00000000001
-2157 0
-2055 0
-1953 0
-1851 1.00000000002
-1749 0
-1647 0
-1545 0
-1443 1.00000000003
-1341 0
-1239 0
-1137 0
-1035 1.00000000006
-933 0
-831 0
-729 0
-627 0
-525 0
-423 0
-321 0
-219 0
-117 0
-15 0
87 0
};
%
%
%

\end{axis}

\end{tikzpicture}}

\end{minipage}%
\begin{minipage}{.48\linewidth}
  \subfloat[ $t= 7.624times 10^{-3}$s]{

\pgfplotsset{
compat=1.3,
legend image code/.code={
\draw[mark repeat=2,mark phase=2]
plot coordinates {
(0cm,0cm)
(0.15cm,0cm)        
(0.3cm,0cm)         
};%
}
}
\footnotesize

\begin{tikzpicture}

\definecolor{color1}{rgb}{1,0.498039215686275,0.0549019607843137}
\definecolor{color0}{rgb}{0.12156862745098,0.466666666666667,0.705882352941177}
\definecolor{color2}{rgb}{0.172549019607843,0.627450980392157,0.172549019607843}

\begin{axis}[
scale=0.65,
xlabel={H},
ylabel={N(H)},
xmin=-3500, xmax=0,
ymin=-0.05, ymax=1.05,
xtick={-3000, -2500, -2000, -1500, -1000, -500},
xticklabels={-3.0, -2.5, -2.0, -1.5, -1.0, -0.5},
scaled x ticks=base 10:-3,
tick align=outside,
tick pos=left,
xmajorgrids,
x grid style={lightgray!92.026143790849673!black},
ymajorgrids,
y grid style={lightgray!92.026143790849673!black},
legend style={at={(1.05,0.5)}, anchor=west, draw=none},
legend entries={{$\Delta x$},{$25 \Delta x$},{$55 \Delta x$}},
legend cell align={left}
]
\addplot [thick, color0]
table {%
-3687 0.0154040154751
-3585 0.00151192070056
-3483 0.00361070932454
-3381 0.0168537338571
-3279 0.0385048963625
-3177 0.0565259859283
-3075 0.105357004149
-2973 0.148869860281
-2871 0.167793692315
-2769 0.156289995189
-2667 0.172395726678
-2565 0.18589894921
-2463 0.195895750951
-2361 0.225756112837
-2259 0.255504140851
-2157 0.218305631973
-2055 0.254305380388
-1953 0.259426528463
-1851 0.262053363736
-1749 0.229042886947
-1647 0.222568103667
-1545 0.173229730804
-1443 0.145611653829
-1341 0.128805487951
-1239 0.117961511785
-1137 0.0972839849613
-1035 0.0732285006111
-933 0.0582359724997
-831 0.0358662944371
-729 0.016395916163
-627 0.00974726794705
-525 -0.0014476992375
-423 -0.00538436039
-321 -0.00897984239305
-219 -0.00849160687697
-117 -0.00976116122744
-15 -0.00839912866228
87 -0.00561763896972
};
\addplot [thick, mark=triangle, color1]
table {%
-3687 0
-3585 0
-3483 0
-3381 0
-3279 0
-3177 0
-3075 0
-2973 0
-2871 1.00000000001
-2769 0
-2667 0
-2565 0
-2463 0
-2361 0
-2259 1.00000000001
-2157 0
-2055 0
-1953 0
-1851 1.00000000002
-1749 0
-1647 0
-1545 0
-1443 0.787173675831
-1341 0.2128263242
-1239 0
-1137 0
-1035 0.69089251829
-933 0.30910748177
-831 0
-729 0
-627 0
-525 0
-423 0
-321 0
-219 0
-117 0
-15 0
87 0
};
\addplot [thick, color2]
table {%
-3687 0
-3585 0
-3483 0
-3381 0
-3279 0
-3177 0
-3075 0
-2973 0
-2871 1.00000000001
-2769 0
-2667 0
-2565 0
-2463 0
-2361 0
-2259 1.00000000001
-2157 0
-2055 0
-1953 0
-1851 1.00000000002
-1749 0
-1647 0
-1545 0
-1443 1.00000000003
-1341 0
-1239 0
-1137 0
-1035 1.00000000006
-933 0
-831 0
-729 0
-627 0
-525 0
-423 0
-321 0
-219 0
-117 0
-15 0
87 0
};
%
%
%

\end{axis}

\end{tikzpicture}}

\end{minipage}
 \caption{Numerical GNDF over the domain space as a function of the mean curvature: localized GNDF (dashed-line), averaged GNDF with $\averglen=25$ (triangle) and $\averglen=55$ (solid line). }
\label{fig:ndfH}
\end{figure}

\FloatBarrier


\subsection{Two droplets collision}
In this section, we consider a simulation of two water droplets collision in the stretching separation regime \cite{Poo}. The two droplets are initially separated (Figure \ref{fig:collision1}) in a periodic domain, with initial velocity, 
then the collision leads to a temporary coalescence of the two droplets (Figure \ref{fig:collision2})
by forming a one unstable big droplet. Due to competition between inertial effect and surface tension force, 
the unstable droplet deforms, first into a torus shape (Figure \ref{fig:collision3})
and finally breaks up into small different droplets (Figure \ref{fig:collision4}).\\
The parameters of collision yield some important topological changes and are given in the following table \ref{tab:collision}:
\begin{table}[h]
\begin{center}
    \begin{tabular}{c| c| c| c| c}
      $D_s$ ($\mu m$) & $D_l$ ($\mu m$) & $U_c$ ($m.s^{-1}$) & $We=\frac{\rho_lU_cD_s}{\sigma}$  & $x$\\\hline
      $260$ & $400$ & 4& $60$ & $0.42$ \\
     
    \end{tabular}
\end{center}
\caption{Parameters of droplet collision. Subscript $s$ and $l$ are related respectively to the small and large droplets. $U_c$ is the relative
velocity and $x$ the dimensionless impact parameter\cite{Poo}. Classical water and air properties are used.}
\label{tab:collision} 
\end{table}

\begin{figure}
\begin{minipage}{.45\linewidth}
  \centering
  \subfloat[$t=1$ \label{fig:collision1}]{\includegraphics[width=1.1\linewidth]{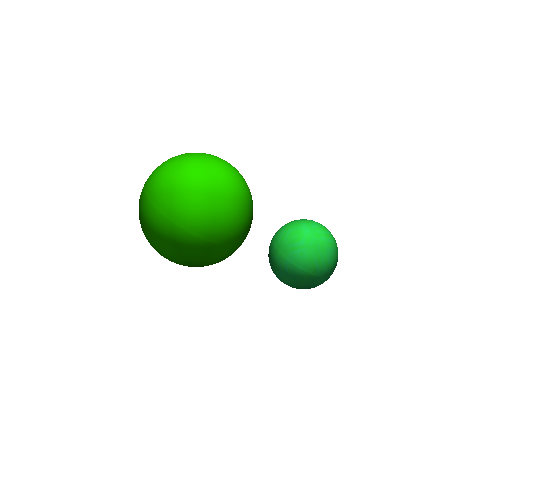}}
\end{minipage}%
\begin{minipage}{.55\linewidth}
    \subfloat[$t=7$\label{fig:collision2}]{\includegraphics[width=1.1\linewidth]{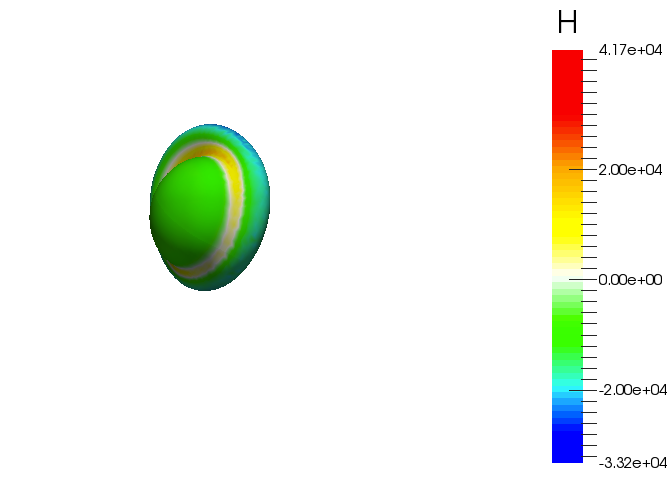}}
\end{minipage}
\begin{minipage}{.45\linewidth}
  \centering
  \subfloat[$t=14$\label{fig:collision3}]{\includegraphics[width=1.1\linewidth]{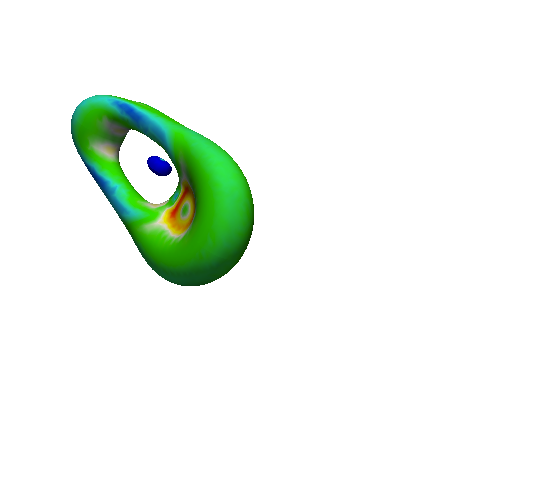}}
\end{minipage}%
\begin{minipage}{.55\linewidth}
   \subfloat[$t=60$\label{fig:collision4}]{\includegraphics[width=1.1\linewidth]{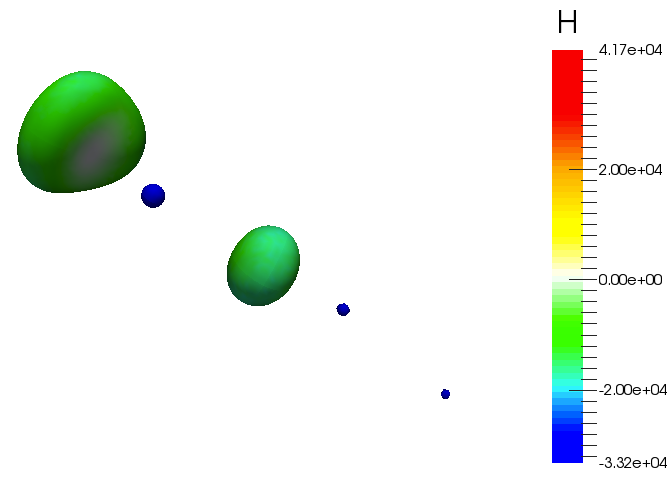}}
\end{minipage}
 \caption{Simulation of collision and stretching separation of two droplets. Surface colored according to the local mean curvature ($\Mcurv$) value.}
\label{fig:collision}
\end{figure}

In this simulation, we compute the time evolution of the volume integral over the  whole computational domain of the zero and first order moments of the localized and averaged SDFs ($\Sigma$, $\Sigma\widetilde{\Mcurv}$, $\Sigma\widetilde{\Gcurv}$). 
We can see from Figures \ref{fig:sigma_drop_collision}-\ref{fig:sigmaG_drop_collision}, 
that the moments of the averaged and localized SDF are equal at all time. This equality corresponds to the requirement \ref{item:requirement2}. The first two Figures \ref{fig:sigma_drop_collision}-\ref{fig:sigmaH_drop_collision} correspond  respectively to the total surface area and the total mean curvature. These two quantities evolve continuously and we can identify from the two curves the different droplet states:
\begin{itemize}
\item $t\in [0,2]$: the two droplets are initially separated,
\item $t\in [2,12]$: coalescence and stretching of the two droplets are characterized by a minimum total surface area at $t \in[3,4]$ just after the coalescence, then a maximum surface area at $t\in[11,12]$ just before the breakup process takes 
place, when the thin liquid film reaches its maximum. In the same time, we obtain a minimum of 
the absolute value of the total mean curvature, this can be explained by the positive values of mean curvature during coalescence. Then it increases until and during the break-up of thin liquid film.
\item $t\in[12,40]$: breakup cascade and coalescence of some small droplets take place. The total surface area decreases while the absolute value of the mean curvature increases,
\item $t\in[40,80]$: in the final state, we obtain five droplets, where we have a convergence of the total surface area and the total mean curvature toward stable values.
\end{itemize} 

The evolution of total Gauss curvature illustrated in Figure \ref{fig:sigmaG_drop_collision} is a discontinuous evolution and by dividing this quantity by $4\pi$, we obtain integer values. Indeed, the quantity $1/(4\pi)\int_{\xv}\Sigma\widetilde{\Gcurv} d\xv$ is equal to the sum of the half of the Euler characteristic of the objects included in the entire domain. This quantity will allow us to evaluate 
the topology evolution. In the case of droplets homeomorphic to spheres, we get the droplet number in the domain. This is the case here with the exception of the period of time between time  $t=10$ and $t=20$, where this quantity drops down from $1$ to $-2$ and then increases to  $0$ and $1$ before reaching again $2$ with two objects homeomorphic to a sphere. During this interval of time, the droplet formed by the coalescence form a tore shape with several holes with satellite droplets and then we come back to a regular tore with another droplet homeomorphic to a sphere, that is a total characteristic of $1$, before the tore closes, that is a characteristic of $2$, and then breaks into a total of three droplets at time $20$.

Thus, the proposed approach not only can lead to a statistics of objects through spatial averaging, when the whole set of objects are homeomorphic to spheres, but also provides some key informations about the topology of the interface.

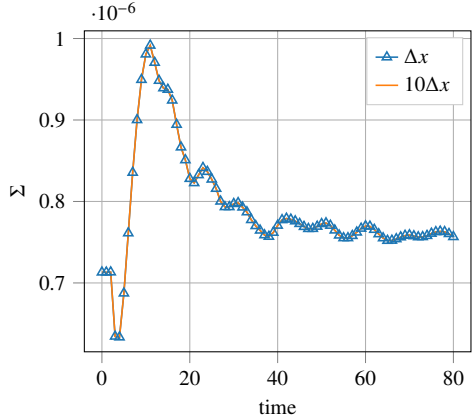
\begin{figure}
\centering{
\pgfplotsset{
compat=1.3,
legend image code/.code={
\draw[mark repeat=2,mark phase=2]
plot coordinates {
(0cm,0cm)
(0.15cm,0cm)        
(0.3cm,0cm)         
};%
}
}
\footnotesize

\begin{tikzpicture}

\definecolor{color1}{rgb}{1,0.498039215686275,0.0549019607843137}
\definecolor{color0}{rgb}{0.12156862745098,0.466666666666667,0.705882352941177}

\begin{axis}[
scale=0.75,
xlabel={time},
ylabel={$\Sigma$},
xmin=-3.99999999999235, xmax=83.9999999998393,
ymin=6.154973717977e-07, ymax=0.0000010094249215783,
tick align=outside,
tick pos=left,
xmajorgrids,
x grid style={lightgray!92.026143790849673!black},
ymajorgrids,
y grid style={lightgray!92.026143790849673!black},
legend entries={{$\Delta x$},{$10 \Delta x$}},
legend style={draw=white!80.0!black},
legend cell align={left}
]
\addplot [semithick, mark=triangle, color0]
table {%
0 7.1309907610264e-07
1 7.13015711529476e-07
2 7.13220817735677e-07
3 6.34382378145791e-07
4 6.33403169514836e-07
5 6.8734182386034e-07
6 7.61163595882156e-07
7 8.3547056120139e-07
8 9.00091704843702e-07
9 9.49447506609334e-07
10 9.80698814956157e-07
11 9.91519123860871e-07
12 9.70469298611629e-07
13 9.48339838385368e-07
14 9.3917780730405e-07
15 9.37334942158495e-07
16 9.24136955322708e-07
17 8.94338925750813e-07
18 8.6658584822515e-07
19 8.50856014724845e-07
20 8.27966207920211e-07
21 8.22739872202501e-07
22 8.32510773167108e-07
23 8.41335715064933e-07
24 8.36557986645946e-07
25 8.27140158797283e-07
26 8.15779628373981e-07
27 8.001448677149e-07
28 7.93265167755149e-07
29 7.93263094548874e-07
30 7.97146865862727e-07
31 7.98259568128301e-07
32 7.92849419177132e-07
33 7.86964564820488e-07
34 7.77434423031357e-07
35 7.7003413352216e-07
36 7.64554875523125e-07
37 7.58776411143742e-07
38 7.56903923864033e-07
39 7.61827263122192e-07
40 7.70529404967283e-07
41 7.77324771859658e-07
42 7.79093576109796e-07
43 7.77961842201853e-07
44 7.75865025947601e-07
45 7.72646440406219e-07
46 7.6900256002601e-07
47 7.66474873706379e-07
48 7.66571266890407e-07
49 7.69277400783262e-07
50 7.7247852667954e-07
51 7.73474726533e-07
52 7.70473961903981e-07
53 7.64648934407915e-07
54 7.5848438094228e-07
55 7.55176892830426e-07
56 7.55165765980093e-07
57 7.57908085744101e-07
58 7.62141420377305e-07
59 7.66852822782554e-07
60 7.69789039752357e-07
61 7.69372323983182e-07
62 7.65426237754377e-07
63 7.6003761923514e-07
64 7.55079962759117e-07
65 7.51927500993315e-07
66 7.52103674677743e-07
67 7.53930590341399e-07
68 7.5621634367505e-07
69 7.58092198610007e-07
70 7.58793604260218e-07
71 7.57579675279026e-07
72 7.56299967437021e-07
73 7.56494560128657e-07
74 7.57930387375315e-07
75 7.60671163274442e-07
76 7.62619353340674e-07
77 7.63205418444338e-07
78 7.62728405826703e-07
79 7.60281540828354e-07
80 7.56540395497416e-07
};
\addplot [semithick, color1]
table {%
0 7.1309907610264e-07
1 7.13015711529476e-07
2 7.13220817735677e-07
3 6.34382378145791e-07
4 6.33403169514836e-07
5 6.8734182386034e-07
6 7.61163595882156e-07
7 8.35470561201389e-07
8 9.00091704843702e-07
9 9.49447506609334e-07
10 9.80698814956156e-07
11 9.91519123860871e-07
12 9.70469298611629e-07
13 9.48339838385368e-07
14 9.3917780730405e-07
15 9.37334942158495e-07
16 9.24136955322707e-07
17 8.94338925750813e-07
18 8.6658584822515e-07
19 8.50856014724845e-07
20 8.27966207920211e-07
21 8.22739872202501e-07
22 8.32510773167108e-07
23 8.41335715064933e-07
24 8.36557986645946e-07
25 8.27140158797283e-07
26 8.15779628373981e-07
27 8.001448677149e-07
28 7.93265167755149e-07
29 7.93263094548874e-07
30 7.97146865862728e-07
31 7.98259568128301e-07
32 7.92849419177133e-07
33 7.86964564820488e-07
34 7.77434423031356e-07
35 7.7003413352216e-07
36 7.64554875523125e-07
37 7.58776411143742e-07
38 7.56903923864033e-07
39 7.61827263122192e-07
40 7.70529404967283e-07
41 7.77324771859658e-07
42 7.79093576109796e-07
43 7.77961842201853e-07
44 7.75865025947601e-07
45 7.72646440406219e-07
46 7.6900256002601e-07
47 7.66474873706379e-07
48 7.66571266890407e-07
49 7.69277400783262e-07
50 7.7247852667954e-07
51 7.73474726533e-07
52 7.7047396190398e-07
53 7.64648934407915e-07
54 7.5848438094228e-07
55 7.55176892830426e-07
56 7.55165765980093e-07
57 7.57908085744101e-07
58 7.62141420377305e-07
59 7.66852822782554e-07
60 7.69789039752357e-07
61 7.69372323983182e-07
62 7.65426237754377e-07
63 7.6003761923514e-07
64 7.55079962759117e-07
65 7.51927500993315e-07
66 7.52103674677743e-07
67 7.53930590341399e-07
68 7.5621634367505e-07
69 7.58092198610007e-07
70 7.58793604260218e-07
71 7.57579675279026e-07
72 7.56299967437021e-07
73 7.56494560128657e-07
74 7.57930387375315e-07
75 7.60671163274442e-07
76 7.62619353340674e-07
77 7.63205418444338e-07
78 7.62728405826703e-07
79 7.60281540828354e-07
80 7.56540395497416e-07
};
%
%
%

\end{axis}

\end{tikzpicture}}
 \caption{Time evolution of the total surface area $\int_{\xv}\Sigma(\xv)d\xv$: without averaging (dashed-line),  with scale average $\averglen=20$ (solid line).}
\label{fig:sigma_drop_collision}
\end{figure}

\begin{figure}
  \centering{
\pgfplotsset{
compat=1.3,
legend image code/.code={
\draw[mark repeat=2,mark phase=2]
plot coordinates {
(0cm,0cm)
(0.15cm,0cm)        
(0.3cm,0cm)         
};%
}
}
\footnotesize

\begin{tikzpicture}

\definecolor{color1}{rgb}{1,0.498039215686275,0.0549019607843137}
\definecolor{color0}{rgb}{0.12156862745098,0.466666666666667,0.705882352941177}

\begin{axis}[
scale=0.75,
xlabel={time},
ylabel={$\Sigma \widetilde{H}$},
xmin=-3.99999999999235, xmax=83.9999999998393,
ymin=-0.005883838465291, ymax=-0.002710189299109,
tick align=outside,
tick pos=left,
xmajorgrids,
x grid style={lightgray!92.026143790849673!black},
ymajorgrids,
y grid style={lightgray!92.026143790849673!black},
legend entries={{$\Delta x$},{$10 \Delta x$}},
legend style={draw=white!80.0!black},
legend cell align={left}
]
\addplot [semithick, mark=triangle, color0]
table {%
0 -0.00413725463497524
1 -0.00413692715909143
2 -0.00413713141461042
3 -0.00298059432591644
4 -0.00290209194451454
5 -0.00303930595261828
6 -0.00321555564242821
7 -0.00338546115124961
8 -0.00353478054512825
9 -0.0036525303382757
10 -0.00373798163518843
11 -0.00465573272993722
12 -0.00573958168500643
13 -0.00560375700085097
14 -0.00559352640558957
15 -0.00561005320690402
16 -0.00568658074718677
17 -0.00565084431871861
18 -0.00550006880284594
19 -0.00525613338666216
20 -0.00489155897402369
21 -0.00489482697786434
22 -0.00499374515799938
23 -0.00512552174626631
24 -0.00525577336596924
25 -0.00538013466137299
26 -0.00553623559733379
27 -0.00566131602577821
28 -0.00555071784584405
29 -0.00552862455817841
30 -0.00561251673730764
31 -0.00568567044630384
32 -0.00557659094535688
33 -0.00551685647669144
34 -0.00534291421593306
35 -0.00535178698788292
36 -0.00527436200973769
37 -0.00528089128833192
38 -0.00524743251756829
39 -0.00523499455679645
40 -0.00524356994233723
41 -0.00525917490729191
42 -0.00526488759917625
43 -0.00526942430761962
44 -0.00526651102144302
45 -0.00524844327686427
46 -0.00524425451488483
47 -0.00524111478110499
48 -0.0052443540876927
49 -0.00524395381328095
50 -0.00523523407861655
51 -0.00524267872072626
52 -0.0052328532275545
53 -0.00521330857897532
54 -0.00517807078108042
55 -0.00517125300903863
56 -0.00517440926275947
57 -0.00518425195858189
58 -0.00518493617785803
59 -0.00520421772394765
60 -0.00521497339236137
61 -0.00521265311953842
62 -0.00519359210473521
63 -0.00517339034380269
64 -0.00516496055891437
65 -0.0051397931331906
66 -0.00514787392019448
67 -0.00515444341090475
68 -0.00514921634944938
69 -0.00516328707834648
70 -0.0051780124234522
71 -0.00516468820243148
72 -0.00516379183610524
73 -0.00517294426614031
74 -0.00517283093045611
75 -0.00519792665213727
76 -0.00520228146700497
77 -0.00518991871992201
78 -0.00519480854403367
79 -0.00517337059197358
80 -0.0051479937097362
};
\addplot [semithick, color1]
table {%
0 -0.00413725463497525
1 -0.00413692715909143
2 -0.00413713141461042
3 -0.00298059432591644
4 -0.00290209194451454
5 -0.00303930595261828
6 -0.00321555564242821
7 -0.00338546115124961
8 -0.00353478054512825
9 -0.0036525303382757
10 -0.00373798163518843
11 -0.00465573272993722
12 -0.00573958168500643
13 -0.00560375654954831
14 -0.00559352640558957
15 -0.00561005320690402
16 -0.00568658074718677
17 -0.00565084431871861
18 -0.00550006880284594
19 -0.00525613338666216
20 -0.00489155897402369
21 -0.00489482697786434
22 -0.00499374515799938
23 -0.00512552174626631
24 -0.00525577336596924
25 -0.00538013466137299
26 -0.00553623559733379
27 -0.00566131602577821
28 -0.00555071784584405
29 -0.00552862455817841
30 -0.00561251673730764
31 -0.00568567092096187
32 -0.00557659094535688
33 -0.00551685647669144
34 -0.00534291421593306
35 -0.00535178698788292
36 -0.00527436200973769
37 -0.00528089128833192
38 -0.00524743334526083
39 -0.00523499455679645
40 -0.00524356994233723
41 -0.00525917490729191
42 -0.00526488759917625
43 -0.00526942430761962
44 -0.00526651102144302
45 -0.00524844327686427
46 -0.00524425451488483
47 -0.00524111478110499
48 -0.0052443540876927
49 -0.00524395381328095
50 -0.00523523407861655
51 -0.00524267872072626
52 -0.0052328532275545
53 -0.00521330857897532
54 -0.00517807078108042
55 -0.00517125300903863
56 -0.00517440926275947
57 -0.00518425195858189
58 -0.00518493617785803
59 -0.00520421772394765
60 -0.00521497339236137
61 -0.00521265311953842
62 -0.00519359210473522
63 -0.00517339034380269
64 -0.00516496055891437
65 -0.0051397931331906
66 -0.00514787392019448
67 -0.00515444341090475
68 -0.00514921634944937
69 -0.00516328707834648
70 -0.0051780124234522
71 -0.00516468820243148
72 -0.00516379183610524
73 -0.00517294426614031
74 -0.00517283093045611
75 -0.00519792665213727
76 -0.00520228146700497
77 -0.00518991871992201
78 -0.00519480854403367
79 -0.00517337059197358
80 -0.0051479937097362
};
%
%
%

\end{axis}

\end{tikzpicture}}
 \caption{Time evolution of the total mean curvature $\int_{\xv}\Sigma\widetilde{\Mcurv}(\xv)d\xv$: without averaging (dashed-line),  with scale average $\averglen=20$ (solid line).}
\label{fig:sigmaH_drop_collision}
\end{figure}

\begin{figure}
  \centering{

\pgfplotsset{
compat=1.3,
legend image code/.code={
\draw[mark repeat=2,mark phase=2]
plot coordinates {
(0cm,0cm)
(0.15cm,0cm)        
(0.3cm,0cm)         
};%
}
}
\footnotesize

\begin{tikzpicture}

\definecolor{color1}{rgb}{1,0.498039215686275,0.0549019607843137}
\definecolor{color0}{rgb}{0.12156862745098,0.466666666666667,0.705882352941177}

\begin{axis}[
scale=0.75,
xlabel={time},
ylabel={$\Sigma \widetilde{G}/(4\pi)$},
xmin=-3.99999999999235, xmax=83.9999999998393,
ymin=-2.450000070814, ymax=7.450001491494,
tick align=outside,
tick pos=left,
xmajorgrids,
x grid style={lightgray!92.026143790849673!black},
ymajorgrids,
y grid style={lightgray!92.026143790849673!black},
legend entries={{$\Delta x$},{$10 \Delta x$}},
legend style={draw=white!80.0!black},
legend cell align={left}
]
\addplot [semithick, mark=triangle, color0]
table {%
0 2.00000000014
0.999999999999026 2.00000000014
2.00000000000181 2.00000000014
3.00000000000083 1.00000000013
4.00000000001113 1.00000000013
5.00000000001391 1.00000000014
6.00000000001669 1.00000000015
6.99999999998192 1.00000000017
7.9999999999847 1.00000000018
8.99999999998748 1.00000000019
9.99999999999026 1.0000000002
10.999999999993 -1.9999999998
11.9999999999958 1.98114913455e-10
12.9999999999986 2.00000000019
14.0000000000014 1.00000000019
15.0000000000042 1.00000000019
16.000000000007 1.00000000019
17.0000000000097 1.00000000018
18.0000000000125 1.00000000017
19.0000000000153 1.00000025307
20.0000000000181 2.00025904217
20.9999999999833 2.0000001686
21.9999999999861 2.00000001737
22.9999999999889 1.99999996486
23.9999999999917 2.00000027903
24.9999999999944 1.99999983615
25.9999999999972 1.99999980334
27 5.00000031229
28.0000000000028 4.99999866607
29.0000000000056 4.00000000016
30.0000000000083 4.00000000016
31.0000000000111 5.95932497899
32.0000000000139 7.00000142048
33.0000000000167 6.00000252432
33.9999999999819 5.00000047536
34.9999999999847 6.00000082423
35.9999999999875 4.99999959054
36.9999999999903 6.00000000792
38.0000000001808 4.95563458917
38.9999999999583 5.00000142689
40.0000000001113 4.99999785424
40.9999999998887 5.00000036873
42.0000000000417 5.00000000016
42.9999999998192 5.00000000016
43.9999999999722 5.00000000016
45.0000000001252 5.00000000016
45.9999999999026 5.00000000016
47.0000000000556 5.00000000016
47.9999999998331 5.00000000016
48.9999999999861 5.00000000015
50.0000000001391 5.00000000015
50.9999999999165 5.00000000015
52.0000000000695 5.00000000015
52.999999999847 5.00000000015
54 5.00000000015
55.000000000153 5.00000000015
55.9999999999304 5.00000000015
57.0000000000835 5.00000000015
57.9999999998609 5.00000000015
59.0000000000139 5.00000000016
60.0000000001669 5.00000000016
60.9999999999444 5.00000000016
62.0000000000974 5.00000000016
62.9999999998748 4.99999995446
64.0000000000278 4.99999969329
65.0000000001808 5.00000000015
65.9999999999583 5.00000000015
67.0000000001113 5.00000097615
67.9999999998887 4.99999961392
69.0000000000417 5.00000032375
69.9999999998192 5.00000091432
70.9999999999722 5.00000030582
72.0000000001252 5.00000041696
72.9999999999026 4.99999980986
74.0000000000556 5.00000167208
74.9999999998331 4.99999992265
75.9999999999861 5.00000006328
77.0000000001391 5.00000078325
77.9999999999165 4.9999985901
79.0000000000695 4.99999967447
79.999999999847 4.9999997342
};
\addplot [semithick, color1]
table {%
0 2.00000000014
0.999999999999026 2.00000000014
2.00000000000181 2.00000000014
3.00000000000083 1.00000000013
4.00000000001113 1.00000000013
5.00000000001391 1.00000000014
6.00000000001669 1.00000000015
6.99999999998192 1.00000000017
7.9999999999847 1.00000000018
8.99999999998748 1.00000000019
9.99999999999026 1.0000000002
10.999999999993 -1.9999999998
11.9999999999958 1.98115196171e-10
12.9999999999986 1.99999914432
14.0000000000014 1.00000000019
15.0000000000042 1.00000000019
16.000000000007 1.00000000019
17.0000000000097 1.00000000018
18.0000000000125 1.00000000017
19.0000000000153 1.00000025307
20.0000000000181 2.00025904217
20.9999999999833 2.0000001686
21.9999999999861 2.00000001737
22.9999999999889 1.99999996486
23.9999999999917 2.00000027903
24.9999999999944 1.99999983615
25.9999999999972 1.99999980334
27 5.00000031229
28.0000000000028 4.99999866607
29.0000000000056 4.00000000016
30.0000000000083 4.00000000016
31.0000000000111 5.9593233789
32.0000000000139 7.00000142048
33.0000000000167 6.00000252432
33.9999999999819 5.00000047536
34.9999999999847 6.00000082423
35.9999999999875 4.99999959054
36.9999999999903 6.00000000792
38.0000000001808 4.95563308577
38.9999999999583 5.00000142689
40.0000000001113 4.99999785424
40.9999999998887 5.00000036873
42.0000000000417 5.00000000016
42.9999999998192 5.00000000016
43.9999999999722 5.00000000016
45.0000000001252 5.00000000016
45.9999999999026 5.00000000016
47.0000000000556 5.00000000016
47.9999999998331 5.00000000016
48.9999999999861 5.00000000015
50.0000000001391 5.00000000015
50.9999999999165 5.00000000015
52.0000000000695 5.00000000015
52.999999999847 5.00000000015
54 5.00000000015
55.000000000153 5.00000000015
55.9999999999304 5.00000000015
57.0000000000835 5.00000000015
57.9999999998609 5.00000000015
59.0000000000139 5.00000000016
60.0000000001669 5.00000000016
60.9999999999444 5.00000000016
62.0000000000974 5.00000000016
62.9999999998748 4.99999995446
64.0000000000278 4.99999969329
65.0000000001808 5.00000000015
65.9999999999583 5.00000000015
67.0000000001113 5.00000097615
67.9999999998887 4.99999961392
69.0000000000417 5.00000032375
69.9999999998192 5.00000091432
70.9999999999722 5.00000030582
72.0000000001252 5.00000041696
72.9999999999026 4.99999980986
74.0000000000556 5.00000167208
74.9999999998331 4.99999992265
75.9999999999861 5.00000006328
77.0000000001391 5.00000078325
77.9999999999165 4.9999985901
79.0000000000695 4.99999967447
79.999999999847 4.9999997342
};
%
%
%

\end{axis}

\end{tikzpicture}}
 \caption{Time evolution of the total gauss curvature $\int_{\xv}\Sigma\widetilde{\Gcurv}(\xv)d\xv$: without averaging (dashed-line),  with scale average $\averglen=20$ (solid line).}
\label{fig:sigmaG_drop_collision}
\end{figure}

\FloatBarrier

\section{Conclusion}

In this paper, we propose a new statistical approach of the gas-liquid interface dedicated to 
two-phase flow modeling based on geometrical interface variables. 

Relying on a statistical description of the interface between the two phases, 
our first  contribution has been to propose a transport equation for a surface density function valid for both regimes: disperse and separated phases. 
The related phase space has been identified: it includes the curvatures 
and the velocity of the interface. An original link between
such a surface density function formalism and its application to obtain statistics at the level of objects, such as the 
number density function (NDF) for sprays of droplets, has been proposed 
by introducing the discrete SDF (DSDF). The DSDF is only 
valid for disperse phase and supposes that we can isolate droplets/bubbles in small volumes. However, we can then describe droplets and bubbles with arbitrary shapes as long as they are homeomorphic to a sphere and this provides us with an interesting theoretical framework, which naturally degenerates to previous contributions when the objects are spheres \cite{Essadki-SIAM}.

In a second main contribution, we have  defined the spatially averaged SDF (ASDF), with an averaging kernel bounded to a 
small region around the interface and which preserves some information of the standard SDF 
given by the first moments of this distribution. 
We have shown that the ASDF degenerates to the DSDF, when 
the liquid or gas phase is disperse. In this case, the link with the NDF can be identified 
straightforwardly and explicitly. 
We have then shown how we can derive 
reduced-order models from an equation on the SDF using the moments of these 
distributions. However, we still need a closure modeling in situations of complex topological changes, 
while we have illustrated that in some simplified situations (spherical droplets), a
closure model can be derived \cite{Essadki-SIAM,OGST}. 
This model has already  been used to simulate evaporating polydisperse sprays with variables related to interface geometry. 

Finally, to illustrate and assess the theoretical part, we have 
designed a new algorithm to extract the curvatures and the two different distributions NDF and SDF from 
a level-set field obtained with the DNS ARCHER code. 
This new algorithm preserves some geometrical and topological 
information, which essentially allows us to compute a NDF from an ASDF. This new tool can serve, in a
future work, to post-process more representative two-phase flows DNS simulations \cite{romain2017}, with and without topological changes, and 
thus propose closure modeling of the curvatures evolutions and possibly closure of the distribution
from its moments.  
\section*{Acknowledgments}
This research was supported by a Ph.D. grant for M. Essadki from IFP Energies nouvelles. The support of 
Ecole doctorale de math\'ematiques Hadamard (EDMH) and EM2C Laboratory are gratefully acknowledged.
This research was also supported by a  Ph.D. grant for F. Drui from  CEA/DGA at Maison de la Simulation and EM2C Laboratory. 
The support of ANR project "Mobilit\'e durable et syst\`emes de transport (DS0603) - 2014" 
entitled "SUB SUPER JET : Mod\'elisation de l'atomisation d'un jet liquide avec transition sous- et super-critique"
 (ANR-14-CE22-0014 - PI T. Schmitt) is also gratefully acknowledged. We would 
like to express our special thanks to our colleagues from CORIA laboratory: F.X. Demoulin, S. Puggelli, R. Canu and 
J. R\'eveillon for the numerous interesting and helpful discussions. 
The very helpful comments of S. Kokh and R. Di Battista are eventually gratefully acknowledged.

\begin{appendix}

\section{Evolution equation for the area density measure}
\label{app:eq-deltaI}

In this section, let us derive the evolution equation for the area density measure $\delta_I$. First, let us recall that $\delta_I$ is defined
as a distribution function by:
\[
	\delta_I (t, \xv) = ||\gradient{\xv}{g(t,\xv)}|| \, \delta(g(t,\xv)).
\]
The Lagrangian derivative $\dot{\bullet}$ of $\delta_I$ reads:
\begin{align*}
	\dot{\delta}_I (t, \xv) 
		& = \partial_t\delta_I + \Vi \cdot \gradient{\xv}{\delta_I} \\
		& = \left[ 
			\partial_t \left( ||\gradient{\xv}{g(t,\xv)}|| \right)  
			+ \Vi \cdot \gradient{\xv}{\left( ||\gradient{\xv}{g(t,\xv)}|| \right)}
			\right]
			\, \delta(g(t,\xv)) \\
		& \quad + \left[
			\partial_t \left(  \delta(g(t,\xv)) \right)
			+ \Vi \cdot \gradient{\xv}{ \left(  \delta(g(t,\xv)) \right)}
		 \right] \,  ||\gradient{\xv}{g(t,\xv)}|| \\
\end{align*}

The second term in the right-hand side of the previous equality is null, because of the equation on $g(t, \xv)$
which is:
\[
	\partial_t g + \Vi \cdot \gradient{\xv}{g} = 0.
\]

Let us now compute the first term. On a one hand, we have:
\begin{align*}
	\partial_t \left( ||\gradient{\xv}{g(t,\xv)}|| \right) 
		& = \frac{\gradient{\xv}{g(t,\xv)} }{ ||\gradient{\xv}{g(t,\xv)}|| } \cdot \partial_t \left( \gradient{\xv}{g(t,\xv)} \right) \\
		& = \normal \cdot \gradient{\xv}{\left(\partial_t g(t,\xv)\right)} \\
		& = - \normal \cdot \gradient{\xv}{  \left(\Vi \cdot \gradient{\xv}{g(t, \xv)} \right)} \\
		& =  - \normal \cdot \gradient{\xv}{  \left(\vi ||\gradient{\xv}{g(t, \xv)}|| \right)}.
\end{align*}
On the other hand, one computes:
\begin{align*}
	 \Vi \cdot \gradient{\xv}{\left( ||\gradient{\xv}{g(t,\xv)}|| \right)}
		& = \vi \normal \cdot \gradient{\xv}{\left( ||\gradient{\xv}{g(t,\xv)}|| \right)} \\
		& = \normal \cdot \gradient{\xv}{\left( \vi ||\gradient{\xv}{g(t,\xv)}|| \right)} -  ||\gradient{\xv}{g(t,\xv)}||\normal \cdot \gradient{\xv}{\left( \vi  \right)}
\end{align*}
And finally, one can find that:
\begin{align}
	\nonumber \dot{\delta}_I (t, \xv) 
		& = - ||\gradient{\xv}{g(t,\xv)}||\normal \cdot \gradient{\xv}{\left( \vi  \right)}
			\, \delta(g(t,\xv)) \\
	\label{eq:deltaI-time-equation}	& = - \delta_I \, \normal \cdot \gradient{\xv}{\left( \vi  \right)}.
\end{align}

Equation \eqref{eq:deltaI-time-equation} can be further developed, by noting that:
\begin{align*}
	 - \delta_I \, \normal \cdot \gradient{\xv}{\left( \vi  \right)} 
	 	& = - \delta_I \, \left[ \nabla_{\xv} \cdot \Vi - \vi \nabla_{\xv} \cdot \normal \right]
\end{align*}
and
\[
	\nabla_{\xv} \cdot \normal = 2H.
\]
Finally, one has:
\begin{equation}
\label{eq:deltaI-time-equation2}
	\dot{\delta}_I (t, \xv)  =  - \delta_I \, \nabla_{\xv} \cdot \Vi + 2H \delta_I \vi.
\end{equation}


\end{appendix}

\section*{References}
\bibliography{biblio}

\end{document}